\algrenewcommand\algorithmicrequire{\textbf{\quad Input:}}
\algrenewcommand\algorithmicensure{\textbf{\quad Output:}}
\newenvironment{proof sketch}[1]{\noindent {\emph{Proof sketch of #1:}}}{\hfill \qed}
\newtheorem{theorem}{Theorem}
\newtheorem{proposition}[theorem]{Proposition}
\newtheorem{lemma}[theorem]{Lemma}
\newtheorem{definition}{Definition}
\newtheorem{coro}[theorem]{Corollary}
\newtheorem{rem}{Remark}
\newcommand{\eqdf}{\triangleq}
\newcommand{\eps}{\varepsilon}
\newcommand{\poly}{{\rm poly}}
\newcommand{\sol}{\textit{sol}}
\newcommand{\alg}{\textsc{alg}}
\newcommand{\degree}{\text{\textit{deg}}}
\newcommand{\dist}{\text{\textit{dist}}}
\newcommand{\gain}{\text{\textit{gain}}}
\newcommand{\dlocal}{\text{\textsc{DistLocal}}}
\newcommand{\clocal}{\text{\textsc{CentLocal}}}
\newcommand{\orad}{\text{\textsc{o-rad}}}
\newcommand{\obr}{\text{\textsc{obr}}}
\newcommand{\rad}{\text{\textit{rad}}}
\newcommand{\reach}{\text{\textit{reach}}}
\newcommand{\Reach}{\text{\textit{Reach}}}
\newcommand{\mis}{\text{\textsc{mis}}}
\newcommand{\igmis}{\text{\textsc{ig-mis}}}
\newcommand{\lmis}{\text{\textsc{l-mis}}}
\newcommand{\ao}{\text{\textsc{ao}}}
\newcommand{\mm}{\text{\textsc{mm}}}
\newcommand{\mcm}{\text{\textsc{mcm}}}
\newcommand{\mum}{\text{\textsc{mcm}}}
\newcommand{\mwm}{\text{\textsc{mwm}}}
\newcommand{\wmin}{w_{\min}}
\newcommand{\pcolor}{\text{$\Delta^2$-\textsc{Color}}}
\newcommand{\dcolor}{\text{$(\Delta+1)$\textsc{-Color}}}
\newcommand{\ccolor}{\text{$c$\textsc{-Color}}}
\newcommand{\probe}{\mathsf{probe}}
\newcommand{\NN}{{\mathbb{N}}}
\newcommand{\oracle}{{\mathcal O}}
\newcommand{\proca}{{\mathcal A}}
\newcommand{\aug}{{\textsc{aug}}}
\def\mnFONT{\footnotesize}
  \newcommand{\mnote}[1]{\marginpar{\mnFONT\begin{minipage}[t]{0.75in}
        \raggedright\hrule {\sf M:} {\sl{#1}}\end{minipage}}}
\newcommand{\red}{\color{black}}
\begin{document}
\title{
Best of Two Local Models:\\
Centralized local and Distributed local Algorithms%
}

\author{ %
Guy Even\thanks{School of Electrical Engineering, Tel-Aviv
Univ., Tel-Aviv 69978, Israel.
\protect\url{{guy,medinamo,danar}@eng.tau.ac.il}.}
\and Moti Medina$^*$\thanks{M.M was partially funded
by the Israeli Ministry of Science and Technology.}
\and Dana Ron$^*$\thanks{Research supported by the Israel Science Foundation grant number 671/13.}
}
\date{}

\maketitle
\begin{abstract}
  We consider two models of computation: centralized
  local algorithms and local distributed algorithms.
  Algorithms in one model are adapted to the other model
  to obtain improved algorithms.

  Distributed vertex coloring is employed to design
  improved centralized local algorithms for: maximal
  independent set, maximal matching, and an approximation
  scheme for maximum (weighted) matching over bounded
  degree graphs. The improvement is threefold: the
  algorithms are deterministic, stateless, and the number
  of probes grows polynomially in $\log^* n$, where $n$ is the number
  of vertices of the input graph.
\begin{comment} COMMENTED OUT BY MOTI ON 28/10
  Our \clocal\ algorithms
  improve over previous randomized algorithms by Alon et
  al.~(SODA 2012) and Mansour et al.~(ICALP 2012, APPROX
  2013).  In these previous algorithms, the number of
  probes and the space required for storing the state
  between queries are $\poly(\log n)$. We also design the
  first centralized local algorithm for graph coloring.
  Our graph coloring algorithms are
deterministic and stateless. Let $\Delta$ denote the
maximum degree of a graph over $n$ vertices. Our
algorithm for coloring the vertices by $\Delta+1$ colors
  performs $O(\log^* n)$ probes for constant degree
  graphs. Surprisingly, for the case where the number of
  colors is $O(\Delta^2)$, the number of probes of our
  algorithm is $O(\Delta\cdot \log^* n+\Delta^3)$.
% Moti removed 22/10 - it is a bit cheating..since it is not constructive.
%  , that
%  is, the number of probes is sublinear if $\Delta =
%  o(n^{1/3})$, i.e., our algorithm applies for graphs
%  with unbounded degrees.
\end{comment}

  The recursive centralized local improvement technique
  by Nguyen and Onak~\cite{onak2008} is employed to
  obtain an improved distributed approximation scheme for
  maximum (weighted) matching.
\begin{comment} COMMENTED OUT BY MOTI ON 28/10
  We present deterministic
  distributed algorithms for computing approximate

maximum cardinality matchings and approximate maximum
weight matchings. Our algorithm for the unweighted case
computes a matching whose size is at least $(1-\eps)$
times the optimal in $\Delta^{O(1/\eps)} +
O\left(\frac{1}{\eps^2}\right) \cdot\log^*(n)$ rounds
where $n$ is the number of vertices in the graph and
$\Delta$ is the maximum degree. Our algorithm for the
edge-weighted case computes a matching whose weight is at
least $(1-\eps)$ times the optimal in
$O\left(\frac{1}{\eps^2} \cdot \log
  \frac{1}{\eps}\right)\cdot \log^*n +
    \Delta^{O(1/\eps)}\cdot \log
         \left(\min\{1/\wmin,n/\eps\} \right)$
rounds for edge-weights in $[\wmin,1]$. The best previous
algorithms for both the unweighted case and the weighted
case are by Lotker, Patt-Shamir, and Pettie~(SPAA 2008).
For the unweighted case they give a randomized
$(1-\eps)$-approximation algorithm that runs in
$O((\log(n)) /\eps^3)$ rounds. For the weighted case they
give a randomized  $(1/2-\eps)$-approximation algorithm
that runs in $O(\log(\eps^{-1}) \cdot \log(n))$ rounds.
\end{comment}
The improvement is twofold: we reduce the number of rounds from $O(\log n)$ to
$O(\log^*n)$ for a wide range of instances
\begin{comment}
when the maximum degree of the graph, the parameters $\Delta$, $\eps$ and $\wmin$ are constants
(where ), and more generally when $\Delta$,
$1/\eps$ and $1/\wmin$ are sufficiently slowly increasing
functions of $n$.
\end{comment}
and, our algorithms are
deterministic rather than randomized.
\end{abstract}

\paragraph{Keywords.}
Centralized Local Algorithms, Sublinear Approximation
Algorithms,  Graph Algorithms, Distributed Local
Algorithms, Maximum Matching, Maximum Weighted Matching.

\begin{comment}
  \paragraph{Things left to do:}
  \begin{enumerate}
  \item finish coloring of line graph.
  \item check influence on maximal matching (clocal)
  \item finish discussion
    % \item ref to Reut's result and update intro
    % \item add discussion about using augmenting paths and global matching alg.

    % \item Coloring Line graphs as a constructive approach
    %   in the \clocal-algorithm (instead of Linial's approach) - we managed to show
    %   this only for Maximal Matching. Didn't manage to apply this method to Maximum
    %   Matching.
    % \item Remark regarding CV's \clocal-simulation in which the dependency in
    %   $\Delta$ is not exponential.  Moreover, this simulation does not help for
    %   coloring line graphs (it might be the case where all the descendants of a
    %   vertex are colored by the same color, hence the edges are colored by the same
    %   color).
    % \item Emphasize the fact that the \clocal\ alg for \mwm\ is new - is it better
    %   the Onak's?
    % \item Please see side-notes.
    % \item Read the the paper.
  \end{enumerate}

\end{comment}

\section{Introduction}

%%%%%%%%%%%%%%%%%%%%%
\iffalse
\begin{enumerate}\mnote{Summary of the things we mentioned regarding the comparison between the models.}
  \item Comparison between Sublinear Approximation Algorithms and \clocal\ algorithms.
  \item Elaborate on the different goals of these two different ``models'':
  (1)~success probability (constant vs. high prob), (2)~number of probes (function of $\Delta, \eps$ vs. function of $\Delta,\eps, n$, (3) space (not a consideration vs. sublinear space)
  \item How to translate sub.lin.apx to \clocal\ (the other direction is obvious)?
  \item The issue of parallelizability was not considered in sub.lin.apx.
\end{enumerate}
\fi
%%%%%%%%%%%%%%%%%%%%%%%

{\em Local Computation Algorithms\/}, as defined by Rubinfeld et
al.~\cite{shaiics2011}, are algorithms that answer queries regarding (global)
solutions to computational problems by performing local (sublinear time) computations
on the input. The answers to all queries must be consistent with a single solution regardless of the number of possible solutions.  To make this notion
concrete, consider the {\em Maximal Independent Set\/} problem, which we denote by
\mis.  Given a graph $G = (V,E)$, the local algorithm \alg\ gives the
illusion that it ``holds'' a specific maximal independent set $I \subseteq V$.  Namely, given
any vertex $v$ as a query, \alg\ answers whether $v$ belongs to $I$ even though
\alg\ cannot read all of $G$, cannot store the entire solution $I$, and cannot even remember all the answers to previous queries.  In order to answer such queries, $\alg$
can probe the graph $G$ by asking about the neighbors of a vertex of its choice.

\sloppy
A local {\red computation} algorithm may be randomized, so that the solution
according to which it answers queries may depend on its internal coin flips. However,
the solution should not depend on the sequence of the queries {\red(this property is
  called query order obliviousness~\cite{shaiics2011}).}  We measure the performance
of a local computation algorithm by the following criteria: the maximum number of
probes it makes to the input per query, the success probability over any sequence of
queries, and the maximum space it uses between queries%
\footnote{In the RAM model, the running time per query of our algorithms is at most
  $\poly(\textit{ppq})\cdot \log \log n$, where $\textit{ppq}$ is the maximum number
  of probes per query and $n = |V|$.  }
.
\begin{comment}
    Each of the algorithms in this paper performs a very small number of probes per query (for a constant degree graph, the number of probes is $O(\log^* n)$).
    Hence, we do not analyze the running time nor the space required by our algorithms. Instead, we focus on bounding the number of probes and the ``size'' of the state that these algorithms require.
\end{comment}
It is desired that both the probe complexity
and the space complexity of the algorithm be sublinear in the size of the graph
(e.g., ${\rm polylog}(|V|)$), and that the success probability be $1-1/\poly(|V|)$.
It is usually assumed that the maximum degree of the graph is upper-bounded by a
constant, but our results are useful also for non-constant upper bounds (see also~\cite{DBLP:journals/corr/ReingoldV14}).
For a formal definition of local algorithms in the context of graph problems,
which is the focus of this work, see Subsection~\ref{sec:cmodel}.

The motivation for designing local computation algorithms is that local computation algorithms capture difficulties with very large inputs.
A few examples include:
\begin{inparaenum}[(1)]
  \item Reading the entire input is too costly if the input is very large.
  \item In certain situations one is interested in a very small part of a complete solution.
  \item Consider a setting in which different uncoordinated servers need to answer queries about a very large input stored in the cloud.
      The servers do not communicate with each other, do not store answers to previous queries, and want to minimize their accesses to the input. Furthermore, the servers answer the queries consistently.
\end{inparaenum}

Local computation algorithms have been designed for various graph (and hypergraph) problems,
including the abovementioned \mis~\cite{shaiics2011,shaisoda2012},
hypergraph coloring~\cite{shaiics2011,shaisoda2012},
maximal matching~\cite{shaiicalp2012}
%\dnote{why doesn't MLM follow from MIS?}
and (approximate) maximum matching~\cite{shaiapprox2013}.
Local computation algorithms also appear  implicitly in works on sublinear
approximation algorithms for various graph parameters, such as the size of
a minimum vertex cover~\cite{parnasron,onak2008,yoshida2009improved,dana2012}.
Some of these implicit results are very efficient in terms of their probe complexity
(in particular, it depends on the maximum degree and not on $|V|$)
but do not give the desired $1-1/\poly(|V|)$ success probability. We compare our
results to both the explicit and implicit relevant known results.

As can be gleaned from the definition in~\cite{shaiics2011},
local computation algorithms are closely related to {\em Local Distributed Algorithms\/}.
We discuss the similarities and differences in more detail in Subsection~\ref{subsec:rel-clocal-dlocal}.
In this work, we exploit this relation in two ways. First, we
use techniques from the study of local distributed algorithms to obtain better local computation algorithms.
Second, we apply techniques
from the study of local computation algorithms (more precisely, local computation algorithms that
are implicit within sublinear approximation algorithms) to
obtain a new result in distributed computing.

In what follows we denote the aforementioned local computation model by \clocal\
(where the ``\textsc{Cent}'' stands for ``centralized'')
and the distributed (local) model
\begin{comment}
\footnote{Strictly speaking, a distributed algorithm is considered
 {\em local\/} if it performs a number of rounds that does not depend on $|V|$.
 We give an algorithm that is {\em almost local\/}, i.e., it has $\log^*$ dependence on $|V|$, but
 for simplicity refer to it as local. For the sake of comparison to previous work, we also
 consider algorithms in which the number of rounds is (poly)logarithmic in $|V|$.}
\end{comment}
 by \dlocal\ (for a formal definition of the latter,
see Subsection~\ref{sec:dlocal-def}).
We denote the number of vertices in the input graph by $n$ and the maximum degree by
$\Delta$.%\dnote{say something somewhere about avg deg?}

\subsection{On the relation between \clocal\ and \dlocal}\label{subsec:rel-clocal-dlocal}
% \dnote{move this subsection to later?}
The \clocal\ model is centralized in the sense that there is a single central
algorithm that is provided access to the whole graph. This is as opposed to the
\dlocal\ model in which each processor resides in a graph vertex $v$ and can obtain
information only about the neighborhood of $v$. Another important difference is in
the main complexity measure. In the \clocal\ model, one counts the number of probes
that the algorithm performs per query, while in the \dlocal\ model, the number of
rounds of communication is counted. This implies that a \dlocal\ algorithm %always
obtains information about a ball centered at a vertex, where the radius of the ball
is the number of rounds of communication. On the other hand, in the case of a
\clocal\ algorithm, it might choose to obtain information about different types of
neighborhoods so as to save in the number of probes. Indeed (similarly to what was
observed in the context of sublinear approximation algorithms~\cite{parnasron}),
given a \dlocal\ algorithm for a particular problem with round complexity $r$, we
directly obtain a \clocal\ algorithm whose probe complexity is $O(\Delta^r)$ where
$\Delta$ is the maximum degree in the graph. However, we might be able to obtain
lower probe complexity if we do not apply such a black-box reduction.  In the other
direction, \clocal\ algorithms with certain properties, can be transformed into
\dlocal\ algorithms (e.g., a deterministic \clocal\-algorithm in which probes are
confined to an $r$-neighborhood of the query).

% subsection{Our results in the \clocal\ model}
\subsection{The Ranking Technique}
The starting point for our results in the \clocal\ model is the
\emph{ranking} technique~\cite{onak2008,yoshida2009improved,shaisoda2012,shaiicalp2012,shaiapprox2013}. To exemplify this, consider, once again, the \mis\ problem.
A very simple (global ``greedy'') algorithm for this problem works by
selecting an arbitrary ranking of the vertices and initializing $I$ to be empty.
The algorithm then considers the vertices one after the other according to their
ranks and adds a vertex to $I$ if and only if it does not neighbor any vertex
already in $I$. Such an algorithm can be ``localized'' as follows. For a fixed
ranking of the vertices (say, according to their IDs), given a query on a vertex $v$,
the local algorithm performs a {\em restricted\/} DFS starting from $v$. The
restriction is that the search continues only on paths with monotonically decreasing
ranks. The local algorithm then simulates the global one on the subgraph induced by
this restricted DFS.

The main problem with the above local algorithm is that the number of probes it performs
when running the DFS may be very large. Indeed, for some rankings (and queried vertices), the number
of probes is linear in $n$. In order to circumvent this problem,
{\em random\/} rankings were studied~\cite{onak2008}. This brings up two questions, which were studied in previous works,
both for the \mis\ algorithm described above and for other ranking-based
algorithms~\cite{onak2008,yoshida2009improved,shaisoda2012,shaiicalp2012,shaiapprox2013}.
%\dnote{all refs relevant for vertex-rankings?}
The first is to bound
the number of probes needed to answer a query with high probability. The second is
how to efficiently store a random ranking between queries.

\subsection{Our Contributions}
\label{subsec:results-clocal}
In this section we overview the techniques we use and the results we obtained based
on these techniques. See the tables in Section~\ref{subsec:rel-work} for a precise
statement of the results.

% \subsubsection{Orientations with bounded reachability}
\paragraph{Orientations with bounded reachability.}
Our first conceptual contribution is a simple but very useful observation. Rather
than considering vertex rankings, we suggest to consider {\em acyclic orientations\/}
of the edges in the graph. Such orientations induce partial orders over the vertices,
and partial orders suffice for our purposes. The probe complexity induced by a given
orientation translates into a combinatorial measure, which we refer to as the {\em
  reachability\/} of the orientation. Reachability of an acyclic orientation is the
maximum number of vertices that can be reached from any start vertex by
directed paths (induced by the orientation). This leads us to the quest for a
\clocal\ algorithm that computes an orientation with bounded reachability.

% \subsubsection{Orientations and colorings}
\paragraph{Orientations and colorings.}
Our second conceptual contribution is that an orientation algorithm with bounded reachability can be based on a \clocal\ {\em coloring\/} algorithm.
Indeed, every vertex-coloring with $k$ colors induces an orientation with reachability
$O(\Delta^k)$. Towards this end, we design a \clocal\ coloring algorithm that applies techniques from \dlocal\ colorings algorithms~\cite{cole1986deterministic,goldberg1988parallel,linial,panconesi2010fast}.
 % \dnote{elaborate more on coloring?}
 Our \clocal\  algorithm is deterministic, does not use any space between queries, performs
 $O(\Delta\cdot \log^* n+\Delta^3)$ probes per query, and computes a coloring
 with $O(\Delta^2)$ colors. (We refer to the problem of coloring a graph by $c$ colors as \ccolor.) Our coloring algorithm yields an orientation
 whose reachability is $\Delta^{O(\Delta^2)}$.
 For constant degree graphs, this implies $O(\log^* n)$ probes to obtain an orientation with constant reachability.
 As an application of this orientation algorithm, we also  design a \clocal\ algorithm for $(\Delta+1)$-coloring.

% \subsubsection{{\red Centralized local}  simulations of sequential algorithms}
\paragraph{{\red Centralized local}  simulations of sequential algorithms.}
We apply a general transformation (similarly to what was shown in~\cite{shaisoda2012}) from global
algorithms with certain properties to local algorithms.
The transformation is based on our \clocal\ orientation with bounded reachability algorithm.
As a result we get
deterministic \clocal\ algorithms for \mis\ and maximal matching (\mm), which significantly
improve over previous work~\cite{shaiics2011,shaisoda2012,shaiicalp2012}, and the first \clocal\
algorithm for coloring with $(\Delta+1)$ colors. Compared to previous work,
for \mis\ and \mm\ the dependence on $n$ in the probe complexity is reduced from
${\rm polylog}(n)$ to $\log^*(n)$ and the space needed to store the state between queries is reduced from ${\rm
  polylog}(n)$ to zero.

% \subsubsection{Algorithms for approximate maximum (weight) matching}

%\paragraph{The recursive local improvement technique.}
\paragraph{Deterministic \clocal-algorithms for approximate maximum matching.}
We present $(1-\eps)$-approximation \clocal-algorithms for maximum cardinality
matching (\mcm) and maximum weighted matching (\mwm). 
% These algorithms employ a
% technique, introduced by Nguyen and Onak~\cite{onak2008} for the design of sublinear
% algorithms for approximating the cardinality or weight of a maximum matching.  This
% technique, which we refer to as {\em Local Improvement\/} builds on augmenting paths.
% Roughly speaking, it consists of recursive applications of a \clocal\ algorithm for
% \mis\ to certain auxiliary graphs (defined by augmenting paths). 
Similarly to previous related work~\cite{onak2008,DBLP:conf/spaa/LotkerPP08,shaiapprox2013},
% previous \clocal\ and \dlocal-algorithms for finding an approximate maximum cardinality
% matching are based on the following
our algorithm for \mcm\ is based on the augmenting paths framework of Hopcroft and Karp~\cite{hopcroft1973n}.
Our starting point is a global/abstract algorithm
that works iteratively, where in each iteration it constructs a new matching (starting from the empty matching).
Each new matching is constructed based on a maximal set of vertex disjoint paths that are augmenting paths with respect to the previous matching. Such a maximal set is a maximal independent set (\mis) in the intersection graph over the augmenting paths.  
% (See Algorithm.~\ref{alg:global mcm} %and~\ref{alg:global mwm}
% for precise details.)
The question is how to simulate this global algorithm in a local/distributed fashion, and
in particular, how to compute the maximal independent sets  over the intersection graphs.

By using our \clocal\ \mis\ algorithm (over the intersection graphs), for the case of an
approximate \mcm, 
we reduce the dependence of the
probe-complexity on $n$ from ${\rm polylog}(n)$~\cite{shaiapprox2013} to $\poly(\log^*(n))$.
The space needed to store the state between queries is reduced from ${\rm
  polylog}(n)$ to $0$.
For the approximate \mwm\ algorithm we also build on
the parallel approximation algorithm of Hougardy and Vinkemeir~\cite{HV06}.

% \paragraph{A \clocal\ coloring Algorithm.} The first algorithm. Has sublinear probe complexity for every $\Delta = o(\sqrt{n})$.
%
% \paragraph{Orientation with Bounded Reachability \clocal\ Algorithm.}
% \paragraph{Orientation with Bounded Reachability \dlocal\ Lower Bound.}
% \paragraph{A Deterministic, Stateless Ranking Algorithm.}
% \textbf{Is this the right way to say this?}
% \paragraph{Applications.}

%\subsection{A \dlocal\ approximation algorithm for \mm}\label{subsec:dist-local}

\paragraph{Deterministic \dlocal-algorithms for approximate maximum matching.}
We present $(1-\eps)$-approximation \dlocal-algorithms for \mcm\ and \mwm.  These
algorithms are based on a distributed simulation of the corresponding
\clocal-algorithms. For \mcm, we present a deterministic distributed
$(1-\eps)$-approximation algorithm. The number of rounds used by the algorithm is
$$\Delta^{O(1/\eps)} + O\left(\frac{1}{\eps^2}\right) \cdot\log^*(n).$$
For \mwm, we assume that edge weights are normalized as follows: the maximum edge
weight is $1$ and $\wmin$ denotes the minimum edge weight.
We present a deterministic distributed $(1-\eps)$-approximation algorithm. The number
of rounds used by the algorithm is
$$O\left(\frac{1}{\eps^2} \cdot \log
  \frac{1}{\eps}\right)\cdot \log^*n + \Delta^{O(1/\eps)}\cdot \log
\left(\min\{1/\wmin,n/\eps\} \right).$$

We briefly compare these results with previous results. The best previous algorithms
for both the unweighted and weighted cases are by Lotker, Patt-Shamir, and
Pettie~\cite{DBLP:conf/spaa/LotkerPP08}. For the unweighted case they give a
randomized $(1-\eps)$-approximation algorithm that runs in $O((\log(n))/\eps^{3})$
rounds with high probability%
\footnote{We say that an event occurs with high probability if it occurs with
  probability at least $1-\frac{1}{\poly(n)}$. }
 (w.h.p). Hence we get an improved result
when $\Delta^{O(1/\eps)} = o(\log(n))$. In particular, for
constant $\Delta$ and $\eps$, the number of rounds is
$O(\log^*(n))$.  Note that an $O(1)$-approximation of a
maximum matching in an $n$-node ring cannot be computed by
any
% D 27.7, Added:
deterministic
distributed algorithm in $o(\log^*(n))$
rounds~\cite{czygrinow2008fast,lenzen2008leveraging}.  For
% M 28.7 changed
the weighted case, they give a randomized $(1/2-\epsilon)$-approximation algorithm
that runs in $O(\log(\eps^{-1}) \cdot \log(n))$ rounds (w.h.p).\footnote{Lotker,
  Patt-Shamir and Pettie remark~\cite[Sec.  4]{DBLP:conf/spaa/LotkerPP08} that a
  $(1-\eps)$-MWM can be obtained in $O(\eps^{-4}\log^2 n)$ rounds (using messages of
  linear size), by adapting the algorithm of Hougardy and Vinkemeir~\cite{HV06}.
  %(where details are not provided in the paper)
}
Our \mwm\ approximation algorithm % has
% D 27.7
runs in significantly fewer rounds
% for certain parameters, for example, constant $\Delta$,
% $\eps$, and $1/\wmin$.
for various settings of the parameters $\Delta$, $1/\eps$, and $1/\wmin$.
In particular, when they are constants, the number of rounds is
$O(\log^*(n))$.

\subsection{Detailed Comparison with Previous
Work}\label{subsec:rel-work}

\paragraph{Comparison to previous (explicit) \clocal\ algorithms.}
A comparison of our results with previous \clocal\ algorithms is summarized in
Table~\ref{tab:app}. The results assume that $\Delta$ and $\eps$ are constant. (The
dependence of the number of probes and space on $\Delta$ and $\eps$ is not explicit
in~\cite{shaiicalp2012,shaiapprox2013}. For \mis\ explicit dependencies appear
in~\cite{shaisoda2012}.  In recent work, Levi et al.~\cite{ReutRonittAnak14} show how
the exponential dependence on $\Delta$ can be reduced to quasi-polynomial in the case
of (exact) \mis\ and \mm.)  Explicit dependencies on $\Delta$ and $\eps$ in our
result appear in the formal statements within the paper.

\renewcommand{\arraystretch}{1.55}
\begin{table}[htb]
%\small
%\footnotesize
\scriptsize
%\tiny
%\small
\centering
\begin{tabular}{| c || c | c | c || c | c |}
\hline
\multirow{2}{*}{Problem} & \multicolumn{3}{|c ||}{Previous work } &  {Here (Deterministic, 0-Space)}\\
\cline{2-5}
 & Space & \# Probes & success prob. & \# Probes \\
\hline\hline
\mis & $O(\log^2 n)$ & $O(\log^3 n)$ & $1-\frac{1}{n}$~\cite{shaisoda2012} &  $O(\log^* n)$ [Coro.~\ref{coro:mis}]\\
%\mm & $\Delta^{O(\Delta)}\cdot\log^3 n$& $\Delta^{O(\Delta)}\cdot\log^3 n$ &
%$1-\frac{1}{n}$ ~\cite{shaiicalp2012}&  $\Delta^{O(\Delta^2)} \cdot \log^* n$
%[Coro.~\ref{coro:mis}]\\
\mm & $O(\log^3 n)$& $O(\log^3 n)$ & $1-\frac{1}{n}$ ~\cite{shaiicalp2012}&  $O(\log^* n)$ [Coro.~\ref{coro:mis}]\\
\pcolor & --- & --- & --- &  $O(\log^* n)$ [Thm.~\ref{thm:colorpoly}]\\
\dcolor & --- & --- & --- &  $O(\log^* n)$ [Coro.~\ref{coro:mis}]\\
$(1-\eps)$-\mcm & $O(\log^3 n)$ & $O(\log^4 n)$ &
$1-\frac{1}{n^2}$~\cite{shaiapprox2013} &  $(\log^* n)^{O(1)}$ [Thm.~\ref{thm:mm}]\\
$(1-\eps)$-\mwm & --- & --- & --- &  $\left(\Gamma\right)^{O(1)} \cdot
    (\log^*n)^{O(1)}$ [Thm.~\ref{thm:clocal mwm}]\\
\hline
\end{tabular}
\caption{\small A comparison between \clocal\ algorithms.
\mis\ denotes maximal independent set,  \mm\ denotes maximal matching, \mcm\ denotes maximum cardinality matching, and
  \mwm\ denotes maximum weighted matching.   
Our algorithms are deterministic and stateless (i.e., the space needed to store the state between queries is zero).
Since the dependence on $\Delta$ and $\eps$ is not explicit in~\cite{shaiicalp2012,shaiapprox2013},
all the results are presented under 
the assumption that $\Delta = O(1)$ and $\eps = O(1)$.
For weighted graphs, the ratio between the maximum to
minimum edge weight is denoted by $\Gamma$ (we may assume that $\Gamma\leq n/\eps$).
The $(1-\eps)$-\mwm\ \clocal-algorithm is of interest (even for $\Gamma=n/\eps$) because it serves as a
basis for the $(1-\eps)$-\mwm\ \dlocal-algorithm.
}
% Moreover, the number of probes is substantially smaller.}
   \label{tab:app}
\end{table}
\renewcommand{\arraystretch}{1}

\paragraph{Comparison to previous \clocal\ oracles in sublinear approximation
  algorithms.}
A sublinear approximation algorithm for a certain graph
parameter (e.g., the size of a minimum vertex cover) is
given probe access to the input graph and is required to
output an approximation of the graph parameter with high
constant success probability. Many such algorithms work
by designing an {\em oracle\/} that answers queries (e.g.,
a query can ask: does a given vertex belong to a fixed
small vertex cover?). The sublinear approximation algorithm
estimates the graph parameter by performing (a small number
of) queries to the oracle. The oracles are essentially
\clocal\ algorithms but they tend to have constant error
probability.
%, and it is not clear how to reduce this error
% probability without significantly increasing their probe
% complexity. 
Furthermore, the question of bounded space
needed to store the state between queries was not an issue
in the design of these oracles, since only few queries are
performed by the sublinear approximation algorithm. Hence,
they are not usually considered to be ``bona fide''
\clocal\ algorithms. A comparison of our results and these
oracles appears in Table~\ref{tab:app2}.

In the recent result of Levi et al.~\cite{ReutRonittAnak14} it is  shown how, based on the
sublinear approximation algorithms of Yoshida et al.~\cite{yoshida2009improved}
(referenced in the table), it is possible to
% increase the success probability to  $1-1/\poly(n)$
reduce the dependence on the failure probability, $\delta$,
from inverse polynomial to inverse poly-logarithmic.
%, while maintaining the polynomial dependence on the maximum degree, $\Delta$. 
In particular, they obtain a $(1-\eps)$-approximation 
\clocal\ algorithm for \mcm\ that performs $\Delta^{O(1/\eps^2)} \cdot \poly(\log n)$
probes, uses space of the same order, and succeeds with probability $1-1/\poly(n)$.
%\mnote{Comapring to Onak's \mwm ?}
\renewcommand{\arraystretch}{1.55}
\begin{table}[htb]
%\small
%\footnotesize
\scriptsize
%\tiny
%\small
\centering
\begin{tabular}{| c || c | c | c|| c | c |c |}
\hline
\multirow{2}{*}{Problem} & \multicolumn{3}{|c ||}{Previous work } &  \multicolumn{2}{|c |}{Here}\\
\cline{2-6}
 & \# Probes & success prob. & apx. ratio & \# Probes & apx. ratio\\
\hline\hline
\mis &  $O(\Delta^4)\cdot\poly(\frac{1}{\delta},\frac{1}{\eps})$ & $1-\delta$& $1-\eps$ ~\cite{yoshida2009improved}& $\Delta^{O(\Delta^2)} \cdot \log^* n$ & 1 \\
\mm & $O(\Delta^4)\cdot\poly(\frac{1}{\delta},\frac{1}{\eps})$ & $1-\delta$ & $1-\eps$~\cite{yoshida2009improved}& $\Delta^{O(\Delta^2)} \cdot \log^* n$ & $1$\\
\mcm  & $\Delta^{O(1/\eps)}\cdot\poly(\frac{1}{\delta},\frac{1}{\eps})$ & $1-\delta$& $1-\eps$  ~\cite{yoshida2009improved}& $(\log^* n)^{O(1/\eps)}\cdot 2^{O(\Delta^{1/\eps})}$& $1-\eps$\\
%Vertex Cover & $\tilde{O}(\Delta^2)/(\delta\cdot\eps)$ & $1-\delta$ & $2+\eps$~\cite{dana2012}& $\Delta^{O(\Delta^2\log \Delta)} \cdot \log^* n$ & $2$\\
%\pcolor  & --- & ---  & ---& $O(\Delta \cdot \log^* n+\Delta^3)$ & 1\\
%\dcolor  & --- & --- & ---  & $\Delta^{O(\Delta^2)} \cdot \log^* n$& 1\\
%Set Cover & --- & --- & ---, --- & --- & --- \\
%Dominating Set & --- & --- & ---, --- & --- & --- \\
%Deterministic balls into bins?
\hline
\end{tabular}
\caption{\small A comparison between \clocal\ oracles in sub-linear approximation
  algorithms and our \clocal\ (deterministic) algorithms.
The former algorithms were designed to work with constant success probability and a bound
  was given on their expected probe complexity. When presenting them as \clocal\ algorithms we introduce a failure probability
  parameter, $\delta$, and bound their probe complexity in terms of $\delta$.
  Furthermore, the approximation ratios of the sublinear approximation algorithms were stated in additive terms, and we translate
  the results so as to get a multiplicative approximation.}
   \label{tab:app2}
\end{table}
\renewcommand{\arraystretch}{1}

\paragraph{Comparison to previous \dlocal\ algorithms for \mcm\ and \mwm.}
We compare our results to previous ones in Table~\ref{tbl}.
The first line refers to the aforementioned algorithm by Lotker, Patt-Shamir, and
Pettie~\cite{DBLP:conf/spaa/LotkerPP08} for the unweighted case.
% They give a randomized $(1-\eps)$-approximation
% algorithm that runs in $O((\log(n))/\eps^{3})$ rounds (w.h.p).
The second line in Table~\ref{tbl}
refers to an algorithm of Nguyen and Onak~\cite{onak2008}.
As they observe, their algorithm for approximating the size
of a maximum matching in sublinear time can be transformed
into a randomized distributed algorithm that succeeds with constant
probability (say, $2/3$) and runs in $\Delta^{O(1/\eps)}$
rounds.
The third line refers to the aforementioned algorithm by Lotker, Patt-Shamir, and
Pettie~\cite{DBLP:conf/spaa/LotkerPP08} for the weighted case.
% They give a randomized $(2+\epsilon)$-approximation algorithm that runs in
% $O(\log(\eps^{-1}) \cdot \log(n))$ rounds (w.h.p).
The fourth line refers to the algorithm by Panconesi and
Sozio~\cite{panconesi2010fast} for weighted matching. They devise a deterministic
distributed $(1/6-\eps)$-approximation algorithm
% for any $\eps >0$
that runs in $O\left(
  \frac{\log^4 (n)}{\eps}\cdot \log (\Gamma)\right)$ rounds, where $\Gamma$ is the ratio between the
maximum to minimum edge weight.

We remark that the randomized \clocal-algorithm by Mansour
and Vardi~\cite{shaiapprox2013} for $(1-\eps)$-approximate
maximum cardinality matching in bounded-degree graphs can be transformed into a
randomized \dlocal-algorithm for $(1-\eps)$-approximate
maximum cardinality matching (whose success probability is $1-1/\poly(n)$).
Their focus is on bounding the number
of probes, which they show is polylogarithmic in $n$ for constant $\Delta$ and $\eps$.
% However, the resulting
To the best of our understanding, an analysis of the probe-radius of their
algorithm will not imply a \dlocal-algorithm that runs in fewer rounds
than the algorithm of
% \dlocal-algorithm will require more rounds than, say, the algorithm of
Lotker, Patt-Shamir, and Pettie~\cite{DBLP:conf/spaa/LotkerPP08}.

\renewcommand{\arraystretch}{2}
\begin{table*}[htb]
%\small
%\normalsize
%\footnotesize
\scriptsize
\centering
\begin{tabular}{| c | c | c | c || c| c | c |c |}
\hline
\multicolumn{4}{|c ||}{Previous work } &  \multicolumn{2}{|c |}{Here (Deterministic)}\\
\cline{1-6}
problem & \# rounds & success prob. & apx. ratio. & \# rounds & apx. ratio.\\
\hline\hline
\multirow{2}{*}{\mcm} &$O(\frac{\log(n)}{\eps^{3}})$ & $1-\frac{1}{\poly(n)}$ & $1-\eps$ ~\cite{DBLP:conf/spaa/LotkerPP08}&  \multirow{2}{*}{$\Delta^{O\left(\frac{1}{\eps}\right)}+O\left(\frac{1}{\eps^2}\right)\cdot \log^*(n)$} & \multirow{2}{*}{$1-\eps$} \\
& $\Delta^{O(\frac{1}{\eps})}$ & $1-\Theta(1)$ & $1-\eps$ ~\cite{onak2008}& &   [Thm.~\ref{thm:distalg}]\\
%
%& $O\left(\frac{\log(n)}{\eps^3}\right)$ $\Delta$? & $1-\frac{1}{\poly(n)}$ & $1-\eps$ ~\cite{shaiapprox2013}$\ast$& & \\

\hline
\multirow{2}{*}{\mwm} &$O\left(\log(\eps^{-1}) \cdot \log(n)\right)$ & $1-\frac{1}{\poly(n)}$ & $1/2-\eps$ ~\cite{DBLP:conf/spaa/LotkerPP08}&  \multirow{2}{*}{$O\left(\frac{1}{\eps^2} \cdot \log \frac{1}{\eps}\right)\cdot \log^*n +
    \Delta^{O(1/\eps)}\cdot \log \left(\Gamma
         \right)$} & \multirow{2}{*}{$1-\eps$} \\
%
%& $O\left(\frac{\log(\eps^{-1}) \cdot \log (n)}{\eps}\right)$ & $1-\frac{1}{\poly(n)}$ & $4+\eps$ ~\cite{DBLP:journals/siamcomp/LotkerPR09}& &  \\
%
%& $O\left(
%\frac{\log^3 (n)}{\eps^3}\cdot \log^2 (\Gamma)\right)$ & $1-\frac{1}{\poly(n)}$ & $6+\eps$ ~\cite{panconesi2010fast}& &  \\
%
& $O\left( \frac{\log^4 (n)}{\eps}\cdot \log
(\Gamma)\right)$ & deterministic & $1/6-\eps$ ~\cite{panconesi2010fast}& &  [Thm.~\ref{thm:distalg mwm}]\\
%
%Onak did not do weighted & $\Gamma(n,\frac{1}{\eps},W)$ & $O(1)$ & $1-\eps$ ~\cite{onak2008,onakthesis}$\ast$& & \\
%& $\Gamma(n,\frac{1}{\eps},W) \cdot \log(n)$ & $1-\frac{1}{\poly(n)}$ & $1-\eps$ ~\cite{shaiicalp2012}$\ast$& &  [Thm.~\ref{thm:distalg mwm}] \\
\hline
\end{tabular}
\caption{\small A comparison between \mum\ and \mwm\
\dlocal\ algorithms. The ratio between the maximum to
minimum edge weight is denoted by $\Gamma$
% (by preprocessing,
(we may assume that $\Gamma\leq n/\eps$).}
   \label{tab:dist}\label{tbl}
\end{table*}
\renewcommand{\arraystretch}{1}

\section{Preliminaries}
\subsection{Notations}
Let $G=(V,E)$ denote an undirected graph, and $n(G)$
denote the number of vertices in $V$. We denote the degree of $v$ by $\degree_G(v)$.
Let $\Delta(G)$ denote the maximum degree, i.e., $\Delta(G)
\eqdf \max_{v\in V}\{\degree_G(v)\}$.
Let $\Gamma(v)$ denote the set of neighbors of $v\in V$.
The length of a path
equals the number of edges along the path. We denote the
length of a path $p$ by $|p|$. For $u,v \in V$ let
$\dist_G(u,v)$ denote the length of the shortest path
between $u$ and $v$ in the graph $G$. The ball of radius
$r$ centered at $v$ in the graph $G$ is defined by
\[
    B^G_r(v) \triangleq \{u \in V \mid \dist_G(v,u) \leq r\}\:.
\]
If the graph $G$ is clear from the context, we may drop it
from the notation, e.g., we simply write $n,m,\degree(v)$, or
$\Delta$.

For $k \in \NN^+$ and $n >0$, let $\log^{(k)} (n)$ denote
the $k$th iterated logarithm of $n$. Note that $\log ^{(0)}
(n) \triangleq n$ and if $\log^{(i)} (n)=0$, we define
$\log^{(j)} (n) = 0$, for every $j>i$.
For $n \geq 1$, define $\log^{*} (n)\triangleq \min \{i:
\log^{(i)}(n) \leq 1\}$.

A subset $I\subseteq V$ is an \emph{independent set} if no two vertices in $I$ are an
edge in $E$. An independent set $I$ is \emph{maximal} if $I\cup\{v\}$ is not an
independent set for every $v\in V\setminus I$. We use \mis\ as an abbreviation of a
maximal independent set.

A subset $M\subseteq E$ is a matching if no two edges in $M$ share an endpoint.  Let
$M^*$ denote a maximum cardinality matching of $G$.  We say that a matching $M$ is a
$(1-\eps)$-approximate maximum matching if
\[
    |M| \geq (1-\eps)\cdot|M^*|\:.
\]
Let $w(e)$ denote the weight of an edge $e\in E$. The weight of a subset $F\subseteq
E$ is $\sum_{e\in F} w(e)$ and is denoted by $w(F)$.  Let $M_w^*$ denote a maximum
weight matching of $G$. A matching $M$ is a $(1-\eps)$-approximate maximum weight matching
if $w(M) \geq (1-\eps)\cdot w(M_w^*)$. We abbreviate the terms maximum cardinality
matching and maximum weight matching by \mcm\ and \mwm, respectively.

\subsection{The \clocal\ Model}\label{sec:cmodel}
The model of centralized local computations was defined in~\cite{shaiics2011}.
In this section we describe this model for problems over labeled graphs.

\paragraph{Labeled graphs.}
An undirected graph $G=(V,E)$ is labeled if: (1)~Vertex names are distinct and each
have description of at most $O(\log n)$ bits. For
simplicity, assume that the vertex names are in $\{1,\ldots,n\}$. We denote the
vertex whose name is $i$ by $v_i$. (2)~Each vertex $v$ holds a list of $\degree(v)$
pointers, called \emph{ports}, that point to the neighbors of $v$.  The assignment
of ports to neighbors is arbitrary and fixed.

\paragraph{Problems over labeled graphs.}Let $\Pi$ denote a computational problem
over labeled graphs (e.g., maximum matching, maximal independent set, vertex
coloring).  A solution for problem $\Pi$ over a labeled graph $G$ is a function, the
domain and range of which depend on $\Pi$ and $G$.  For example: (1)~In the Maximal
Matching problem, a solution is an indicator function $M:E\rightarrow \{0,1\}$ of a
maximal matching in $G$. (2)~In the problem of coloring the vertices of a graph by
$(\Delta+1)$ colors, a solution is a coloring $c:V\rightarrow \{1,\dots,\Delta+1\}$.
Let $\sol(G,\Pi)$ denote the set of solutions of problem $\Pi$ over the labeled graph
$G$.

\paragraph{Probes.}
In the \clocal\ model, access to the labeled graph is limited to probes.  A
\emph{probe} is a pair $(v,i)$ that asks ``who is the $i$th neighbor of $v$?''.  The
answer to a probe $(v,i)$ is as follows.  (1)~If $\degree(v)<i$, then the answer is
``null''. (2)~If $\degree(v)\geq i$, then the answer is the (ID of) vertex $u$ that is
pointed to by the $i$th port of $v$. For simplicity, we assume that the answer also
contains the port number $j$ such that $v$ is the $j$th neighbor of $u$. (This
assumption reduces the number of probes by at most a factor of $\Delta$.)

\paragraph{Online Property of \clocal-algorithms.}
The input of an algorithm $\alg$ for a problem $\Pi$ over labeled graphs in the
\clocal\ model consists of three parts: (1)~access to a labeled graph $G$ via probes,
(2)~the number of vertices $n$ and the maximum degree $\Delta$ of the graph $G$, and
(3)~a sequence $\{q_i\}_{i=1}^N$ of queries.  Each query $q_i$ is a request for an
evaluation of $f(q_i)$ where $f\in \sol(G,\Pi)$.  Let $y_i$ denote the output of
\alg\ to query $q_i$. We view algorithm \alg\ as an online algorithm because it must output
$y_i$ without any knowledge of subsequent queries.

\paragraph{Consistency.} We say that $\alg$ is \emph{consistent with $(G,\Pi)$} if
\begin{equation}\label{eq:const-def}
    \exists f \in  \sol(G,\Pi) \mbox{ s.t. }~\forall N \in \NN ~~\forall \{q_i\}_{i=1}^N ~~\forall i ~:~y_i = f(q_i)\:.
\end{equation}

\paragraph{Examples.} Consider the problem of computing a maximal independent set.
The \clocal-algorithm is input a sequence of queries $\{q_i\}_i$, each of which is a
vertex. For each $q_i$, the algorithm outputs whether $q_i$ is in $I$, for an
arbitrary yet fixed maximal independent set $I\subseteq V$.  Consistency means that
$I$ is fixed for all queries. The algorithm has to satisfy this specification even
though it does not probe all of $G$, and obviously does not store the maximal
independent set $I$.  Moreover, a stateless algorithm does not even remember the
answers it gave to previous queries.  Note that if a vertex is queried twice, then
the algorithm must return the same answer.  If two queries are neighbors, then the
algorithm may not answer that both are in the independent set. If the algorithms
answers that $q_i$ is not in the independent set, then there must exist a neighbor of
$q_i$ for which the algorithm would answer affirmatively. If all vertices are
queried, then the answers constitute the maximal independent set $I$.

As another example, consider the problem of computing a $(\Delta+1)$ vertex coloring.
Consistency in this example means the following. The online algorithm is input a
sequence of queries, each of which is a vertex. The algorithm must output the color
of each queried vertex. If a vertex is queried twice, then the algorithm must return
the same color. Moreover, queried vertices that are neighbors must be colored by
different colors.  Thus, if all vertices are queried, then the answers constitute a
legal vertex coloring that uses $(\Delta+1)$ colors.

\paragraph{Resources and Performance Measures.} The resources used by a
\clocal-algorithm are: probes, space, and random bits.  The running time used to
answer a query is not counted.  The main performance measure is the \emph{maximum
  number of probes} that the \clocal-algorithm performs per query.  We consider an
additional measure called \emph{probe radius}.  The probe radius of a
\clocal-algorithm $C$ is $r$ if, for every query $q$, all the probes that algorithm
$C$ performs in $G$ are contained in the ball of radius $r$ centered at $q$.
We denote the probe radius of algorithm $A$ over graph $G$ by $r_G(A)$.

The \emph{state} of algorithm $\alg$ is the information that $\alg$ saves between
queries.  The \emph{space} of algorithm $\alg$ is the maximum number of bits required
to encode the state of $\alg$.  A \clocal-algorithm is \emph{stateless} if the
algorithm does not store any information between queries.  In particular, a stateless
algorithm does not store previous queries, answers to previous probes, or answers
given to previous queries.\footnote{\label{foot:4}We remark that in~\cite{shaiics2011} no
  distinction was made between the space needed to answer a query and the space
  needed to store the state between queries. Our approach is different and follows
  the \dlocal\ model in which one does not count the space and running time of the
  vertices during the execution of the distributed algorithm.  Hence, we ignore the
  space and running time of the \clocal-algorithm during the processing of a query.
  Interestingly, the state between queries
  in~\cite{shaisoda2012,shaiicalp2012,shaiapprox2013,DBLP:journals/corr/ReingoldV14}
  only stores a random seed that is fixed throughout the execution of the
  algorithm.}  In this paper all our \clocal-algorithms are stateless.

\begin{definition}
An online algorithm is a $\clocal[q,s]$ algorithm for $\Pi$ if
  \begin{inparaenum}[(1)]
  \item it is consistent with $(G,\Pi)$,
  \item it performs at most $q$ probes, and
  \item the space of the algorithm is bounded by $s$.
  \end{inparaenum}
\end{definition}
\sloppy The goal in designing algorithms in the \clocal\ model is to minimize the
number of probes and the space (in particular $q,s = o(n)$). A $\clocal[q,s]$
algorithm with $s=0$ is called a stateless $\clocal[q]$ algorithm.  Stateless
algorithms are useful in the case of uncoordinated distributed servers that answer
queries without communicating with each other.

\paragraph{Randomized local algorithms.}
A randomized $\clocal$-algorithm is also parameterized by the \emph{failure
  probability} $\delta$.  We say that $\alg$ is a $\clocal[q,s,\delta]$ algorithm for
$\Pi$ if the algorithm is consistent, performs at most $q$ probes, and uses space at
most $s$ with probability at least $1-\delta$.  The standard requirement is that
$\delta = 1/\poly(n)$.

The number of random bits used by a randomized algorithm is also a resource.  One can
distinguish between two types of random bits: (1)~random bits that the algorithm must
store between queries, and (2)~random bits that are not stored between queries.  We
use the convention that information that is stored between queries is part of the
state. Hence, random bits, even though chosen before the first query, are included in
the state if they are stored between queries.\footnote{As noted in Footnote~\ref{foot:4},
  in~\cite{shaisoda2012,shaiicalp2012,shaiapprox2013} the state does not change
  during the execution of the \clocal\ algorithm.}

\paragraph{Parallelizability and query order obliviousness.}
In~\cite{shaisoda2012,shaiicalp2012,shaiapprox2013} two requirements are introduced:
\emph{parallelizability} and \emph{query order obliviousness}. These requirements are
fully captured by the definition of a consistent, online, deterministic algorithm
with zero space.
%%%
That is, every online algorithm that is consistent, stateless, and deterministic is
both parallelizable and query order oblivious.

\subsection{The \dlocal\ Model}\label{sec:dlocal-def}
The model of local distributed computation is a classical model
(e.g.,~\cite{linial,pelegbook,dsurvey}). A distributed computation takes place in an
undirected labeled graph $G=(V,E)$. The neighbors of each vertex $v$ are numbered
from $1$ to $\deg(v)$ in an arbitrary but fixed manner. Ports are used to point to
the neighbors of $v$; the $i$th port points to the $i$th neighbor.  Each vertex
models a processor, and communication is possible only between neighboring
processors.  Initially, every $v \in V$ is input a local input.  The computation is
done in $r \in \NN$ synchronous rounds as follows.  In every round: (1)~every
processor receives a message from each neighbor, (2)~every processor performs a
computation based on its local input and the messages received from its neighbors,
(3)~every processor sends a message to each neighbor.  We assume that a message sent
in the end of round $i$ is received in the beginning of round $i+1$.  After the $r$th
round, every processor computes a local output.

The following assumptions are made in the \dlocal\ model:
\begin{inparaenum}[(1)]
\item The local input to each vertex $v$ includes the ID of $v$, the degree of the
  vertex $v$, the maximum degree $\Delta$, the number of vertices $n$, and the ports
  of $v$ to its neighbors.
\item The IDs are distinct and bounded by a polynomial in $n$.
\item The length of the messages sent in each round is not bounded.
\item The computation in each vertex in each round need not be efficient.
\end{inparaenum}

We say that a distributed algorithm is a \emph{$\dlocal[r]$-algorithm} if the number
of communication rounds is $r$.  Strictly speaking, a distributed algorithm is
considered {\em local\/} if $r$ is bounded by a constant.  We say that a
$\dlocal[r]$-algorithm is \emph{almost local} if $r=O(\log^*(n))$.  When it is
obvious from the context we refer to an almost \dlocal\ algorithm simply by a
\dlocal\ algorithm.

\begin{comment}
  When counting the number of messages, one makes a distinction between empty
  messages and non-empty messages. The \emph{number of messages} that a \dlocal\
  algorithms sent equals the number of non-empty messages.
\end{comment}
\subsection{Mutual Simulations Between \dlocal\ and \clocal}\label{sec:mutual}
In this section we show that one can simulate algorithms over labeled graphs in one
model by algorithms in the other model (without any restriction on $\Delta$).
Since our algorithms are deterministic, we focus on simulations of deterministic algorithms.

The following definition considers \clocal-algorithms whose queries are vertices of a
graph. The definition can be easily extended to edge queries.
\begin{definition}
  A \clocal-algorithm $C$ simulates (or is simulated by) a \dlocal-algorithm $D$ if,
  for every vertex $v$, the local output of $D$ in vertex $v$ equals the answer that
  algorithm $C$ computes for the query $v$.
\end{definition}
\paragraph{Simulation of \dlocal\ by \clocal~\cite{parnasron}:}
Every deterministic $\dlocal[r]$-algorithm, can be simulated by a deterministic,
stateless \clocal$[O(\Delta^r)]$-algorithm. The simulation proceeds simply by probing
all vertices in the ball of radius $r$ centered at the query.  If $\Delta=2$, then
balls are simple paths (or cycles) and hence simulation of a $\dlocal[r]$-algorithm
is possible by a \clocal$[2r]$-algorithm.

\paragraph{Simulation of \clocal\ by \dlocal:}
The following Proposition suggests a design methodology for distributed algorithms.
For example, suppose that we wish to design a distributed algorithm for maximum
matching. We begin by designing a \clocal-algorithm $C$ for computing a maximum
matching. Let $r$ denote the probe radius of the \clocal-algorithm $C$. The
proposition tells us that we can compute the same matching (that is computed by $C$)
by a distributed $r$-round algorithm.%
\footnote{Message lengths grow at
a rate of $O(\Delta^{r+1} \cdot \log n)$ as information (e.g., IDs and existence of
edges) is accumulated.}%
\begin{proposition}\label{prop:simul}
  Every stateless deterministic \clocal-algorithm $C$ whose probe radius is at most $r$ can be
  simulated by a deterministic $\dlocal[r]$-algorithm $D$.
\end{proposition}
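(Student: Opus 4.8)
The plan is to let every vertex $v$ of $G$ first collect a complete description of the part of the labeled graph that $C$ could inspect while answering the query $v$, and then have $v$ run $C$ on that query entirely in local memory, with no further communication. Two features of $C$ make this work. First, $C$ is deterministic and stateless, so a run of $C$ on the query $v$ is a function of $v$ alone together with the answers to the probes it issues; there is no cross-query state that $v$ would have to reconstruct. Second, $r_G(C)\le r$, so every probe $C$ makes while answering $v$, as well as the vertex returned as its answer, lies inside the ball $B^G_r(v)$. Hence it suffices for $v$ to know the labeled subgraph of $G$ induced on $B^G_r(v)$: the IDs of all vertices at distance at most $r$ from $v$, their degrees, and the port-to-neighbor assignment at each of them.

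First I would run the standard ``flooding'' procedure for $r$ rounds: in every round each processor sends to all of its neighbors the entire set of topological facts it currently holds --- vertex IDs, degrees, incident ports, and known adjacencies --- and merges in everything it receives. An induction on the round index $i$ shows that after round $i$ every vertex $v$ knows the labeled ball $B^G_i(v)$: the base case $i=0$ is the local input of $v$ (which, consistently with the port model of Subsection~\ref{sec:cmodel}, we take to include the IDs reached by $v$'s ports), and the inductive step holds because any vertex at distance $i$ from $v$ is reached along a shortest path through a neighbor of $v$ that, by the hypothesis applied one round earlier, already holds the needed facts and forwards them in round $i$. After $r$ rounds $v$ holds $G[B^G_r(v)]$. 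Messages here have length $O(\Delta^{r+1}\log n)$, which is harmless because message lengths are unbounded in the \dlocal\ model.

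Then $v$ simulates $C$ internally on the query $v$; this is legitimate because $v$ also receives $n$ and $\Delta$ as part of its local input. Whenever the simulated $C$ issues a probe $(u,i)$, vertex $v$ answers it from its stored copy of $G[B^G_r(v)]$: since $r_G(C)\le r$, both $u$ and its $i$th neighbor lie in $B^G_r(v)$, so $v$ returns exactly what the real graph would return (including the reverse port number). As $C$ is deterministic, the simulated execution tracks the true execution of $C$ on query $v$ probe by probe, so $v$'s output equals the answer $C$ computes for $v$ --- precisely the required simulation property. The resulting algorithm $D$ uses $r$ communication rounds and is deterministic, since both the flooding phase and the internal run of $C$ are. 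The one point that deserves care, and the step I would write out in detail, is the bookkeeping at the outer shell of the ball: one must verify that $r$ rounds of flooding really expose every adjacency a probe confined to $B^G_r(v)$ can touch --- in particular edges between two vertices at distance exactly $r$ from $v$ --- which is where the exact reading of $r_G(\cdot)$ and the port-model conventions enter.
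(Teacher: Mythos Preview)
Your proposal is correct and follows essentially the same approach as the paper: have each vertex $v$ collect the labeled ball $B^G_r(v)$ in $r$ rounds and then locally simulate $C$ on the query $v$, using statelessness and determinism to guarantee the simulation is faithful. The paper's proof is a two-sentence version of exactly this; your write-up is more detailed (the flooding induction, the message-length remark, and the boundary caveat), but the underlying argument is identical.
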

\begin{proof}
  The distributed algorithm $D$ collects, for every $v$, all the information in the
  ball of radius $r$ centered at $v$. (This information includes the IDs of the
  vertices in the ball and the edges between them.)

  After this information is collected, the vertex $v$ locally runs the
  \clocal-algorithm $C$ with the query $v$. Because algorithm $C$ is stateless, the
  vertex has all the information required to answer every probe of $C$.
\end{proof}

\section{Acyclic Orientation}\label{sec:obr}
In this section we deal with orientation of undirected graphs, namely, assigning
directions to the edges. We suggest to obtain an orientation from a vertex coloring.

\paragraph{Definitions.}
An \emph{orientation} of an undirected graph $G=(V,E)$ is a
directed graph $H=(V,A)$, where $\{u,v\}\in E$ if and only
if $(u,v)\in A$ or $(v,u)\in A$ but not both. An
orientation $H$ is \emph{acyclic} if there are no directed
closed paths in $H$. The \emph{radius} of an acyclic
orientation $H$ is the length of the longest directed path
in $H$. We denote the radius of an orientation by
$\rad(H)$.  In the problem of acyclic orientation with
bounded radius (\orad), the input is an undirected graph.
The output is an orientation $H$ of $G$ that is acyclic.
The goal is to compute an acyclic orientation $H$ of $G$
that minimizes $\rad(H)$.

The set of vertices that are reachable from $v$ in a directed graph $H$ is called the
\emph{reachability set} of $v$.  We denote the reachability set of $v \in V$ in
digraph $H$ by $\Reach_H(v)$. Let $\reach_H(v) \triangleq |\Reach_H(v)|$ and
$\reach(H) \triangleq\max_{v \in V}\reach_H(v)$.  We simply write
$\Reach(v),\reach(v)$ when the digraph $H$ is obvious from the
context. In the problem of acyclic orientation with bounded reachability (\obr), the input is
an undirected graph. The output is an orientation $H$ of $G$ that is acyclic. The
goal is to minimize $\reach(H)$.

Previous works obtain an acyclic orientation by random vertex
ranking~\cite{onak2008,yoshida2009improved,shaisoda2012,shaiicalp2012,shaiapprox2013}.
We propose to obtain an acyclic orientation by vertex coloring.

\begin{proposition}[Orientation via coloring]\label{prop:color2ori}
  Every coloring by $c$ colors induces an acyclic orientation with
 \[
    \rad(H) \leq c-1.
  \]
  Hence, every $\clocal[q]$-algorithm for vertex coloring also implies a
  $\clocal[2q]$-algorithm for acyclic orientation.
\end{proposition}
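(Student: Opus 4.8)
The plan is to orient every edge toward its higher-colored endpoint. Fix a proper $c$-coloring $\phi\colon V\to\{1,\dots,c\}$ witnessing the hypothesis, and define the orientation $H=(V,A)$ by placing $(u,v)\in A$ whenever $\{u,v\}\in E$ and $\phi(u)<\phi(v)$. Since $\phi$ is a proper coloring, adjacent vertices receive distinct colors, so for each edge $\{u,v\}$ exactly one of $(u,v),(v,u)$ is placed in $A$; hence $H$ is a legitimate orientation of $G$.

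Next I would note that $\phi$ strictly increases along every directed edge, and therefore along every directed walk. This immediately gives acyclicity: a directed closed walk would return to its start vertex with a strictly larger color than it began with, a contradiction. It also bounds the radius: a directed path $v_0\to v_1\to\cdots\to v_k$ satisfies $\phi(v_0)<\phi(v_1)<\cdots<\phi(v_k)$, so these $k+1$ values are distinct elements of $\{1,\dots,c\}$, forcing $k+1\le c$, i.e.\ $k\le c-1$. Taking the maximum over all directed paths yields $\rad(H)\le c-1$.

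For the algorithmic consequence I would simulate the coloring algorithm twice per query. On an edge query $\{u,v\}$ to the \orad-algorithm, run the given stateless deterministic $\clocal[q]$ coloring algorithm on the vertex query $u$ and then on the vertex query $v$, for a total of at most $2q$ probes, and output the direction of $\{u,v\}$ according to the rule above using the two returned colors. Because the coloring algorithm is consistent with a single coloring $f=\phi$, and — being stateless and deterministic — returns the same color for a given vertex on every invocation, the answers produced are all consistent with the single orientation $H$ derived from $\phi$; thus we obtain a $\clocal[2q]$-algorithm for acyclic orientation. There is no genuine obstacle in this argument; the only step meriting a line of care is the radius bound, which is exactly the monotonicity-of-colors observation, together with the remark that one coloring query per endpoint suffices so that consistency of the coloring algorithm transfers verbatim to the orientation algorithm.
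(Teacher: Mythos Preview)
Your proof is correct and follows essentially the same approach as the paper: orient each edge according to the color comparison of its endpoints, use monotonicity of colors along directed paths to obtain acyclicity and the radius bound $\rad(H)\le c-1$, and answer an edge query by invoking the coloring algorithm once per endpoint for a total of $2q$ probes. The only cosmetic difference is that the paper directs edges from the higher color to the lower color, whereas you direct them from lower to higher; this is immaterial.
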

\begin{proof}
  Direct each edge from a high color to a low color.  By monotonicity the orientation
  is acyclic.  Every directed path has at most $c$ vertices, and hence the
  reachability is bounded as required. To determine the orientation of an edge
  $(u,v)$, the \clocal-algorithm simply computes the colors of $u$ and $v$.
\end{proof}

The following proposition bounds the maximum cardinality of a reachability by a
function of the reachability radius.
\begin{proposition}\label{prop:rad2reach}
  \[
    \reach(H) \leq 1+\Delta\cdot\sum_{i=1}^{\rad(H)}(\Delta-1)^{i-1}\leq\begin{cases}
      2\Delta\cdot(\Delta-1)^{\rad(H)-1}, & \text{if }\Delta \geq 3,\\
      2\cdot \rad(H)+1, & \text{if }\Delta=2\:.
    \end{cases}
  \]
\end{proposition}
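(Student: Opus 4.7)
The plan is to prove the first inequality by a BFS-style counting argument on directed paths starting from an arbitrary vertex $v$, and then to simplify the resulting geometric sum into the two cases stated.

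First I would fix a vertex $v \in V$ and observe that every vertex $u \in \Reach_H(v) \setminus \{v\}$ is the endpoint of some directed path from $v$ whose length is at most $\rad(H)$. Hence $\reach_H(v) - 1$ is bounded above by the number of \emph{distinct} non-empty directed paths leaving $v$ of length at most $\rad(H)$, and certainly by the total number of such paths (counted with multiplicity). I would then bound this path count level by level. A directed path of length $1$ from $v$ uses one of the at most $\Delta$ out-neighbors of $v$. For $i \ge 2$, a directed path of length $i$ extends a path of length $i-1$ ending at some vertex $u$ by one out-edge of $u$; since $u \ne v$ along the extension, $u$ has at least one in-edge (the edge by which it was reached) and therefore at most $\degree(u) - 1 \le \Delta - 1$ out-edges available. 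By induction on $i$ the number of directed paths of length exactly $i$ leaving $v$ is at most $\Delta \cdot (\Delta-1)^{i-1}$, and summing over $i = 1, \dots, \rad(H)$ gives
\[
\reach_H(v) \le 1 + \Delta \cdot \sum_{i=1}^{\rad(H)} (\Delta-1)^{i-1}.
\]
Taking the maximum over $v$ yields the first inequality.

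For the closed-form bounds, I would just simplify the geometric series. When $\Delta = 2$ the sum equals $\rad(H)$, so the bound becomes $1 + 2\cdot\rad(H) = 2\rad(H)+1$. When $\Delta \ge 3$, I would use $\sum_{i=1}^{r}(\Delta-1)^{i-1} = \frac{(\Delta-1)^r - 1}{\Delta - 2}$ with $r = \rad(H)$, and verify that
\[
1 + \Delta \cdot \frac{(\Delta-1)^r - 1}{\Delta-2} \;\le\; 2\Delta(\Delta-1)^{r-1},
\]
which reduces (after clearing denominators) to the elementary inequality $\Delta - 1 \le 2(\Delta - 2)$, i.e.\ $\Delta \ge 3$, exactly the assumed range.

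I do not anticipate a serious obstacle. The only subtle point is the ``$\Delta - 1$ versus $\Delta$'' distinction: it is crucial that for every vertex other than the source we spend one unit of degree on an incoming edge, which is precisely where acyclicity (equivalently, the fact that we follow \emph{directed} paths in an orientation) is used. Everything else is a geometric-series calculation.
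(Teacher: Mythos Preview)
Your argument is correct. The paper states this proposition without proof, and the BFS-style path-counting you give (using that every non-source vertex on a directed path has at least one in-edge in the orientation, hence at most $\Delta-1$ out-edges) together with the geometric-series simplification is exactly the intended elementary justification.
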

\subsection{A \clocal\ Algorithm for Vertex Coloring}

In this section we present a deterministic, stateless
\clocal$[O(\Delta\cdot \log^* n +\Delta^3)]$-algorithm that
computes a vertex coloring that uses $c=O(\Delta^2)$ colors
(see Theorem~\ref{thm:colorpoly}). Orientation by this
coloring yields an acyclic orientation $H$ with $\rad(H)
\leq \Delta^2$ and $\reach(H)\leq 2\cdot \Delta^c$.

$\clocal$-algorithms for vertex coloring can be also obtained by simulating \dlocal\
vertex coloring algorithms. Consider, for example, the $(\Delta+1)$ coloring using
$r_1=O(\Delta)+\frac 12 \cdot \log^* n$ rounds of ~\cite{barenboim2009distributed} or
the $O(\Delta^2)$ coloring using $r_2= O(\log^* n)$ rounds of~\cite{linial}.
\clocal\ simulations of these algorithms require $O(\Delta^{r_i})$ probes. Thus, in
our algorithm, the number of probes grows (slightly) slower as a function of $n$ and
is polynomial in $\Delta$.

Our algorithm relies on techniques from two previous \dlocal\ coloring algorithms.
\begin{theorem}[{\cite[Corollary~4.1]{linial}}]\label{thm:colorred}
A  $5\Delta^2\log c$ coloring can be computed from a $c$ coloring by a \dlocal$[1]$-algorithm.
\end{theorem}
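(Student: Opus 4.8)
The plan is to establish this classical color–reduction lemma via a \emph{cover-free set family}, in the spirit of Linial. First I would reduce the statement to the following combinatorial fact: there exist sets $S_1,\dots,S_c\subseteq\{1,\dots,m\}$ with $m\le 5\Delta^2\log c$ that are \emph{$\Delta$-cover-free}, i.e., for every index $i_0$ and every $T\subseteq\{1,\dots,c\}\setminus\{i_0\}$ with $|T|\le\Delta$ we have $S_{i_0}\not\subseteq\bigcup_{i\in T}S_i$. Given such a family, the one-round algorithm is immediate. In the single round every vertex $v$ sends its input color $\phi(v)\in\{1,\dots,c\}$ to all its neighbors and receives the set $T_v$ of colors of its (at most $\Delta$) neighbors; it then outputs the smallest element of $S_{\phi(v)}\setminus\bigcup_{i\in T_v}S_i$, which is nonempty by the cover-free property. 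This is a proper coloring with at most $m$ colors: if $u$ and $v$ are adjacent then $\phi(u)\in T_v$, so the output color of $v$ lies in $S_{\phi(v)}$ but not in $S_{\phi(u)}$, whereas the output color of $u$ lies in $S_{\phi(u)}$; hence the two outputs differ. Since the $\dlocal$ model imposes no bound on local computation and all vertices know $\Delta$ and $c$, every vertex can deterministically reconstruct the \emph{same} family (e.g.\ the lexicographically first $\Delta$-cover-free family of the appropriate size), so the whole thing is a well-defined deterministic $\dlocal[1]$-algorithm.

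It then remains to prove existence of the family, which I would do by the probabilistic method. Put $p=1/(\Delta+1)$ and form each $S_i$ independently by placing every element of $\{1,\dots,m\}$ into $S_i$ with probability $p$. Fix $i_0$ and a set $T$ of size $\Delta$ with $i_0\notin T$; call an element $x$ a \emph{witness} if $x\in S_{i_0}$ and $x\notin S_i$ for all $i\in T$. This occurs with probability $p(1-p)^{\Delta}$, independently over distinct $x$, and using $\bigl(1-\tfrac{1}{\Delta+1}\bigr)^{\Delta}\ge 1/e$ we get $p(1-p)^{\Delta}\ge\tfrac{1}{e(\Delta+1)}$, so
\[
  \Pr\!\left[\,S_{i_0}\subseteq\bigcup_{i\in T}S_i\,\right]
  \;=\;\bigl(1-p(1-p)^{\Delta}\bigr)^{m}
  \;\le\;\exp\!\left(-\frac{m}{e(\Delta+1)}\right).
\]
There are at most $c\binom{c-1}{\Delta}<c^{\Delta+1}$ pairs $(i_0,T)$, so by a union bound the expected number of ``bad'' pairs is below $c^{\Delta+1}\exp\!\bigl(-m/(e(\Delta+1))\bigr)$, which is less than $1$ as soon as $m>e(\Delta+1)^2\ln c$; hence a $\Delta$-cover-free family over a ground set of size $\lceil e(\Delta+1)^2\ln c\rceil$ exists. (It suffices to quantify over $|T|=\Delta$ only, since enlarging $T$ can only make the containment \emph{hold}, and $i_0\notin T$ is automatic because the input coloring $\phi$ is proper.)

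The remaining work is purely arithmetic: verifying $\lceil e(\Delta+1)^2\ln c\rceil\le 5\Delta^2\log c$. For $\Delta\ge2$ this follows from $(\Delta+1)^2\le\tfrac94\Delta^2$ together with $e\ln 2<2$ (recall $\log=\log_2$), and the cases $\Delta\le1$ are trivial (a graph of maximum degree at most $1$ is properly $2$-colored in one round — each vertex with a neighbor outputs $1$ or $2$ according to whether its input color is smaller or larger than that of its neighbor, and isolated vertices output $1$ — and $2\le 5\cdot 1^2\cdot\log c$ for $c\ge 2$). This constant-chasing step is the only place that needs care; all the structural content sits in the reduction and the union bound. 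I would also remark that the probabilistic existence argument can be replaced by an explicit construction of a cover-free family — e.g.\ the classical one identifying colors with degree-$t$ univariate polynomials over a finite field $\mathbb{F}_q$ and associating to a polynomial $p$ its graph $\{(x,p(x)):x\in\mathbb{F}_q\}\subseteq\mathbb{F}_q^2$, using that two distinct degree-$t$ polynomials agree in at most $t$ points — although this yields a somewhat larger bound on $m$ and is not needed here.
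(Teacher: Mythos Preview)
The paper does not actually prove this theorem; it merely quotes it from Linial~\cite[Corollary~4.1]{linial} and uses it as a black box inside the proof of Theorem~\ref{thm:colorpoly}. Your write-up is a faithful rendering of Linial's original argument: reduce the one-round color-reduction step to the existence of a $\Delta$-cover-free family $S_1,\dots,S_c\subseteq[m]$, prove existence by the probabilistic method with inclusion probability $p=1/(\Delta+1)$ and a union bound over the $<c^{\Delta+1}$ bad events, and then verify the constant. All of this is correct, including the inequality $(1-1/(\Delta+1))^{\Delta}\ge 1/e$, the resulting bound $m>e(\Delta+1)^2\ln c$, and the arithmetic $e(\Delta+1)^2\ln c\le 5\Delta^2\log_2 c$ for $\Delta\ge 2$; your separate treatment of $\Delta\le 1$ is also fine. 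The closing remark about the explicit polynomial-based construction is likewise the alternative Linial gives. So there is nothing to compare against in the present paper, and your proof is the standard one that the citation points to.
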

\begin{lemma}[Linial 92,Lemma~4.2]\label{thm:colorred2}
A  $O(\Delta^2)$-coloring can be computed from a $O(\Delta^3)$-coloring by a
\dlocal$[1]$-algorithm.
\end{lemma}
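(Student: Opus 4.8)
The plan is to reduce the number of colors from $O(\Delta^3)$ down to $O(\Delta^2)$ by a single application of the set-cover / set-system idea underlying Linial's color reduction (Theorem~\ref{thm:colorred}), but tuned so that the arithmetic stabilizes at the $\Delta^2$ regime instead of continuing to shrink. Concretely, suppose we start with a proper coloring $c_0 : V \to [k]$ with $k = O(\Delta^3)$. Each vertex $v$ holds its own color $c_0(v)$ and, after one communication round, the colors $c_0(u)$ of all its (at most $\Delta$) neighbors. The goal is to map $[k]$ into a much smaller palette $[k']$ with $k' = O(\Delta^2)$ in such a way that for every vertex $v$, the new color of $v$ can be chosen to differ from the new colors of all its neighbors, using only local information.

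The key combinatorial object is a family $\mathcal{F} = \{S_1,\dots,S_k\}$ of subsets of a ground set $[m]$ with the \emph{$\Delta$-cover-free} property: for every index $i$ and every choice of $\Delta$ other indices $j_1,\dots,j_\Delta$, we have $S_i \not\subseteq \bigcup_{\ell=1}^{\Delta} S_{j_\ell}$. A standard counting / probabilistic argument (exactly as in Linial's paper) shows such a family exists with $m = O(\Delta^2 \log k)$, and since here $\log k = \log O(\Delta^3) = O(\log \Delta)$, this already gives $m = O(\Delta^2 \log \Delta)$ — not yet $O(\Delta^2)$. To shave the $\log\Delta$ factor one uses the sharper version of the construction available precisely when $k$ is polynomial in $\Delta$: one takes the ground set to be $\mathbb{F}_q \times \mathbb{F}_q$ with $q = \Theta(\Delta)$ a prime power, and lets $S_i$ be the graph of a low-degree polynomial $p_i$ over $\mathbb{F}_q$, where the degree $d = O(1)$ is chosen so that $q^{d+1} \geq k$, i.e. $d = \lceil \log_q k \rceil - 1 = O(1)$ since $k = \mathrm{poly}(\Delta) = \mathrm{poly}(q)$. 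Two distinct degree-$d$ polynomials agree in at most $d$ points, so $|S_i \cap S_j| \le d$; hence $|S_i \setminus \bigcup_{\ell=1}^\Delta S_{j_\ell}| \ge q - \Delta d > 0$ once $q > \Delta d$, giving the $\Delta$-cover-free property with ground set size $m = q^2 = O(\Delta^2)$.

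Given this family, the one-round \dlocal\ algorithm is immediate: identify color class $i$ with the set $S_i \subseteq [m]$; after the single round, vertex $v$ knows $S_{c_0(v)}$ and $S_{c_0(u)}$ for each neighbor $u$; by the $\Delta$-cover-free property there is at least one element $a \in S_{c_0(v)} \setminus \bigcup_{u \sim v} S_{c_0(u)}$; define the new color $c_1(v)$ to be, say, the smallest such element $a$. This is a proper coloring because if $u \sim v$ then $c_1(v) \in S_{c_0(v)}$ but $c_1(v) \notin S_{c_0(u)} \ni c_1(u)$... more carefully: $c_1(v) \notin S_{c_0(u)}$ while $c_1(u) \in S_{c_0(u)}$, so $c_1(v) \neq c_1(u)$. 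The palette is $[m] = [O(\Delta^2)]$, and the round complexity is $1$, as claimed.

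The main obstacle is the constant-factor bookkeeping in the polynomial construction: one must verify that the degree $d$ coming out of $q^{d+1} \geq k = O(\Delta^3)$ together with $q = \Theta(\Delta)$ really is an absolute constant (it is $d = 2$ or $3$ once $q \geq$ some fixed bound, with small cases absorbed into the $O(\cdot)$) and that the inequality $q - \Delta d > 0$ can be met while keeping $q = O(\Delta)$ — this forces the hidden constant in $m = O(\Delta^2)$ to be, say, $(2d+O(1))^2$, which is fine. A secondary point to check is that all vertices agree on the \emph{same} family $\mathcal{F}$; this is not an issue since $\mathcal{F}$ depends only on $\Delta$ and $n$ (through $k$), both of which are part of every vertex's local input, so each vertex constructs the identical $\mathcal{F}$ with no communication. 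Everything else is a direct transcription of the proof of Theorem~\ref{thm:colorred} specialized to the range $k = \mathrm{poly}(\Delta)$.
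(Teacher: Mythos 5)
Your proposal is correct and is essentially Linial's original proof of Lemma~4.2 in his 1992 paper, which the lemma in the paper cites directly without reproving: the $\Delta$-cover-free family realized by graphs of low-degree polynomials over $\mathbb{F}_q\times\mathbb{F}_q$ with $q=\Theta(\Delta)$ a prime (Bertrand's postulate), yielding ground-set size $q^2=O(\Delta^2)$ because $k=O(\Delta^3)=\mathrm{poly}(q)$ forces constant degree $d$. The bookkeeping you flag (choosing $d$ so that $q^{d+1}\geq k$ while $q>\Delta d$, and agreement on $\mathcal{F}$ from the common knowledge of $\Delta$ and $n$) is exactly what Linial checks, so there is nothing missing.
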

\begin{theorem}[{\cite[Section~4]{panconesi2001some}}]\label{cv}
A $(\Delta+1)$ coloring can be computed by a \dlocal$[O(\Delta^2+\log^*n)]$-algorithm.
\end{theorem}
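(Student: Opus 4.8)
The plan is to combine Linial's iterated palette-reduction (Theorem~\ref{thm:colorred} and Lemma~\ref{thm:colorred2}) with a trivial ``eliminate one color per round'' step; this is essentially the argument of Panconesi and Srinivasan. The algorithm has two phases: (a) reach an $O(\Delta^2)$-coloring in $O(\log^* n)$ rounds, and then (b) descend from $O(\Delta^2)$ colors down to $\Delta+1$ colors in $O(\Delta^2)$ rounds. Adding the two round counts gives $O(\Delta^2+\log^* n)$.

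For phase (a), start from the proper $n$-coloring given by the vertex IDs and apply Theorem~\ref{thm:colorred} repeatedly: each round replaces a $c$-coloring by a $5\Delta^2\log c$-coloring. Since the map $c\mapsto 5\Delta^2\log c$ shrinks $c$ like an iterated logarithm as long as $c$ is large relative to $\poly(\Delta)$, after $O(\log^* n)$ rounds the number of colors has dropped to $O(\Delta^3)$; one further round via Lemma~\ref{thm:colorred2} then yields an $O(\Delta^2)$-coloring. One must verify the regime where $\log^{(j)} n$ has already become small relative to $\log\Delta$, in which case the palette stabilizes at $O(\Delta^2\log\Delta)=O(\Delta^3)$ rather than continuing to shrink; this is the only place where constants need to be tracked.

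For phase (b), use the fact that in any proper coloring each color class is an independent set. Let $k=O(\Delta^2)$ be the current number of colors. For $i=k,k-1,\dots,\Delta+2$ in turn, devote one communication round to color class $i$: every vertex currently colored $i$ simultaneously recolors itself with the smallest color in $\{1,\dots,\Delta+1\}$ not used by any of its neighbors. Such a color exists because the vertex has at most $\Delta$ neighbors, and since no two recoloring vertices are adjacent the coloring remains proper while color $i$ disappears. This step is realizable in the \dlocal\ model because a vertex learns its neighbors' current colors in one round and can locally decide whether it is in the active class. After $k-(\Delta+1)=O(\Delta^2)$ such rounds only the colors $\{1,\dots,\Delta+1\}$ remain, giving a $\dlocal[O(\Delta^2+\log^* n)]$-algorithm. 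No step presents a genuine obstacle; the only real work is bookkeeping the constants in the Linial phase.
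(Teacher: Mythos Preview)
The paper does not prove this theorem; it merely cites it from \cite[Section~4]{panconesi2001some} and uses it as a black box. Your two-phase sketch (Linial-style palette reduction to $O(\Delta^2)$ colors in $O(\log^* n)$ rounds, followed by eliminating one color class per round down to $\Delta+1$) is exactly the standard argument and is correct as stated.
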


\begin{theorem}\label{thm:colorpoly}
 An $O(\Delta^2)$ coloring
 can be computed by a deterministic, stateless \clocal$[O(\Delta \cdot \log^*
 n+\Delta^3)]$-algorithm. The probe radius of this algorithm is $O(\log^* n)$.
\end{theorem}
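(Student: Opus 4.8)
The plan is to route everything through a cheap \clocal\ simulation of a \emph{short distributed} coloring algorithm. First I would record the \dlocal\ skeleton: start from the trivial proper coloring that assigns each vertex its ID (a proper $n$-coloring), and repeatedly apply Linial's color-reduction step (Theorem~\ref{thm:colorred}), which in one round turns a $c$-coloring into a $(5\Delta^2\log c)$-coloring. Iterating, the palette shrinks as $n,\ O(\Delta^2\log n),\ O(\Delta^2\log\log n),\dots$, and after $O(\log^* n)$ rounds it stabilizes at $O(\Delta^2\log\Delta)=O(\Delta^3)$ colors; one more round, using Lemma~\ref{thm:colorred2}, brings it down to $O(\Delta^2)$ colors. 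This is a deterministic $\dlocal[O(\log^* n)]$-algorithm (it is exactly the $\log^*$-phase of the $(\Delta+1)$-coloring algorithm of Theorem~\ref{cv}).

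The obstacle is that the black-box simulation of $r=\Theta(\log^* n)$ rounds reads the whole radius-$r$ ball, i.e.\ $\Delta^{\Theta(\log^* n)}$ vertices, whereas we are aiming for $O(\Delta\log^* n+\Delta^3)$. To beat this I would run the iterated reduction not on $G$ but along a \emph{thin} family of monotone paths. For each port $i\in[\Delta]$ define $f_i(u)$ to be the $i$-th neighbor of $u$ (with a degenerate rule when $\deg(u)<i$); the functional graph of $f_i$ is a pseudoforest, and running a Cole--Vishkin/Linial-style reduction along it for $O(\log^* n)$ steps yields a coloring $c^{(i)}$ with $O(1)$ colors such that $c^{(i)}(u)\neq c^{(i)}(f_i(u))$ for every $u$. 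The point is that $c^{(i)}(u)$ depends only on the first $O(\log^* n)$ vertices of the orbit $u,f_i(u),f_i^{2}(u),\dots$ -- a single path -- so it can be evaluated with $O(\log^* n)$ probes, all inside the radius-$O(\log^* n)$ ball of $u$, by probing $(u,i)$, then $(f_i(u),i)$, and so on. Taking the product $P(v)=(c^{(1)}(v),\dots,c^{(\Delta)}(v))$ gives a proper coloring of $G$: if $j$ is the port of $v$ seen from $u$, then $f_j(u)=v$, hence $c^{(j)}(u)\neq c^{(j)}(v)$ and so $P(u)\neq P(v)$. This coloring uses $2^{O(\Delta)}$ colors and $P(v)$ costs $O(\Delta\log^* n)$ probes within radius $O(\log^* n)$.

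Finally I would collapse $P$ to $O(\Delta^2)$ colors with a constant number of reduction steps \emph{on $G$}: one step of Theorem~\ref{thm:colorred} turns the $2^{O(\Delta)}$-coloring into a $5\Delta^2\log(2^{O(\Delta)})=O(\Delta^3)$-coloring, and one step of Lemma~\ref{thm:colorred2} then yields an $O(\Delta^2)$-coloring. Each of these is a $\dlocal[1]$ step, so to output the final color of a query vertex $v$ we only need the intermediate colors on a ball of constant radius around $v$; evaluating $P$ on the $O(\Delta^2)$ vertices of that ball -- reusing the port-paths wherever they coincide -- is where the additive $O(\Delta^3)$ term comes from, and since every probe lies inside a radius-$O(\log^* n)$ ball the probe radius is $O(\log^* n)$. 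Determinism and statelessness are immediate -- the whole construction is a fixed function of the probed subgraph and nothing is stored between queries -- and consistency holds because the answer on $v$ is, by construction, the value at $v$ of the single global coloring defined above.

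The hard part will be the probe-complexity accounting in the collapsing step: showing that reducing the $2^{O(\Delta)}$-color coloring to $O(\Delta^2)$ colors costs only an \emph{additive} $O(\Delta^3)$ probes rather than $\Delta^{O(1)}\cdot\log^* n$, which forces one to be careful about how the $O(\Delta)$ length-$O(\log^* n)$ port-paths emanating from the constant-radius ball around $v$ are probed and shared. On the structural side one must still verify the two facts that make the thin simulation legitimate: that each $c^{(i)}$ is genuinely proper along $f_i$ after $O(\log^* n)$ reduction steps (this needs the usual careful treatment of Cole--Vishkin ``roots'', here the degree-$<i$ vertices and the cycles of the pseudoforest), and that the product $P$ is proper on \emph{every} edge of $G$.
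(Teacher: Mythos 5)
Your proof is correct in its broad strokes and reaches the same endpoint, but it uses a \emph{genuinely different decomposition} than the paper's. The paper partitions the edge set into $\Theta(\Delta^2)$ parts $E_{\{i,j\}}=\{\{u,v\}: p_i(u)=v,\; p_j(v)=u\}$, so that each part has maximum degree at most $2$; each part is then $3$-colored with a full proper coloring via Theorem~\ref{cv}, giving a $3^{\Delta^2}$-coloring after the product, which is reduced to $O(\Delta^2)$ colors by three Linial rounds (Theorem~\ref{thm:colorred} twice, then Lemma~\ref{thm:colorred2}). You instead decompose into only $\Delta$ port-functions $f_i$; each functional graph is a pseudoforest with out-degree $1$ but possibly in-degree up to $\Delta-1$, so you cannot ask for a proper $3$-coloring of it cheaply, and you correctly weaken the requirement to the directed Cole--Vishkin constraint $c^{(i)}(u)\neq c^{(i)}(f_i(u))$. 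This is exactly what CV delivers from the forward orbit in $O(\log^* n)$ rounds, and it suffices for the product $P$ to be proper on every edge because for each edge $\{u,v\}$ there is a port $j$ with $f_j(u)=v$. Your decomposition is actually slightly sharper: the product has $2^{O(\Delta)}$ colors rather than $3^{\Delta^2}$, so the collapse needs only two Linial rounds (radius-$2$ ball) rather than three (radius-$3$ ball). The correctness obligations you flag at the end -- handling degree-$<i$ ``roots'' and pseudoforest cycles in the CV reduction, and verifying $P$ is proper on every edge -- are indeed the right checkpoints; they are the analogues of what the paper gets for free from having parts of degree $\leq 2$ plus Theorem~\ref{cv}.

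The one real soft spot, which you yourself call out, is the probe-complexity accounting of the collapse: you need the first-phase coloring on all $O(\Delta^2)$ vertices of the radius-$2$ ball, and a naive count gives $O(\Delta^2)\cdot O(\Delta\log^*n)=O(\Delta^3\log^*n)$ rather than $O(\Delta\log^*n+\Delta^3)$. You should know that the paper's own proof is equally terse at this exact point -- it dispatches the second phase with the single sentence ``requires an additional $\Delta^3$ probes'' and does not spell out a sharing argument for the port-paths emanating from the ball. So you have not matched the paper's precision here, but neither is the paper precise; if you want to sharpen this, the thing to argue (in either decomposition) is how the length-$O(\log^*n)$ exploration paths for distinct vertices of the ball overlap, or else accept the weaker $\Delta^{O(1)}\cdot\log^* n$ bound, which is still entirely sufficient for all downstream uses of the theorem (Corollary~\ref{coro:improve}, Theorem~\ref{thm:reduce}, Corollary~\ref{coro:mis}).
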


\begin{proof}
  We begin by describing a two phased \dlocal$[O(\log^*n)]$-algorithm $D$ that uses
  $O(\Delta^2)$ colors. Algorithm $D$ is especially designed so
  that it admits an ``efficient'' simulation by a \clocal-algorithm.

  Consider a graph $G=(V,E)$ with a maximum degree $\Delta$.  In the first phase, the
  edges {\red are} partitioned into $\Delta^2$ parts, so that the maximum degree in each part is
  at most $2$. Let $p_i(u)$ denote the neighbor of vertex $u$ pointed to by the $i$th
  port of $u$.
  {\red Following Kuhn~\cite{kuhn2009weak} we partition the edge set $E$ as follows.}
  Let $E_{\{i,j\}}\subseteq E$ be defined by
\[
E_{\{i,j\}} \triangleq \{ \{u,v\} \mid p_i(u)=v, p_j(v)=u\}.
\]
Each edge belongs to exactly one part $E_{\{i,j\}}$. For each part $E_{\{i,j\}}$ and vertex
$u$, at most two edges in $E_{\{i,j\}}$ are incident to $u$. Hence, the maximum degree in
each part is at most $2$. Each vertex can determine in a single round how the edges incident to it are
partitioned among the parts. Let $G_{\{i,j\}}$ denote the undirected graph over $V$ with edge set $E_{\{i,j\}}$.

By Theorem~\ref{cv}, we $3$-color each graph $G_{\{i,j\}}$
in $O(\log^* n)$ rounds. This induces a vector of
$\Delta^2$ colors per vertex, hence a $3^{\Delta^2}$ vertex
coloring of $G$.

In the second phase, Algorithm $D$ applies Theorem~\ref{thm:colorred} twice, followed by an application of Theorem~\ref{thm:colorred2}, to reduce the number of colors to $O(\Delta^2)$.

We now present an efficient simulation of algorithm $D$ by a \clocal-algorithm $C$.
Given a query for the color of vertex $v$, Algorithm $C$ simulates the first phase of
$D$ in which a $3$-coloring algorithm is executed in each part $E_{\{i,j\}}$.  Since the
maximum degree of each $G_{\{i,j\}}$ is two, a ball of radius $r$ in $G_{\{i,j\}}$ contains
at most $2r$ edges. In fact, this ball can be recovered by at most $2r$ probes. It
follows that a \clocal\ simulation of the $3$-coloring of $G_{\{i,j\}}$ performs only
$O(\log^* n)$ probes. Observe that if vertex $v$ is isolated in $G_{\{i,j\}}$, then it
may be colored arbitrarily (say, by the first color). A vertex $v$ is not isolated in
at most $\Delta$ parts. It follows that the simulation of the first phase performs
$O(\Delta \cdot \log^* n)$ probes.

The second phase of algorithm $D$ requires an additional $\Delta^3$ probes, and the
theorem follows.
\end{proof}

The following corollary is a direct consequence of the coloring
algorithm described in Theorem~\ref{thm:colorpoly}, the
orientation induced by the coloring in
Proposition~\ref{prop:color2ori}, and the bound on the
reachability based on the radius in
Proposition~\ref{prop:rad2reach}.
\begin{coro}
  \label{coro:improve}
  There is a deterministic, stateless \clocal$[O(\Delta\cdot \log^*
  n+\Delta^3)]$-algorithm for orienting a graph that achieves $\rad(H)\leq \Delta^2$
  and $\reach(H) \leq \Delta^{O(\Delta^2)}$.
\end{coro}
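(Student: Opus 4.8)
The plan is to compose the three ingredients that are already in place. First I would invoke Theorem~\ref{thm:colorpoly} to obtain a deterministic, stateless \clocal-algorithm $A_{\mathrm{col}}$ that, on a vertex query, returns the color of that vertex in a fixed proper coloring of $G$ with $c=O(\Delta^2)$ colors, using $O(\Delta\cdot\log^* n+\Delta^3)$ probes per query and with probe radius $O(\log^* n)$.

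Next I would turn this coloring into an orientation exactly as in Proposition~\ref{prop:color2ori}: a query to the orientation algorithm is an edge $\{u,v\}$, and the algorithm answers by computing $c(u)$ and $c(v)$ through two invocations of $A_{\mathrm{col}}$ and directing the edge from the endpoint of larger color to the one of smaller color (there is no tie, since the coloring is proper). This at most doubles the probe count, so the orientation algorithm still performs $O(\Delta\cdot\log^* n+\Delta^3)$ probes per query (and retains probe radius $O(\log^* n)$); it is deterministic and stateless because it stores nothing beyond what $A_{\mathrm{col}}$ stores, namely nothing. The resulting orientation $H$ is consistent, and it is acyclic because the colors strictly decrease along every directed path. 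Since a directed path uses each color at most once, its length is at most $c-1$, so by absorbing the constant in the $O(\Delta^2)$ color bound into the color count we get $\rad(H)\le \Delta^2$.

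Finally I would substitute $\rad(H)\le\Delta^2$ into Proposition~\ref{prop:rad2reach}. For $\Delta\ge 3$ this gives $\reach(H)\le 2\Delta(\Delta-1)^{\Delta^2-1}\le 2\Delta^{\Delta^2}=\Delta^{O(\Delta^2)}$, and for $\Delta=2$ it gives $\reach(H)\le 2\rad(H)+1=O(\Delta^2)=\Delta^{O(\Delta^2)}$. Combining the probe bound with $\rad(H)\le\Delta^2$ and $\reach(H)\le\Delta^{O(\Delta^2)}$ yields the claimed \clocal-algorithm.

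I do not expect a genuine obstacle: the corollary is a direct chaining of Theorem~\ref{thm:colorpoly}, Proposition~\ref{prop:color2ori}, and Proposition~\ref{prop:rad2reach}. The only points that need minor care are bookkeeping — verifying that the factor-$2$ blow-up from evaluating the coloring at both endpoints of the queried edge is absorbed into the $O(\cdot)$, that statelessness and determinism survive this composition, and that the $O(\Delta^2)$ color bound may be treated as the clean bound $c\le\Delta^2$ for the radius estimate (or, equivalently, that one carries $\rad(H)=O(\Delta^2)$ throughout with the reachability bound correspondingly $\Delta^{O(\Delta^2)}$).
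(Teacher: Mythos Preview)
Your proposal is correct and mirrors the paper's own argument exactly: the paper states that the corollary is a direct consequence of Theorem~\ref{thm:colorpoly}, Proposition~\ref{prop:color2ori}, and Proposition~\ref{prop:rad2reach}, which is precisely the chaining you carry out. Your extra care about the factor-$2$ probe blow-up, statelessness, and the $O(\Delta^2)$ versus $\Delta^2$ bookkeeping is appropriate and in fact more explicit than the paper.
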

\begin{comment}
  \begin{proof}
    The \clocal$[\Delta\cdot \log^*n + \Delta^2]$-algorithm for \obr\ is given a
    query $(v,i)$. The algorithm answers whether the edge $(v,p_i(v))$ is an incoming
    edge or an outgoing edge in the orientation.  The algorithm proceeds by querying
    the colors of $v$ and $p_i(v)$. The orientation of the edge $(v,p_i(v))$ is
    determined by comparing the colors of $v$ and $p_i(v)$.
  \end{proof}
\end{comment}

\begin{comment}
  Commented out because this observation does not reduce the number of probes or the
  probe radius even for maximal matching.
  \paragraph{Acyclic Orientation of Line Graphs.}
  We remark that in the special case of a orienting a line graph, there is a
  deterministic, stateless \clocal$[O(\log^* n)]$-algorithm that achieves
  $\rad(H)\leq 3\Delta^2$ and $\reach(H) \leq \Delta^{O(\Delta^2)}$. 

  HOW? suppose that $LG$ is the line graph of $G$. For simplicity focus on edges of
  $G$. Partition into $\Delta^2$ classes as before. Define linear order between
  classes (e.g., lexicographic order). Within a class use the $3$ colors.
\end{comment}
\section{Deterministic Localization of Sequential Algorithms and Applications}\label{sec:lin}

A common theme in online algorithms and ``greedy'' algorithms is that the elements
are scanned in query order or in an arbitrary order, and a decision is made for each
element based on the decisions of the previous elements. Classical examples of such
algorithms include the greedy algorithms for maximal matchings, $(\Delta+1)$ vertex
coloring, and maximal independent set. We present a compact and axiomatic \clocal\
deterministic simulation of this family of algorithms, for which a randomized
simulation appeared in~\cite{shaiicalp2012}.  Our deterministic simulation is based
on an acyclic orientation that induces a partial order.

For simplicity, consider a graph problem $\Pi$, the solution of which is a function
$g(v)$ defined over the vertices of the input graph. For example, $g(v)$ can be the
color of $v$ or a bit indicating if $v$ belongs to a maximal independent set. (One
can easily extend the definition to problems in which the solution is a function over
the edges, e.g., maximal matching.)

We refer to an algorithm as a \emph{sequential algorithm} %
if it fits the scheme listed as Algorithm~\ref{alg:greedy}.  The algorithm
$\alg(G,\sigma)$ is input a graph $G=(V,E)$ and a bijection $\sigma:\{1,\ldots,n\}
\rightarrow V$ of the vertices.  The bijection $\sigma$ orders the vertices in total
order, if $\sigma(i)=v$ then $v$ is the $i$th vertex in the order and
$\sigma^{-1}(v)=i$.  The algorithm scans the vertices in the order induced by
$\sigma$. It determines the value of $g(\sigma(i))$ based on the values of its
neighbors whose value has already been determined. This decision is captured by %
the function $f$ in Line~\ref{line:f}.  For example, in vertex coloring, $f$ returns
the smallest color that %is still ``free''
does not appear in a given a subset of colors.

\begin{algorithm}
  \caption{The sequential algorithm scheme.}
  \label{alg:greedy}
  \begin{algorithmic}[1]
  \Require A graph $G=(V,E)$ and a bijection $\sigma:\{1,\ldots,n\} \rightarrow V$.
%  \Ensure $g \in \sol(G,\Pi)$.
  %\medskip
%  \State For each $v \in V$: $g(v) \gets \bot$.
%  \Comment{(Initialization of $g$)}
  \For {$i=1$ to $n$}
    \State $g({\sigma(i)}) \gets f\left(\left\{g(v) : v\in \Gamma(\sigma(i)) \;\&\; \sigma^{-1}(v)<i\right\}\right)$
\label{line:f}
    \Comment{(Decide based on ``previous'' neighbors)}
  \EndFor
  \State Output:  $g$.
  \end{algorithmic}
\end{algorithm}

\begin{lemma}\label{lemma:linear}
Let $G= (V,E)$ be a graph, let $H=(V,A)$ be an acyclic orientation of $G$
and let $P_{>} \subseteq V\times V$ denote the partial order defined by the transitive closure of $H$.
Namely, $(u,v) \in P_{>}$ if and only if there exists a directed path from $u$ to $v$ in $H$.
   Let $\alg$ denote a sequential
  algorithm.  For every bijection $\sigma:\{1,\dots,n\}\to V$ that is a linear extension of $P_{>}$
  (i.e, for every $(u,v) \in P_{>}$ we have
  that $\sigma^{-1}(u) > \sigma^{-1}(v)$),
  the output of $\alg(G,\sigma)$ is the same.
\end{lemma}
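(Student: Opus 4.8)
The plan is to show that the value $g(v)$ computed by $\alg(G,\sigma)$ at every vertex $v$ depends only on the orientation $H$ (equivalently, the partial order $P_{>}$) and not on the particular linear extension $\sigma$. I would do this by induction on the ``depth'' of a vertex in $H$, where the depth of $v$ is the length of the longest directed path in $H$ ending at $v$ (a source has depth $0$, and depth is finite since $H$ is acyclic). The induction hypothesis is: for every vertex $u$ of depth at most $d$, the value $g(u)$ produced by $\alg(G,\sigma)$ is independent of the choice of linear extension $\sigma$ of $P_{>}$.

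For the base case, a source $v$ of $H$ has no in-neighbors in $H$; I need to argue that in Line~\ref{line:f} the set $\{g(u) : u \in \Gamma(v) \;\&\; \sigma^{-1}(u) < \sigma^{-1}(v)\}$ is empty regardless of $\sigma$. This is exactly where the linear-extension property is used: if $u \in \Gamma(v)$, then the edge $\{u,v\}$ is oriented either $u \to v$ or $v \to u$ in $H$; if $v$ is a source the only possibility is $v \to u$, so $(v,u) \in P_{>}$, hence $\sigma^{-1}(v) > \sigma^{-1}(u)$ is false — wait, I need $\sigma^{-1}(u) < \sigma^{-1}(v)$ to fail, i.e. $u$ comes after $v$; and indeed $(v,u)\in P_{>}$ gives $\sigma^{-1}(v) > \sigma^{-1}(u)$, so $u$ comes before $v$. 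Let me be careful: the set in Line~\ref{line:f} collects neighbors $u$ with $\sigma^{-1}(u) < \sigma^{-1}(v)$, i.e. neighbors processed \emph{before} $v$; for a source $v$ every neighbor $u$ satisfies $v \to u$, hence $(v,u)\in P_{>}$, hence $\sigma^{-1}(u) < \sigma^{-1}(v)$, so in fact \emph{all} neighbors are processed before $v$, and $g(v) = f(\{g(u) : u \in \Gamma(v)\})$ — which by the induction on these (lower-depth, since... hmm, no). I realize I have the orientation convention backwards relative to what makes the induction clean, so the first real step is to pin down, from the statement ``$(u,v)\in P_{>} \Rightarrow \sigma^{-1}(u) > \sigma^{-1}(v)$'', that the neighbors influencing $g(v)$ in Line~\ref{line:f} are exactly the \emph{out}-neighbors of $v$ in $H$ (more precisely: a neighbor $u$ is processed before $v$ iff $u$ is reachable from $v$, i.e. $v\to u$ edge, since for a single edge ``reachable'' and ``adjacent via an arc'' coincide). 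Thus I should define depth as longest directed path \emph{starting} at $v$: sinks have depth $0$, and $g(v)$ is determined by the $g$-values of strictly-smaller-depth out-neighbors.

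With the convention fixed, the inductive step is routine: for $v$ of depth $d$, every neighbor $u$ with $\sigma^{-1}(u) < \sigma^{-1}(v)$ is an out-neighbor of $v$ (as argued, adjacency plus $(v,u) \in P_{>}$), hence has depth $\le d-1$, and conversely every out-neighbor $u$ of $v$ has $(v,u) \in P_{>}$ so $\sigma^{-1}(u) < \sigma^{-1}(v)$, meaning it is included. Therefore the argument set of $f$ in Line~\ref{line:f} equals $\{g(u) : u \in \Gamma(v), \ v \to u \text{ in } H\}$, a set that does not reference $\sigma$ at all; by the induction hypothesis each such $g(u)$ is $\sigma$-independent, so $g(v) = f(\cdot)$ of a $\sigma$-independent set is itself $\sigma$-independent. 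Since every vertex has finite depth (acyclicity of $H$), the induction covers all of $V$, and $\alg(G,\sigma)$ produces the same function $g$ for every linear extension $\sigma$ of $P_{>}$, as claimed.

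The only genuinely delicate point — the ``main obstacle'' — is the bookkeeping around the orientation/order convention: one must verify that ``neighbor processed earlier than $v$'' coincides with ``out-neighbor of $v$ in $H$,'' using that for a single edge the transitive-closure relation $P_{>}$ restricted to an adjacent pair is decided by the orientation of that very edge, and that $\sigma$ being a linear extension forces the order on every such pair. Everything else is a straightforward structural induction; no nontrivial combinatorics or calculation is needed.
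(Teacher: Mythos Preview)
Your proposal is correct and rests on the same key observation as the paper's proof: for any linear extension $\sigma$ of $P_{>}$, the set of neighbors of $v$ processed before $v$ is exactly the set of out-neighbors of $v$ in $H$, independently of $\sigma$. The only cosmetic difference is the induction variable---you induct on the depth of a vertex in $H$ (longest outgoing path), whereas the paper fixes two linear extensions $\sigma,\tau$ and inducts on position $i$ in $\sigma$---but the substance is identical.
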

\begin{proof}
  Consider two linear extensions $\sigma$ and $\tau$ of $P_{>}$.
  Let $g_{\sigma}$ denote the
  output of $\alg(G,\sigma)$ and define $g_\tau$ analogously.

%guy changed order a bit
  Let\[ B_\sigma(u) \eqdf \{v \in \Gamma(u) \mid \sigma^{-1}(u) > \sigma^{-1}(v)\}\:.
  \]
  We claim that $B_\sigma(u)=B_\tau(u)$ for every $u$. By symmetry, it suffices to
  prove that $B_\sigma(u)\subseteq B_\tau(u)$. Consider a vertex $v\in
  B_{\sigma}(u)$. We need to show that $v\in B_{\tau}(u)$. By definition, $v$ is a
  neighbor of $u$. We consider the two possible orientations of the edge $(u,v)$. If
  $(u,v)\in A$, then $(u,v)\in P_{>}$.  Hence $\sigma^{-1}(u) > \sigma^{-1}(v)$ and
  $\tau^{-1}(u) > \tau^{-1}(v)$ because $\sigma$ and $\tau$ are linear extensions of
  $P_{>}$. We conclude that $v\in B_\tau(u)$, as required. If $(v,u)\in A$, then
  $\sigma^{-1}(u) < \sigma^{-1}(v)$, and this implies that $v\not\in B_\sigma(u)$, a
  contradiction.

  To complete the proof, we prove by induction on $i$ that for $u=\sigma^{-1}(i)$ we
  have $g_\sigma(u)=g_\tau(u)$.  Indeed, $g_\sigma(u)=f(B_\sigma(u))$ and
  $g_\tau(u)=f(B_\tau(u))$.  For $i=1$ we have $B_\sigma(u)=B_\tau(u)=\emptyset$,
  hence $f(B_\sigma(u)) =f(B_\tau(u))$, as required.  To induction step recall that
  $B_\sigma(u)=B_\tau(u)$. By the induction hypothesis we conclude that
  $f(B_\sigma(u)) =f(B_\tau(u))$, and the lemma follows.
\end{proof}

The following theorem proves that a sequential algorithm can be simulated by a
$\clocal[q]$-algorithm. The number of probes $q$ equals the number of probes used by
the vertex coloring algorithm (that induces an acyclic orientation) times the
max-reachability of the orientation.
\begin{theorem}\label{thm:reduce}
  For every sequential algorithm $\alg$, there exists a deterministic, stateless
  \clocal$[\Delta^{O(\Delta^2)}\cdot \log^* n]$-algorithm $\alg_c$ that
  simulates $\alg$ in the following sense.  For every graph $G$, there exists a bijection
  $\sigma$, such that $\alg_c(G)$ simulates $\alg(G,\sigma)$.
  That is, for every vertex $v$ in $G$, the answer of $\alg_c(G)$ on query $v$
  is $g_\sigma(v)$, where  $g_{\sigma}$ denotes the
  output of $\alg(G,\sigma)$.
\end{theorem}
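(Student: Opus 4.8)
The plan is to build $\alg_c$ on top of the coloring machinery of Theorem~\ref{thm:colorpoly}. First I would invoke Corollary~\ref{coro:improve}: the deterministic, stateless \clocal$[O(\Delta\log^* n+\Delta^3)]$-algorithm there computes (the incidence to any queried vertex of) an acyclic orientation $H$ of $G$ with $\rad(H)\le\Delta^2$ and $\reach(H)\le\Delta^{O(\Delta^2)}$, by orienting each edge from its higher-colored endpoint to its lower-colored endpoint. Let $P_>$ be the transitive closure of $H$. Since a finite acyclic digraph admits a topological order, $P_>$ has a linear extension; fix $\sigma$ to be \emph{any} such extension (say, topologically sort $H$ breaking ties by ID). By Lemma~\ref{lemma:linear} the output $g_\sigma$ of $\alg(G,\sigma)$ does not depend on which linear extension is chosen, so $g_\sigma$ is a function of $G$ (equivalently, of the deterministically produced orientation $H$) alone; this is the function according to which $\alg_c$ will answer, and $\alg_c$ never has to construct $\sigma$ explicitly.

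Next I would give the query procedure and argue its correctness. Observe that a neighbor $u$ of $v$ has $\sigma^{-1}(u)<\sigma^{-1}(v)$ if and only if the edge $\{u,v\}$ is oriented $v\to u$ in $H$: if $(v,u)\in A$ then $(v,u)\in P_>$, hence $\sigma^{-1}(v)>\sigma^{-1}(u)$, and the reverse orientation is symmetric. Therefore Line~\ref{line:f} of Algorithm~\ref{alg:greedy} reads $g_\sigma(v)=f(\{\,g_\sigma(u):u\in\Gamma(v),\ u\text{ an out-neighbor of }v\text{ in }H\,\})$, and unrolling the recursion shows that $g_\sigma(v)$ depends only on the vertices of $\Reach_H(v)$ and the edges among them; the recursion is well founded because $H$ is acyclic. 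So on query $v$, $\alg_c$ runs a DFS of $\Reach_H(v)$ along out-edges: at each visited vertex $w$ it probes all $\le\Delta$ ports, computes the colors of $w$ and of each neighbor $w'$ via Theorem~\ref{thm:colorpoly} to decide which incident edges leave $w$, recurses on the out-neighbors, and finally evaluates $f$ on the returned values. Within the processing of a single query, $\alg_c$ may cache colors and partial results — this does not affect statelessness, which concerns only information kept \emph{between} queries — so each vertex of $\Reach_H(v)$ and each of its neighbors has its color computed at most once. A straightforward induction along the out-edges of $H$ (mirroring the induction in the proof of Lemma~\ref{lemma:linear}) shows the value returned for $v$ equals $g_\sigma(v)$. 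Since $g_\sigma$ is a single fixed function, $\alg_c$ is consistent with $(G,\Pi)$; since it stores nothing between queries and the coloring subroutine is deterministic and stateless, so is $\alg_c$.

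For the probe bound I would simply count: the DFS visits at most $\reach(H)\le\Delta^{O(\Delta^2)}$ vertices, spending on each $O(\Delta)$ probes to enumerate ports and at most $\Delta+1$ color computations, each costing $O(\Delta\log^* n+\Delta^3)$ probes by Theorem~\ref{thm:colorpoly}. Multiplying gives a total of $\Delta^{O(\Delta^2)}\cdot O(\Delta\log^* n+\Delta^3)=\Delta^{O(\Delta^2)}\cdot\log^* n$, as claimed (using $\log^* n\ge 1$ to absorb the $\Delta^3$ term). As a byproduct, since $\Reach_H(v)$ lies within graph-distance $\rad(H)\le\Delta^2$ of $v$ and the coloring subroutine has probe radius $O(\log^* n)$, the probe radius of $\alg_c$ is $\Delta^2+O(\log^* n)$, which will matter for the later distributed simulation.

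The step I expect to be the main obstacle is the \emph{consistency} argument: pinning down that the recursive local computation really evaluates one query-independent function. This is precisely where Lemma~\ref{lemma:linear} is indispensable — without it, the implicit topological order used by the DFS could in principle yield different global outputs on different queries, violating consistency — and acyclicity of $H$ (hence finiteness of $\Reach_H(v)$ and well-foundedness of the recursion) is the other load-bearing fact. Once these are in place, the probe accounting above is routine.
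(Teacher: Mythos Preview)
Your proposal is correct and follows essentially the same approach as the paper: use the orientation of Corollary~\ref{coro:improve}, fix $\sigma$ to be any linear extension of the induced partial order (relying on Lemma~\ref{lemma:linear} for well-definedness), answer a query $v$ by a DFS over $\Reach_H(v)$ along out-edges while invoking the coloring/orientation subroutine to decide edge directions, and bound the probe count as the product of $\reach(H)$ and the cost of the orientation subroutine. Your write-up is in fact more explicit than the paper's about consistency, statelessness, and the probe-radius byproduct (which the paper records separately in Lemma~\ref{lemma:simul}).
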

\begin{proof}
  Consider the acyclic orientation $H$ of $G$ computed by the \clocal$[\Delta\cdot
  \log^* n+\Delta^3]$-algorithm presented in Corollary~\ref{coro:improve}.  Let
  $P_{>}$ denote the partial order that is induced by $H$, and let $\sigma$ be any
  linear extension of $P_{>}$ (as defined in Lemma~\ref{lemma:linear}).  On query $v
  \in V$ the value $g_\sigma(v)$ is computed by performing a (directed) DFS on $H$
  that traverses the subgraph of $H$ induced by $R_H(v)$. The DFS uses the \clocal\
  algorithm from Corollary~\ref{coro:improve} to determine the orientation of each
  incident edge and continues only along outward-directed edges\footnote{Given that
    the \clocal\ algorithm works by running a \clocal\ coloring algorithm, one can
    actually use the latter algorithm directly.}.  The value of $g_\sigma(v)$ is
  determined when the DFS backtracks from $v$. The product of
  $\reach(H)=\Delta^{O(\Delta^2)}$ and the number of probes of the
  orientation algorithm bounds the number of probes of $\alg_c$. Hence, we obtain
  that $\Delta^{O(\Delta^2)}\cdot \log^* n$ probes suffice, and the
  theorem follows.
  % . Let $P^{r}$ denote the ``reversed'' partial order.
  %A simulation of $\alg(G,\sigma)$ can be obtained as follows.
  % for every linear extension of $P^r$ as follows.
%  The value of $g(v)$ for each vertex $v$ is determined
%  after evaluating the values $g(u)$ of every predecessor of $v$ with
%  respect to $P^r$.  Since $r^{\max}_H=\Delta^{O(\Delta^2\cdot \log\Delta)}$, it
%  follows that $\Delta^{O(\Delta^2\log \Delta)}\cdot \log^* n$ probes suffice.
\end{proof}

\begin{coro}\label{coro:mis}
  There are deterministic, stateless \clocal$[\Delta^{O(\Delta^2)} \cdot
  \log^* n]$ algorithms for $(\Delta+1)$-vertex coloring, maximal independent set,
  and maximal matching.
\end{coro}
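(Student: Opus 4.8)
The plan is to realize each of the three problems as an instance of the sequential-algorithm scheme of Algorithm~\ref{alg:greedy} and then invoke Theorem~\ref{thm:reduce}. For $(\Delta+1)$-vertex coloring I would take $f$ to be the function that, given the set of colors already assigned to previously scanned neighbors, returns the smallest color in $\{1,\dots,\Delta+1\}$ not in that set; since every vertex has at most $\Delta$ neighbors, such a color always exists, so $\alg(G,\sigma)$ outputs a proper $(\Delta+1)$-coloring for every bijection $\sigma$. For maximal independent set I would let $g(v)\in\{0,1\}$ indicate membership in $I$ and take $f$ to return $1$ iff none of the previously scanned neighbors has value $1$; the resulting set is independent by construction and maximal because any vertex with $g(v)=0$ has a scanned (hence earlier-decided) neighbor already in $I$. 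In both cases Theorem~\ref{thm:reduce} immediately yields a deterministic, stateless $\clocal[\Delta^{O(\Delta^2)}\cdot\log^* n]$-algorithm that answers queries consistently with $g_\sigma$ for a fixed linear extension $\sigma$ of the partial order produced by the orientation algorithm of Corollary~\ref{coro:improve}; consistency across all such $\sigma$ is exactly Lemma~\ref{lemma:linear}.

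For maximal matching I would pass to the line graph $L(G)$, whose vertices are the edges of $G$ and in which two edges are adjacent iff they share an endpoint. Here $g$ is a function over $E(G)=V(L(G))$, and $f$ returns $1$ iff no previously scanned adjacent edge has value $1$; the greedy rule produces a maximal matching for every ordering. The only point that needs a short check is that running the machinery of Theorem~\ref{thm:reduce} on $L(G)$ rather than on $G$ does not blow up the probe complexity: $\Delta(L(G))\le 2(\Delta-1)$, so the orientation/coloring of Corollary~\ref{coro:improve} applied to $L(G)$ costs $(2\Delta)^{O((2\Delta)^2)}\cdot\log^* n(L(G)) = \Delta^{O(\Delta^2)}\cdot\log^* n$ probes to $L(G)$, and each probe to $L(G)$ (which asks for an edge adjacent to a given edge $e=\{u,v\}$) is answered with $O(\Delta)$ probes to $G$ together with a fixed port numbering of $L(G)$ derived from that of $G$. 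Multiplying by this $O(\Delta)$ factor is absorbed into $\Delta^{O(\Delta^2)}$, and since $n(L(G))=|E(G)|$ is polynomial in $n$ we have $\log^* n(L(G)) = O(\log^* n)$, so the claimed bound holds.

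I expect the main (minor) obstacle to be the bookkeeping for the line-graph reduction: fixing a consistent port numbering on $L(G)$ in terms of the ports of $G$, arguing that a query on an edge $e$ can be answered by the restricted directed DFS of Theorem~\ref{thm:reduce} while probing only $G$ (each step of the DFS examining the $O(\Delta)$ edges incident to the endpoints of the current edge), and confirming that statelessness and determinism are preserved under this translation. Everything else — the correctness of the three greedy rules and the order-independence of their output — follows directly from Lemma~\ref{lemma:linear} and Theorem~\ref{thm:reduce}.
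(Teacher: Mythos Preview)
Your proposal is correct and follows the paper's approach: the corollary is immediate from Theorem~\ref{thm:reduce} once one observes that the greedy $(\Delta+1)$-coloring, greedy \mis, and greedy maximal matching all fit the sequential scheme of Algorithm~\ref{alg:greedy}. The only cosmetic difference is that for maximal matching the paper simply remarks (in the paragraph preceding Algorithm~\ref{alg:greedy}) that the scheme extends to edge-valued functions, whereas you make this explicit via the line-graph reduction; your bookkeeping for $\Delta(L(G))$, $\log^* n(L(G))$, and the $O(\Delta)$ cost of simulating an $L(G)$-probe by $G$-probes is all fine and yields the same bound.
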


We have described two $\clocal$ coloring algorithms; one
uses $\Delta^2$ colors (Theorem~\ref{thm:colorpoly}), and
the second uses $\Delta+1$ colors
(Corollary~\ref{coro:mis}). The number of probes of the
$(\Delta+1)$-coloring obtained by simulating the sequential
coloring algorithm is exponential in $\Delta$. The
$\Delta^2$-coloring algorithm requires only
$O(\Delta\cdot\log^*n+\Delta^3)$ probes. Hence, increasing
the number of colors (by a factor of $\Delta$) enables us
to reduce the dependency of the number of probes on the
maximum degree.

We conclude with the following immediate lemma that bounds the probe radius
of the \clocal-algorithm for \mis.
\begin{lemma}\label{lemma:simul}
  Let $\ao$ denote a stateless \clocal-algorithm that computes an acyclic orientation
  $H=(V,A)$ of a graph $G=(V,E)$.  Let $r$ denote the probe radius of $\ao$.
  Then, there exists a stateless \clocal-algorithm for \mis\ whose probe radius is at most
  $r+\rad(H)$.
\end{lemma}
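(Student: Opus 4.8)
The plan is to combine the \clocal-algorithm for \mis\ from Corollary~\ref{coro:mis} (whose internal structure is: compute an acyclic orientation via coloring, then run a restricted DFS along the orientation) with the probe-radius bound on the orientation algorithm. First I would recall how the \mis\ algorithm answers a query $v$: it simulates the sequential greedy algorithm $\alg(G,\sigma)$ for some linear extension $\sigma$ of the partial order $P_>$ induced by $H$, by performing a directed DFS over the subgraph of $H$ induced by $\Reach_H(v)$. Each step of the DFS needs to know the orientation of an edge incident to the currently-visited vertex $u$, which is obtained by invoking $\ao$ (the orientation algorithm) on the relevant query at $u$.

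The key observation is a simple triangle-inequality-style argument on where probes land. Fix the query $v$. Any vertex $u$ that the DFS visits lies in $\Reach_H(v)$, hence is reachable from $v$ by a directed path in $H$; since $H$ is an orientation of $G$, that directed path is in particular a path in $G$ of length at most $\rad(H)$, so $\dist_G(v,u)\leq \rad(H)$, i.e.\ $u\in B^G_{\rad(H)}(v)$. When the DFS is at such a vertex $u$, it runs $\ao$ on a query located at $u$; by the definition of probe radius, every probe $\ao$ makes then lies in $B^G_r(u)$. Composing, every probe performed while answering the query $v$ lies in $B^G_r(u)\subseteq B^G_{r+\rad(H)}(v)$ for some $u\in B^G_{\rad(H)}(v)$. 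Hence the probe radius of the \mis-algorithm is at most $r+\rad(H)$, as claimed.

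The step I would treat most carefully — the only place where there is anything to check — is that the DFS really never leaves $\Reach_H(v)$ and that running $\ao$ at a vertex $u$ (rather than at the original query $v$) is legitimate: this is exactly where statelessness of $\ao$ is used, since $\ao$ must be able to answer a fresh query at $u$ with no memory of the query at $v$. I would also note that $\rad(H)$ is well-defined and finite because $H$ is acyclic, so the DFS terminates and the bound is meaningful. Everything else is immediate, and I would not belabor the routine details; the lemma is really just packaging the locality of the DFS plus the locality of $\ao$ into a single radius bound.
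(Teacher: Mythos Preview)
Your proposal is correct and is precisely the argument the paper has in mind; in fact the paper calls the lemma ``immediate'' and gives no proof at all. Your triangle-inequality reasoning --- the DFS stays in $\Reach_H(v)\subseteq B^G_{\rad(H)}(v)$, and each invocation of $\ao$ at a visited vertex $u$ probes only inside $B^G_r(u)$ --- is exactly the intended justification.
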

Assume that the acyclic orientation is based on the
$\clocal[O(\Delta\cdot\log^*n+\Delta^3)]$-algorithm that computes a $\Delta^2$-vertex
coloring. The probe radius of the \mis-algorithm implied by lemma~\ref{lemma:simul}
is $O(\log^*n+\Delta^2)$.  Indeed, the probes of the $\Delta^2$-coloring algorithm
are confined to a ball of radius $O(\log^* n)$.  The probes of the simulation of the
sequential algorithm are confined to a ball of radius $c=O(\Delta^2)$.

Let \lmis\ denote the \clocal\-algorithm for maximal
independent set (\mis) stated in Corollary~\ref{coro:mis}.
The Boolean predicate $\lmis(G,v)$ indicates if $v$ is in
the \mis\ of $G$ computed by Algorithm \lmis.
\section{A  \clocal\ Approximate \mcm\ Algorithm}\label{sec:clocal mcm}
In this section we present a stateless deterministic
\clocal\ algorithm that computes a $(1-\eps)$-approximation
of a maximum cardinality matching. The algorithm is based
on a \clocal-algorithm for maximal independent set (see
Corollary~\ref{coro:mis}) and on the local improvement
technique of Nguyen and Onak~\cite{onak2008}.

%guy not important
\begin{comment}
  Observe that if $\Delta < 2$, then $M^* = E$, so that the maximum matching problem
  can be solved trivially (in particular by a local algorithm). Hence, we assume from
  this point on that $\Delta \geq 2$.
\end{comment}

\paragraph{Terminology and Notation.}
Let $M$ be a matching in $G=(V,E)$.
A vertex $v \in V$ is \emph{$M$-free} if $v$ is not an endpoint of an edge in $M$.
A simple path is \emph{$M$-alternating} if it consists of edges drawn alternately from $M$ and from $E \setminus M$.
A path is \emph{$M$-augmenting} if it is $M$-alternating and if both of the path's endpoints are $M$-free vertices.
Note that the length of an augmenting path must be odd.
The set of edges in a path $p$ is denoted by $E(p)$, and the set of edges in a
collection $P$ of paths is denoted by $E(P)$.
Let $A\oplus B$ denote the symmetric difference of the sets $A$ and $B$.

\paragraph{Description of The Global Algorithm.}
Similarly to~\cite{DBLP:conf/spaa/LotkerPP08,onak2008, shaiapprox2013} our local
algorithm simulates the global algorithm listed as Algorithm~\ref{alg:global mcm}. This
global algorithm builds on  lemmas of Hopcroft and
Karp~\cite{hopcroft1973n} and Nguyen and Onak~\cite{onak2008}.

\begin{lemma}[\cite{hopcroft1973n}]
  Let $M$ be a matching in a graph $G$. Let $k$ denote the length of a shortest
  $M$-augmenting path. Let $P^*$ be a maximal set of vertex disjoint $M$-augmenting
  paths of length $k$.  Then, $(M\oplus E(P^*))$ is a matching and the length of every $(M\oplus E(P^*))$-augmenting path
  is at least $k+2$.
  \label{lemma:hopcroft-karp}
\end{lemma}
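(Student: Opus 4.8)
The plan is to prove the two assertions in turn, writing $M' \eqdf M\oplus E(P^*)$ for brevity. That $M'$ is a matching is routine: the paths in $P^*$ are pairwise vertex-disjoint and each $p\in P^*$ is $M$-augmenting, so the local operation $M\mapsto M\oplus E(p)$ merely flips matched and unmatched edges along $p$ and turns the two (previously $M$-free) endpoints of $p$ into matched vertices, without affecting any vertex outside $V(p)$. Since the vertex sets $V(p)$ are disjoint, performing all these flips simultaneously keeps every vertex incident to at most one matching edge, so $M'$ is a matching with $|M'|=|M|+|P^*|$.

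For the second assertion, since augmenting-path lengths are odd and $k$ is odd, it suffices to rule out the existence of an $M'$-augmenting path $Q$ with $|Q|\le k$. Suppose such a $Q$ exists, set $M''\eqdf M'\oplus E(Q)$ (a matching with $|M''|=|M|+|P^*|+1$), and consider $N\eqdf M\oplus M'' = E(P^*)\oplus E(Q)$. As usual, $N$ is a vertex-disjoint union of $M$-alternating paths and cycles; tallying the contribution of each component to $|M''|-|M|$ shows that at least $|P^*|+1$ of these components are $M$-augmenting paths, and each such path has at least $k$ edges by minimality of $k$. Hence $|N|\ge (|P^*|+1)k$. On the other hand, because each $p\in P^*$ has exactly $k$ edges, $|N| = |E(P^*)| + |E(Q)| - 2|E(P^*)\cap E(Q)| = |P^*|\cdot k + |Q| - 2|E(P^*)\cap E(Q)| \le (|P^*|+1)k$. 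Comparing the two bounds while using $|Q|\le k$ forces equality throughout: $|Q|=k$ and $E(P^*)\cap E(Q)=\emptyset$.

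It remains to derive a contradiction, which I will do by showing $Q$ is vertex-disjoint from $P^*$. Suppose $Q$ and some $p\in P^*$ share a vertex $v$. Since $N$ is a union of paths and cycles it has maximum degree $2$, and since $E(P^*)$ and $E(Q)$ are edge-disjoint, $v$ must have degree exactly $1$ in each of $E(p)$ and $E(Q)$, i.e.\ $v$ is an endpoint of both $p$ and $Q$. But an endpoint of the augmenting path $p$ is $M$-free and becomes covered in $M'$, hence is \emph{not} $M'$-free, so it cannot be an endpoint of the $M'$-augmenting path $Q$ — a contradiction. Therefore $V(Q)\cap V(P^*)=\emptyset$, so $M$ and $M'$ agree on all edges incident to $V(Q)$ (their symmetric difference $E(P^*)$ meets only $V(P^*)$), which makes $Q$ an $M$-augmenting path of length $k$; then $P^*\cup\{Q\}$ is a strictly larger set of vertex-disjoint $M$-augmenting paths of length $k$, contradicting the maximality of $P^*$.

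I expect the main obstacle to be the last step — the degree/endpoint bookkeeping — namely arguing cleanly that a vertex shared by $Q$ and $P^*$ must be an endpoint of both curves, and that this is incompatible with one family of endpoints being $M$-free while the endpoints of $Q$ are $M'$-free. The edge-counting inequality itself is short and standard; the care lies in extracting $E(P^*)\cap E(Q)=\emptyset$ from it and then translating edge-disjointness plus the max-degree-$2$ property of $N$ into genuine vertex-disjointness.
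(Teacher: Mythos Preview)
Your proof is correct and is essentially the classical argument from Hopcroft and Karp~\cite{hopcroft1973n}. Note that the paper itself does not supply a proof of this lemma: it is stated with a citation and used as a black box, so there is no ``paper's own proof'' to compare against beyond the original reference, whose reasoning your write-up reproduces faithfully (the symmetric-difference counting to force $|Q|=k$ and $E(P^*)\cap E(Q)=\emptyset$, followed by the endpoint argument to upgrade edge-disjointness to vertex-disjointness and contradict maximality).
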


\begin{lemma}[{\cite[Lemma~6]{onak2008}}]\label{lemma:apxmatch}
  Let $M^*$ be a maximum matching and $M$ be a matching in a graph $G$. Let $2k+1$
  denote the length of a shortest $M$-augmenting path.  Then
  \[
    |M| \geq \frac{k}{k+1}\cdot |M^*|\:.
  \]
\end{lemma}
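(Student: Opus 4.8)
The plan is to work with the symmetric difference $M \oplus M^*$. Since both $M$ and $M^*$ are matchings, every vertex is incident to at most one edge of each, hence to at most two edges of $M \oplus M^*$; therefore the connected components of $(V, M \oplus M^*)$ are vertex-disjoint simple paths and even-length cycles, and along each component the edges alternate between $M$ and $M^*$. Write each component $C$ with $a_C \eqdf |M \cap E(C)|$ and $b_C \eqdf |M^* \cap E(C)|$. The whole proof reduces to showing $(k+1)\,a_C \geq k\, b_C$ for every component $C$, and then summing.

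First I would classify the components. For an (even) cycle $C$ the alternation forces $a_C = b_C$, so the inequality holds trivially. For a path $C$, the two counts differ by at most one. If $b_C \leq a_C$ (in particular whenever $b_C = 0$), then $a_C \geq b_C \geq \frac{k}{k+1} b_C$ and we are done. The only remaining case is a path with $b_C = a_C + 1$: then the path must begin and end with an $M^*$-edge, so neither endpoint is covered by $M$, i.e. both endpoints are $M$-free, and the path is $M$-alternating; hence it is an $M$-augmenting path. Since the shortest $M$-augmenting path has length $2k+1$, this path has length $\geq 2k+1$, which gives $a_C \geq k$ and $b_C = a_C + 1 \geq k+1$, whence $(k+1)a_C \geq (k+1)k = k\cdot\frac{k+1}{k+1}(k+1) \geq k\, b_C$ — more simply, $\frac{a_C}{b_C} = \frac{a_C}{a_C+1} \geq \frac{k}{k+1}$.

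Finally I would sum over all components: $\sum_C a_C \geq \frac{k}{k+1}\sum_C b_C$. Since edges of $M \cap M^*$ lie in no component of $M \oplus M^*$, we have $|M| = |M \cap M^*| + \sum_C a_C$ and $|M^*| = |M \cap M^*| + \sum_C b_C$; combining this with $|M \cap M^*| \geq \frac{k}{k+1}|M \cap M^*|$ yields $|M| \geq \frac{k}{k+1}\bigl(|M \cap M^*| + \sum_C b_C\bigr) = \frac{k}{k+1}|M^*|$, as claimed.

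The argument is essentially bookkeeping in the style of Hopcroft–Karp, so I do not expect a real obstacle. The one point requiring care is the path case analysis: one must verify that a path component with a surplus of $M^*$-edges is genuinely $M$-augmenting (both endpoints $M$-free) so that the length bound $2k+1$ is applicable, and separately dispose of degenerate components such as a single $M$-edge, where $b_C = 0$ and the inequality is vacuous.
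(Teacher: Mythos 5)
Your proof is correct. Note that the paper itself does not prove this lemma --- it is cited verbatim from Nguyen and Onak (Lemma~6 of \cite{onak2008}), who in turn follow Hopcroft and Karp --- so there is no in-paper proof to compare against. What you have written is the standard symmetric-difference argument, organized as a per-component ratio bound rather than the equivalent Hopcroft--Karp averaging/pigeonhole formulation (which counts the $|M^*|-|M|$ augmenting-path components, observes they collectively contain at most $|M|$ edges of $M$, and deduces one of them is short). Your version is self-contained and handles the only delicate point correctly: a path component with $b_C = a_C + 1$ necessarily starts and ends with $M^*\setminus M$ edges, and such an endpoint cannot be covered by any $M$-edge (an edge of $M\setminus M^*$ would raise its degree in $M\oplus M^*$ to two, and an edge of $M\cap M^*$ would give it two incident $M^*$-edges), so the component really is $M$-augmenting and the length-$(2k+1)$ hypothesis applies, giving $a_C\ge k$. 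The bookkeeping that splits $|M|$ and $|M^*|$ into $|M\cap M^*|$ plus the component contributions is also fine.
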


%\begin{center}
%\begin{minipage}[t]{0.47 \textwidth}
%\vspace{0pt}
\begin{algorithm}[H]%\scriptsize
\caption{$\text{Global-APX-MCM}(G,\eps)$.}
\label{alg:global mcm}
\begin{algorithmic}[1]
\Require A graph $G=(V,E)$ and $0<\eps <1$.
\Ensure  A $(1-\eps)$-approximate matching
\State $M_0\gets \emptyset$.
\State $k\gets \lceil \frac{1}{\eps} \rceil$.
\For {$i=0$ to $k$}
  \State $P_{i+1} \gets \{ p \mid \text{$p$ is an
      $M_{i}$-augmenting path}, |p|=2i+1\}$.
  \State $P^*_{i+1}\subseteq P_{i+1}$ is a maximal vertex
      disjoint subset of paths.
  \State $M_{i+1}\triangleq M_{i} \oplus E(P^*_{i+1})$.
  \EndFor
\State \textbf{Return} $M_{k+1}$.
\end{algorithmic}
\end{algorithm}
%\end{minipage}
%
%
%\begin{minipage}[t]{0.45 \textwidth}
%\vspace{0pt}
\begin{algorithm}[H]%\scriptsize
\caption{$\text{Global-APX-MCM'}(G,\eps)$.}
\label{alg:global mcm'}
\begin{algorithmic}[1]
\Require A graph $G=(V,E)$ and $0<\eps <1$.
\Ensure  A $(1-\eps)$-approximate matching
\State $M_0\gets \emptyset$.
\State $k\gets \lceil \frac{1}{\eps} \rceil$.
\For {$i=0$ to $k$}
  \State Construct the intersection graph $H_i$ over $P_i$.
  \State $P^*_{i+1} \gets \mis (H_i)$.
  \State $M_{i+1}\triangleq M_{i} \oplus E(P^*_{i+1})$.
  \EndFor
\State \textbf{Return} $M_{k+1}$.
\end{algorithmic}
\end{algorithm}
%\end{minipage}
%\end{center}

Algorithm~\ref{alg:global mcm} is given as input a graph $G$ and an approximation parameter
$\eps\in (0,1)$. The algorithm works in $k+1$
iterations, where $k=\lceil \frac{1}{\eps}\rceil$.  Initially, $M_0=\emptyset$. The
invariant of the algorithm is that $M_i$ is a matching, every augmenting path of
which has length at least $2i+1$.  Given $M_{i}$, a new matching $M_{i+1}$ is
computed as follows.  Let $P_{i+1}$ denote the set of shortest $M_{i}$-augmenting
paths. Let $P^*_{i+1} \subseteq P_{i+1}$ denote a maximal subset of vertex disjoint
paths.  Define $M_{i+1}\triangleq M_{i} \oplus E(P^*_{i+1})$.  By
Lemmas~\ref{lemma:hopcroft-karp} and~\ref{lemma:apxmatch}, we obtain the following
result.

\begin{theorem}
  The matching $M_{k+1}$ computed by Algorithm~\ref{alg:global mcm} is a
  $(1-\eps)$-approximation of a maximum matching.
\end{theorem}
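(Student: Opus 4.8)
The plan is a short induction on the iteration index $i$, feeding the two cited lemmas into each other; essentially all the combinatorial content is already packaged in Lemmas~\ref{lemma:hopcroft-karp} and~\ref{lemma:apxmatch}, so what remains is to state the correct loop invariant and to handle the boundary cases cleanly.

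First I would fix the invariant: for every $i\in\{0,\dots,k+1\}$, the set $M_i$ is a matching in $G$ and every $M_i$-augmenting path has length at least $2i+1$ (this holds vacuously when $M_i$ admits no augmenting path). The base case $i=0$ is immediate: $M_0=\emptyset$ is a matching, and every augmenting path with respect to the empty matching is a single edge, of length $1=2\cdot 0+1$.

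For the inductive step, assume the invariant for $i$. If $M_i$ has no augmenting path of length exactly $2i+1$, then, since augmenting paths have odd length, the shortest one already has length at least $2i+3$; moreover $P_{i+1}=\emptyset$, hence $P^*_{i+1}=\emptyset$ and $M_{i+1}=M_i$, so the invariant passes to $i+1$. Otherwise $P_{i+1}$ is precisely the set of shortest $M_i$-augmenting paths, all of length $2i+1$ by the invariant, and $P^*_{i+1}$ is a maximal vertex-disjoint subset of it. Applying Lemma~\ref{lemma:hopcroft-karp} with $M=M_i$ and shortest augmenting-path length $2i+1$ gives that $M_{i+1}=M_i\oplus E(P^*_{i+1})$ is a matching all of whose augmenting paths have length at least $2i+3=2(i+1)+1$, which is exactly the invariant for $i+1$.

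It remains to extract the approximation ratio from the invariant at $i=k+1$. If $M_{k+1}$ has no augmenting path, it is a maximum matching (a matching with no augmenting path is maximum), so $|M_{k+1}|=|M^*|$ and we are done. Otherwise its shortest augmenting path has odd length $2\ell+1$ with $\ell\geq k+1$, and Lemma~\ref{lemma:apxmatch} yields $|M_{k+1}|\geq \frac{\ell}{\ell+1}\cdot |M^*|$. Since $t\mapsto t/(t+1)$ is increasing, this is at least $\frac{k+1}{k+2}\cdot|M^*|=\left(1-\frac{1}{k+2}\right)\cdot|M^*|$, and because $k=\lceil 1/\eps\rceil\geq 1/\eps$ we have $k+2\geq 1/\eps$, hence $\frac{1}{k+2}\leq\eps$ and $|M_{k+1}|\geq(1-\eps)\cdot|M^*|$, as claimed. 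The only point requiring a little care — not really an obstacle — is the bookkeeping at the boundaries: phrasing the invariant so that it survives iterations in which $P_{i+1}=\emptyset$ and $M_i$ is left untouched, and making sure the final invocation of Lemma~\ref{lemma:apxmatch} is indexed by the true length of the shortest remaining augmenting path rather than literally by $k$.
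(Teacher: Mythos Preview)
Your proof is correct and follows exactly the approach the paper intends: the paper's own proof is a one-liner citing Lemmas~\ref{lemma:hopcroft-karp} and~\ref{lemma:apxmatch}, and you have simply spelled out the induction (the algorithm's stated invariant that every $M_i$-augmenting path has length at least $2i+1$) and the final arithmetic that those two lemmas encode. Your handling of the empty-$P_{i+1}$ iterations and the indexing at the end is accurate and slightly more careful than the paper bothers to be.
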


\paragraph{The intersection graph.}
Define the intersection graph $H_i=(P_i,C_i)$ as follows.  The set of nodes $P_i$ is
the set of $M_{i-1}$-augmenting paths of length $2i-1$.  We connect two paths $p$ and
$q$ in $P_i$ by an edge $(p,q) \in C_i$ if $p$ and $q$ intersect (i.e., share a
vertex in $V$).  Note that $H_1$ is the line graph of $G$ and that $M_1$ is simply a
maximal matching in $G$.  Observe that $P^*_i$ as defined above is a maximal
independent set in $H_i$.  Thus, iteration $i$ of the global algorithm can be
conceptualized by the following steps (see Algorithm~\ref{alg:global mcm'}): construct the intersection graph $H_i$,
compute a maximal independent set $P^*_i$ in $H_i$, and augment the matching by
$M_{i}\triangleq M_{i-1} \oplus(E(P^*_i))$.

% Before turning to the local simulation
% of the global algorithm, we upper bound the
%
% The port assignment of each $p \in H_i$ is done by a lexicographic order of the neighbors of $p$,
% i.e., by the IDs of the vertices of each path.
% Let $P^*_i$ denote a maximal independent set in $H_i$.
% Note that $H_0=G$.
% Using these notations, the above global algorithm uses (only) the augmenting paths in $P^*_i$ in order to compute $M_{i+1}$ from $M_i$.

%\newpage
\paragraph{Implementation by a stateless deterministic \clocal\ Algorithm.}
The recursive local improvement technique in~\cite[Section
3.3]{onak2008} simulates the global algorithm. It is based
on a recursive oracle $\oracle_i$.  The input to oracle
$\oracle_i$ is an edge $e \in E$, and the output is a bit
that indicates whether $e \in M_i$. Oracle $\oracle_i$
proceeds by computing two bits $\tau$ and $\rho$ (see
Algorithm~\ref{alg:Oi}).  The bit $\tau$ indicates whether
$e\in M_{i-1}$, and is computed by invoking oracle
$\oracle_{i-1}$.  The bit $\rho$ indicates whether $e\in
E(P^*_i)$ (where $P^*_i$ is an \mis\ in $H_{i-1}$).  Oracle
$\oracle_i$ returns $\tau \oplus \rho$ because
$M_i=M_{i-1}\oplus E(P^*_i)$.

We determine whether $e\in E(P^*_i)$ by running the \clocal-algorithm $\proca_i$ over
$H_i$ (see Algorithm~\ref{alg:Ai}).  Note that $\proca_1$ simply computes a maximal
matching (i.e., a maximal independent set of the line graph $H_1$ of $G$).  The main
difficulty we need to address is how to simulate the construction of $H_i$ and probes
to vertices in $H_i$. We answer the question whether $e \in E(P_i^*)$ by executing
the following steps: (1)~Listing: construct the set $P_i(e)\triangleq \{p\in P_i \mid
e\in E(p)\}$. Note that $e\in E(P^*_i)$ if and only if $P_i(e)\cap P^*_i\neq
\emptyset$. (2)~\mis-step: for each $p\in P_i(e)$, input the query $p$ to an
\mis-algorithm for $H_i$ to test whether $p\in P^*_i$. If an affirmative answer is
given to one of these queries, then we conclude that $e\in E(P_i^*)$. We now
elaborate on how the listing step and the \mis-step are carried out by a
\clocal-algorithm.

The listing of all the paths in $P_i(e)$ uses two preprocessing steps: (1)~Find the
balls of radius $2i-1$ in $G$ centered at the endpoints of $e$. (2)~Check if $e'\in
M_{i-1}$ for each edge $e'$ incident to vertices in the balls. We can then
exhaustively check for each path $p$ of length $2i-1$ that contains $e$ whether $p$
is an $M_{i-1}$-augmenting path.
%guy
%This step can be completed by at most $\enum(H_i)\triangleq
%2\Delta^{2i}$ probes to the graph $G$ and $\enum(H_i)$ queries to $\oracle_{i-1}$.

The \mis-step answers a query $p\in P^*_i$ by simulating
the \mis\ \clocal-algorithm over $H_i$. The \mis-algorithm needs to simulate
probes to $H_i$. A probe to $H_i$ consists of an
$M_{i-1}$-augmenting path $q$ and a port number. We suggest
to implement this probe by probing all the neighbors of $q$
in $H_i$ (so the port number does not influence the first
part of implementing a probe). See Algorithm~\ref{alg:prob
Hi}. As in the listing step, a probe $q$ in $H_i$ can be
obtained by (1)~finding the balls in $G$ of radius $2i-1$
centered at endpoints of edges in $E(q)$, and (2)~finding
out which edges within these balls are in $M_{i-1}$. The
first two steps enable us to list all of the neighbors of
$q$ in $H_i$ (i.e., the $M_{i-1}$-augmenting paths that
intersect $q$). These neighbors are ordered (e.g., by
lexicographic order of the node IDs along the path). If the
probe asks for the neighbor of $q$ in port $i$, then the
implementation of the probe returns the $i$th neighbor of
$q$ in the ordering.

By combining the recursive local improvement technique with
our deterministic stateless \clocal\ \mis-algorithm, we
obtain a deterministic stateless \clocal-algorithm that
computes a $(1-\eps)$-approximation for maximum matching.
The algorithm is invoked by calling the oracle
$\oracle_{k+1}$.
\begin{lemma}\label{lemma:mcm}
  The oracle $\oracle_{i}(e)$ is a $\clocal[2^{\Delta^{O(i)}}\cdot (\log^*n)^{i}]$ that
  computes whether $e\in M_{i}$.
\end{lemma}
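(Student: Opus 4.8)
The plan is to prove the lemma by induction on $i$, where at level $i$ I keep track of three quantities: the number of vertices and edges in a ball of radius $2i-1$ in $G$ (which bounds the cost of the listing/preprocessing steps), the parameters $n(H_i)$ and $\Delta(H_i)$ of the intersection graph $H_i$ (which control the cost of the \mis-algorithm run on $H_i$), and the cost of simulating one probe to $H_i$ by probes to $G$ (which itself recursively invokes $\oracle_{i-1}$). Throughout I use that a ball of radius $r$ in a graph of maximum degree $\Delta$ has at most $\Delta^{O(r)}$ vertices and edges, and can be explored with $\Delta^{O(r)}$ probes.

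\textbf{Base case.} For $i=1$, $M_1$ is a maximal matching of $G$, i.e.\ a maximal independent set of the line graph $H_1$, so $\oracle_1(e)$ just runs the stateless deterministic \clocal\ \mis-algorithm of Corollary~\ref{coro:mis} on $H_1$ with query $e$. Since $H_1$ has at most $n\Delta/2$ nodes and maximum degree at most $2(\Delta-1)$, that algorithm performs $(2\Delta)^{O(\Delta^2)}\cdot\log^*(n\Delta)=\Delta^{O(\Delta^2)}\cdot\log^*n=2^{\Delta^{O(1)}}\cdot\log^*n$ probes to $H_1$, and each probe to $H_1$ is answered with $O(\Delta)$ probes to $G$ (listing the edges sharing an endpoint with a given edge). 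This gives the claim for $i=1$.

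\textbf{Inductive step.} Assume the bound for $\oracle_{i-1}$ and bound the three parts of $\oracle_i(e)$ (Algorithm~\ref{alg:Oi}) separately. The bit $\tau$ costs one call to $\oracle_{i-1}(e)$, i.e.\ $2^{\Delta^{O(i-1)}}\cdot(\log^*n)^{i-1}$ probes. For the listing step, each ball $B^G_{2i-1}(u)$ around an endpoint $u$ of $e$ has $\Delta^{O(i)}$ vertices and edges and is explored with $\Delta^{O(i)}$ probes to $G$; deciding $M_{i-1}$-membership of each of these edges costs $\Delta^{O(i)}$ recursive calls to $\oracle_{i-1}$, after which $P_i(e)$ (and every augmenting path needed below) is enumerated at no extra probe cost, so the listing step costs $\Delta^{O(i)}\cdot 2^{\Delta^{O(i-1)}}\cdot(\log^*n)^{i-1}$ probes. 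The main work is the \mis-step (Algorithm~\ref{alg:Ai}), which for each of the $\le\Delta^{O(i)}$ paths $p\in P_i(e)$ runs the \clocal\ \mis-algorithm over $H_i$ with query $p$. Here $n(H_i)\le n\cdot\Delta^{O(i)}$ (an $M_{i-1}$-augmenting path of length $2i-1$ is determined by a start vertex and at most $2i-1$ port choices) and $\Delta(H_i)\le\Delta^{O(i)}$ (each of the $2i$ vertices on such a path lies on $\le\Delta^{O(i)}$ other such paths), so by Corollary~\ref{coro:mis} this algorithm makes $\Delta(H_i)^{O(\Delta(H_i)^2)}\cdot\log^*(n(H_i))=2^{\Delta^{O(i)}}\cdot\log^*n$ probes to $H_i$. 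Each such probe is simulated (Algorithm~\ref{alg:prob Hi}) by exploring balls of radius $2i-1$ in $G$ around the $O(i)$ endpoints of the edges of the probed path ($\Delta^{O(i)}$ probes to $G$) and deciding $M_{i-1}$-membership of the $\Delta^{O(i)}$ edges in those balls via recursive calls to $\oracle_{i-1}$, for a per-$H_i$-probe cost of $\Delta^{O(i)}\cdot 2^{\Delta^{O(i-1)}}\cdot(\log^*n)^{i-1}$.

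\textbf{Conclusion.} Multiplying the number of $H_i$-probes by the per-$H_i$-probe cost and by the $\le\Delta^{O(i)}$ choices of $p$, the \mis-step costs $\Delta^{O(i)}\cdot\big(2^{\Delta^{O(i)}}\cdot\log^*n\big)\cdot\big(\Delta^{O(i)}\cdot 2^{\Delta^{O(i-1)}}\cdot(\log^*n)^{i-1}\big)=2^{\Delta^{O(i)}}\cdot(\log^*n)^{i}$, using that $2^{\Delta^{O(i)}}\cdot 2^{\Delta^{O(i-1)}}\cdot\Delta^{O(i)}=2^{\Delta^{O(i)}+\Delta^{O(i-1)}+O(i)\log\Delta}=2^{\Delta^{O(i)}}$; adding the three contributions gives $2^{\Delta^{O(i)}}\cdot(\log^*n)^i$ probes for $\oracle_i(e)$. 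Consistency of the returned bit $\tau\oplus\rho$ follows because every sub-routine ($\oracle_{i-1}$, the \mis-algorithm over $H_i$, and ball exploration in $G$) is deterministic, stateless, and consistent, so $\oracle_i$ answers according to the single fixed matching $M_i=M_{i-1}\oplus E(P_i^*)$. I expect the only delicate point to be this exponent arithmetic: one must check that a constant additive increase in the ``$O(i)$'' of the exponent per recursion level suffices to absorb both the lower-level term $2^{\Delta^{O(i-1)}}$ and every polynomial-in-$(\Delta,i)$ factor, so that the tower of nested simulations collapses to a single $2^{\Delta^{O(i)}}$, while the $\log^*(n(H_i))=O(\log^*n)$ factor contributed by the \mis-algorithm at level $i$ supplies exactly the one extra $\log^*n$ factor that turns $(\log^*n)^{i-1}$ into $(\log^*n)^i$.
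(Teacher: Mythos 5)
Your proof is correct and follows essentially the same route as the paper's: you establish the recurrence $|\oracle_i| \approx |\oracle_{i-1}| + (\text{listing cost}) + |P_i(e)| \cdot |\lmis_G(H_i)|$, bound $n(H_i)$ and $\Delta(H_i)$ by $\text{poly}(n)\cdot\Delta^{O(i)}$ and $\Delta^{O(i)}$ respectively, invoke Corollary~\ref{coro:mis} for the $\mis$-algorithm cost on $H_i$, account for the simulation of $H_i$-probes via Algorithm~\ref{alg:prob Hi}, and close the induction by the same exponent bookkeeping ($\Delta^{\Delta^{O(i)}}=2^{\Delta^{O(i)}}$, with the $O(\cdot)$ constant growing only additively per level). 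The only cosmetic differences are that you unroll the recurrence inductively rather than solving it after the fact, use a tighter bound $n(H_i)\le n\Delta^{O(i)}$ in place of the paper's $n_i\le n^{2i}$ (both yield $\log^* n_i=O(\log^* n)$), and spell out the $i=1$ base case explicitly; none of these change the argument.
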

\begin{proof}
  Correctness follows by induction on $i$ that shows that the oracle simulates
  Algorithm~\ref{alg:global mcm'}. We analyze the number of probes as follows.  To
  simplify notation, we denote the number of probes performed by algorithm $B$ by
  $|B|$, for example, $|\oracle_i|$ and $|\proca_i|$ denote the number of probes to
  $G$ performed by the oracle $\oracle_i$ and procedure $\proca_i$, respectively.
  Let $n_i$ and $\Delta_i$ denote the number of vertices and the maximum degree of
  $H_i$, respectively.

The probe complexity of $\oracle_i$ satisfies
  the following recurrence:
  \begin{align*}
    |\oracle_i|&=
    \begin{cases}
      0 & \text{if $i=0$},\\
      |\oracle_{i-1}| + |\proca_i| & \text{if $i\geq 1$}.\\
\end{cases}
\end{align*}

The probe complexity of $\proca_i$ is upper bounded as follows.  In Lines 2-3, each
BFS performs $O(\Delta^{2i})$ probes.  The number of edges in the probed ball is
$O(\Delta^{2i+1})$, and for each such edge a call to $\oracle_{i-1}$ is made in Line
4. Line 5 does not generate any probes.  Let $|\lmis_G(H_i)|$ denote the probe
complexity of the simulation of \clocal-algorithm for \mis\ over the intersection
graph $H_i$ when the access is to $G$. In Line 7, the number of probes is
bounded by $|P_i(e)|\cdot |\lmis_G(H_i)|$.  Hence,
\begin{align*}
 |\proca_i|
  &\leq    O(\Delta^{2i+1}) \cdot |\oracle_{i-1}| +
  |P_i(e)|\cdot |\lmis_G(H_i)|.
\end{align*}
The number of paths in $P_i(e)$ is at most $2i \cdot \Delta^{2i}$ (indeed, there are
$2i$ possibilities for the position of $e$ along a path, and, for each position $j$,
there are a most $\Delta^{j}\cdot\Delta^{2i-j}$ paths $p$ such that $e$ is the $j$th
edge in $p$).

We bound $|\lmis_G(H_i)|$ by the probe complexity $|\lmis_{H_i} (H_i)|$ (namely, the
probe complexity if one can access $H_i$) times the probe complexity of
simulating probes to $H_i$ via probes to $G$. By Corollary~\ref{coro:mis},
$|\lmis_{H_i} (H_i)|\leq \Delta_i^{O(\Delta_i^2)} \cdot \log^* n_i$.  Simulation of
probes in $H_i$ via probes to $G$ is implemented by the $\probe(i,p)$ procedure.
Similarly, to the analysis of the probe complexity of $\proca_i$, the probe
complexity of $\probe(i,p)$ is $O(2i \cdot \Delta^{2i+1}\cdot |\oracle_{i-1}|)$.

Hence,
\begin{align*}
    |P_i(e)|\cdot |\lmis_G(H_i)| & \leq 2i \cdot \Delta^{2i} \cdot
    \Delta_i^{O(\Delta_i^2)} \cdot \log^* n_i \cdot 2i \cdot \Delta^{2i+1}\cdot |\oracle_{i-1}|.
\end{align*}
Because $n_i\leq n^{2i}$ and $\Delta_i  = O(i^2 \cdot \Delta^{2i})$,
it follows that
\begin{align*}
    |\proca_i| & \leq
    \Delta^{\Delta^{O(i)}} \cdot \log^* n \cdot |\oracle_{i-1}|.
\end{align*}

We conclude that $|\oracle_i|$ satisfies
\begin{align*}
  |\oracle_i| &\leq  \Delta^{\Delta^{O(i)}} \cdot \log^* n \cdot |\oracle_{i-1}| \\
& \leq
\Delta^{\Delta^{O(i)}} \cdot (\log^* n)^i.
\end{align*}
Note that $\Delta^{\Delta^{O(i)}}=2^{\Delta^{O(i)}}$, and the lemma follows.
\end{proof}

\noindent\medskip
By setting $i = \lceil \frac 1\eps\rceil +1$, we obtain the following theorem.
\begin{theorem}\label{thm:mm}
  There is a deterministic, stateless, $(1-\eps)$-approximate
  \clocal$[\varphi]$-algorithm for maximum matching, where
  \[
    \varphi=(\log^* n)^{\lceil \frac 1\eps\rceil +1}\cdot 2^{\Delta^{O(1/\eps)}}.
  \]
\end{theorem}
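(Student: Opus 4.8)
The plan is to derive Theorem~\ref{thm:mm} directly from Lemma~\ref{lemma:mcm} together with the correctness of the global algorithm. First I would recall that, by the theorem following Algorithm~\ref{alg:global mcm} (and Lemmas~\ref{lemma:hopcroft-karp} and~\ref{lemma:apxmatch}), the matching $M_{k+1}$ produced by $k+1$ iterations with $k=\lceil 1/\eps\rceil$ is a $(1-\eps)$-approximate maximum cardinality matching. The oracle $\oracle_{k+1}$ answers, for any queried edge $e$, whether $e\in M_{k+1}$; since Lemma~\ref{lemma:mcm} establishes by induction that $\oracle_i$ faithfully simulates Algorithm~\ref{alg:global mcm'} (which computes the same $M_i$ as Algorithm~\ref{alg:global mcm}), the \clocal-algorithm that on query $e$ returns $\oracle_{k+1}(e)$ is consistent with the solution $M_{k+1}$. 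It is deterministic and stateless because the underlying \mis-algorithm (Corollary~\ref{coro:mis}), the orientation/coloring subroutines, and the listing and probing procedures are all deterministic and keep no state between queries.

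Next I would plug $i=k+1=\lceil 1/\eps\rceil+1$ into the probe bound of Lemma~\ref{lemma:mcm}, which gives probe complexity
\[
  |\oracle_{\lceil 1/\eps\rceil+1}| \;\leq\; 2^{\Delta^{O(\lceil 1/\eps\rceil+1)}}\cdot (\log^* n)^{\lceil 1/\eps\rceil+1}.
\]
Since $\lceil 1/\eps\rceil+1 = O(1/\eps)$, the exponent simplifies to $\Delta^{O(1/\eps)}$, so the bound becomes $\varphi = (\log^* n)^{\lceil 1/\eps\rceil+1}\cdot 2^{\Delta^{O(1/\eps)}}$, exactly as claimed. This is essentially a bookkeeping step once Lemma~\ref{lemma:mcm} is in hand.

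The only genuine content beyond the lemma is verifying that querying for \emph{edge} membership in $M_{k+1}$ is the natural \clocal\ formulation of approximate maximum matching, and that a single fixed bijection/orientation is used consistently across all queries so that all answers come from one matching $M_{k+1}$ — this follows because the \clocal\ \mis-algorithm of Corollary~\ref{coro:mis} is itself consistent (it fixes one \mis\ per intersection graph), and this consistency propagates up the recursion $\oracle_0,\oracle_1,\dots,\oracle_{k+1}$. The main obstacle, therefore, is not in the present theorem at all but was already discharged in Lemma~\ref{lemma:mcm}: controlling how the parameters of the intersection graphs blow up ($n_i\leq n^{2i}$, $\Delta_i=O(i^2\Delta^{2i})$) so that the doubly-exponential-in-$i$ term $\Delta_i^{O(\Delta_i^2)}$ stays at $2^{\Delta^{O(i)}}$ and the $\log^* n$ factors accumulate only to the $i$th power rather than compounding worse. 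Given that lemma, the proof of Theorem~\ref{thm:mm} is the one-line substitution together with the observation that the resulting algorithm inherits determinism, statelessness, and the $(1-\eps)$ guarantee.
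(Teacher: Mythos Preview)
Your proposal is correct and matches the paper's approach exactly: the paper's entire proof is the single sentence ``By setting $i = \lceil \frac 1\eps\rceil +1$, we obtain the following theorem,'' i.e., a direct substitution into Lemma~\ref{lemma:mcm}. Your additional remarks about correctness, determinism, statelessness, and consistency are sound elaborations of what the paper leaves implicit.
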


\begin{comment}
  \paragraph{Comparison with previous works.}
  In the randomized algorithm for the oracle $\oracle_i$ in~\cite{onak2008},
  $|\oracle_i|$ is bounded by $2^{\Delta^{O(i)}}$ in expectation over the bit flips
  of the algorithm and a uniform choice of an edge. As indicated in
  Table~\ref{tab:app2}, an exponential improvement was presented
  in~\cite{yoshida2009improved}.  By changing the \mis\ algorithm, we obtain a
  deterministic \clocal-algorithm that computes a $(1-\eps)$-approximation.
\end{comment}

%
%\begin{center}
%\begin{minipage}[t]{0.75 \textwidth}
\begin{algorithm}[H]%\scriptsize
\caption{$\oracle_{i}(e)$ - a recursive oracle for
membership in the approximate matching.} \label{alg:Oi}
\begin{algorithmic}[1]
\Require A query $e \in E$.
\Ensure Is $e$ an edge in the matching $M_i$?
  \State If $i=0$ then return false.
  \State $\tau\gets \oracle_{i-1}(e)$.
  \State $\rho\gets \proca_{i}(e)$.
  \State \textbf{Return} $\tau \oplus \rho$.
\end{algorithmic}
\end{algorithm}
%\end{minipage}
%\end{center}
%
%\begin{center}
%\begin{minipage}[t]{0.45 \textwidth}
\begin{algorithm}[H]%\scriptsize
\caption{$\proca_i(e=(u,v))$ - a procedure for checking
membership of an edge $e$ in one
  of the paths in $P^*_i$.} \label{alg:Ai}
\begin{algorithmic}[1]
\Require An edge $e \in E$.
\Ensure Does $e$ belong to a path $p\in P^*_i$?
\State \textbf{Listing step:} \Comment Compute all shortest $M_{i-1}$-augmenting
paths that contain $e$.
  \State \quad $B_u \gets BFS_G(u)$ with depth $2i-1$.
  \State \quad $B_v \gets BFS_G(v)$ with depth $2i-1$.
  \State \quad For every edge $e'$ in the subgraph of $G$ induced by $B_u\cup B_v$: $\chi_{e'}\gets \oracle_{i-1}(e')$.
  \State \quad $P_i(e)\gets$ all $M_{i-1}$-augmenting paths of length $2i-1$ that
  contain $e$ (based on information gathered in Lines 2-4).
\State \textbf{\mis-step:} \Comment Check if one of the augmenting paths is in $P^*_i$.
  \State \quad For every $p\in P_i(e)$: If $\lmis(H_i,p)$ \textbf{Return} true.
  \State \quad \textbf{Return} false.
\end{algorithmic}
\end{algorithm}
%\end{minipage}
%
%\begin{minipage}[t]{0.45 \textwidth}
\begin{algorithm}[H]%\scriptsize
  \caption{$\textit{probe}(i,p)$ - simulation of a probe to the intersection graph
    $H_i$ via probes to $G$. The probe returns all the $M_{i-1}$-augmenting paths
    that intersect $p$. } \label{alg:prob Hi}
\begin{algorithmic}[1]
\Require A path $p \in P_i$ and the ability to probe $G$.
\Ensure The set of $M_{i-1}$-augmenting paths of length $2i-1$ that intersect $p$.
\State For every $v\in p$ do
  \State \quad $B_v \gets BFS_G(v)$ with depth $2i-1$.
  \State \quad For every edge $e'\in B_v$: $\chi_e\gets \oracle_{i-1}(e)$. \Comment determine whether the path is alternating and whether the endpoints are $M_{i-1}$-free.
  \State \quad $P_i(v)\gets$ all $M_{i-1}$-augmenting paths of length $2i-1$ that contain $v$.
\item \textbf{Return} $\bigcup_{v\in p} P_i(v)$. %$\{ q \mid \exists v\in p \text{ such that } q \in P_i(v)\}$
\end{algorithmic}
\end{algorithm}
%\end{minipage}
%\end{center}

\section{A  \dlocal\ Approximate \mcm\ Algorithm}\label{sec:dlocal mcm}
In this section, we present a \dlocal-algorithm that
computes a $(1-\eps)$-approximate maximum cardinality
matching.  The algorithm is based on bounding the probe
radius of the $\clocal$-algorithm from Theorem~\ref{thm:mm}
and applying the simulation from
Proposition~\ref{prop:simul}.
\begin{theorem}\label{thm:distalg}
  There is a deterministic $\dlocal[\Delta^{O(1/\eps)} +
  O\left(\frac{1}{\eps^2}\right) \cdot\log^*(n)]$-algorithm for computing a
  $(1-\eps)$-approximate \mum.
\end{theorem}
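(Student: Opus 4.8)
The plan is to reduce to Proposition~\ref{prop:simul}: the \clocal-algorithm underlying Theorem~\ref{thm:mm} is deterministic and stateless, so once we bound its \emph{probe radius} by some $r$, Proposition~\ref{prop:simul} immediately yields a deterministic $\dlocal[r]$-algorithm (each vertex gathers its radius-$r$ ball in $G$ and locally evaluates the oracle $\oracle_{k+1}$ on each of its incident edges, which is legitimate because the oracle stores nothing between queries). Thus the whole task is to show that, on a query for an edge $e$, the oracle $\oracle_{k+1}$ with $k=\lceil 1/\eps\rceil$ confines its probes to a ball of $G$ of radius $\Delta^{O(1/\eps)}+O(1/\eps^2)\cdot\log^* n$ around $e$.

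I would prove this by induction on $i$, bounding $r_i$, the probe radius of $\oracle_i$ (including all the recursive calls it spawns). Reading off Algorithms~\ref{alg:Oi}, \ref{alg:Ai} and~\ref{alg:prob Hi}: the listing step of $\proca_i(e)$ runs BFS of depth $2i-1$ in $G$ from the endpoints of $e$ and invokes $\oracle_{i-1}$ on every edge it sees, so its probes lie within $G$-distance $(2i-1)+r_{i-1}$ of $e$. The \mis-step runs the \clocal\ \mis-algorithm over the intersection graph $H_i$; by Lemma~\ref{lemma:simul} together with the $\Delta^2$-coloring bound noted after it, that \mis-algorithm has probe radius $O(\log^* n'+(\Delta')^2)$ over any graph with $n'$ vertices and maximum degree $\Delta'$. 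Since $n_i\le n^{2i}$ and $\Delta_i=O(i^2\Delta^{2i})$, this is $\rho_i=O\!\bigl(\log^* n + i^{4}\Delta^{4i}\bigr)$ \emph{hops in $H_i$}.

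The geometric heart of the argument is translating hops in $H_i$ back into distances in $G$. A single hop in $H_i$ moves between two $M_{i-1}$-augmenting paths of length $2i-1$ sharing a vertex, so a walk of length $t$ in $H_i$ starting from a path through $e$ stays within $G$-distance $O(t\cdot i)$ of $e$; and every simulated probe to $H_i$ is carried out by $\probe(i,\cdot)$, which itself performs depth-$(2i-1)$ BFS in $G$ and then calls $\oracle_{i-1}$. Combining these, the \mis-step probes stay within $G$-distance $O(\rho_i\cdot i)+r_{i-1}$ of $e$, and therefore
\[
  r_i \;\le\; O(\rho_i\cdot i) + r_{i-1} \;=\; \Delta^{O(i)} + O(i\cdot\log^* n) + r_{i-1},\qquad r_0=0 .
\]
Unrolling over $i=1,\dots,k+1$ with $k=\lceil 1/\eps\rceil$ gives $r_{k+1}\le \sum_{i=1}^{k+1}\bigl(\Delta^{O(i)}+O(i\log^* n)\bigr)=\Delta^{O(1/\eps)}+O(1/\eps^{2})\cdot\log^* n$, and Proposition~\ref{prop:simul} finishes the proof.

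The main obstacle I anticipate is the careful bookkeeping in the last two steps: making rigorous that a $H_i$-distance blows up by only an $O(i)$ factor in $G$, and --- crucially --- that the recursive call to $\oracle_{i-1}$ enlarges the radius \emph{additively} rather than multiplicatively, which is exactly what converts the product-type bound of Lemma~\ref{lemma:mcm} into the sum $\Delta^{O(1/\eps)}+O(1/\eps^2)\log^* n$ here. A minor point to verify along the way is that $\log^* n_i=O(\log^* n)$ for the range of $i$ in question, so the $\log^*$ term does not acquire an extra $\eps$-dependence beyond the $O(1/\eps^2)$ factor coming from summing the $O(i\log^* n)$ contributions.
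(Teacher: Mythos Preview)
Your proposal is correct and follows essentially the same route as the paper: the paper's proof of the theorem simply invokes Proposition~\ref{prop:simul} and defers to Lemma~\ref{lemma:radius}, whose proof establishes exactly the recurrence $r_G(\oracle_i)\le r_G(\oracle_{i-1})+O(i\cdot\log^* n_i+\poly(\Delta_i))$ via the same two ingredients you isolate --- the $O(i)$ blow-up when translating an $H_i$-hop into $G$-distance, and the additive (not multiplicative) contribution of the recursive call to $\oracle_{i-1}$ through $\probe$. Your bookkeeping and the minor check that $\log^* n_i=O(\log^* n)$ are precisely what the paper does as well.
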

\begin{proof}
  The proof of the theorem is based on the simulation of a \clocal-algorithm by a
  \dlocal-algorithm from Proposition~\ref{prop:simul}.  In Lemma~\ref{lemma:radius} we prove
  that the probe radius of $\oracle_k$ is $\Delta^{O(k)} + O(k^2)
  \cdot\log^*(n)$. Plug $k=1+\lceil \frac{1}{\epsilon}\rceil$,  and the theorem follows.
\end{proof}
\begin{lemma}\label{lemma:radius}
The probe radius of the \clocal-algorithm $\oracle_{k}$ is
  \[
    r_G(\oracle_k) = \Delta^{O(k)}+O(k^2)\cdot \log^*(n)\:.
  \]
\end{lemma}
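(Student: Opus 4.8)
The plan is to mirror the probe-complexity recurrence from the proof of Lemma~\ref{lemma:mcm}, but track the \emph{radius} of the ball in $G$ that is touched, rather than the number of probes. I would proceed by induction on $k$, proving that $r_G(\oracle_k) \le c_1 \cdot \Delta^{c_2 k} + c_3 k^2 \log^* n$ for suitable absolute constants. The base case $k=0$ is trivial since $\oracle_0$ returns false without probing, so $r_G(\oracle_0)=0$. For the inductive step I would unwind the three ingredients that $\oracle_k$ invokes, as in Algorithms~\ref{alg:Oi}, \ref{alg:Ai}, and~\ref{alg:prob Hi}: (i) the recursive call $\oracle_{k-1}$ on various edges; (ii) the two BFS explorations in $G$ to depth $2k-1$ in the listing step and in the probe simulation; (iii) the \mis-step, which runs the \clocal\ \mis-algorithm $\lmis$ over the intersection graph $H_k$.

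The key observation is how radii compose. A probe issued by the \mis-algorithm over $H_k$ corresponds to a ball of radius $r_{H_k}(\lmis)$ in $H_k$; each such step is realized in $G$ by the $\probe(k,\cdot)$ procedure, which collects balls of radius $2k-1$ in $G$ around the vertices of a path of length $2k-1$, and on each edge of those balls calls $\oracle_{k-1}$. So a radius-$\rho$ ball in $H_k$ translates into a ball in $G$ of radius roughly $\rho \cdot (2k-1) + r_G(\oracle_{k-1}) + (2k-1)$; the factor $(2k-1)$ appears because moving one step in $H_k$ (to an intersecting augmenting path) moves at most $2\cdot(2k-1)$ steps in $G$. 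By Lemma~\ref{lemma:simul} and the discussion following Corollary~\ref{coro:improve}, the probe radius of $\lmis$ over a graph with maximum degree $\Delta_k$ is $O(\log^* n_k + \Delta_k^2)$, where $n_k \le n^{2k}$ so $\log^* n_k = O(\log^* n)$, and $\Delta_k = O(k^2\Delta^{2k})$ so $\Delta_k^2 = \Delta^{O(k)}$. Combining: the listing step contributes a ball of radius $O(k) + r_G(\oracle_{k-1})$; the \mis-step contributes a ball of radius
\[
  O\bigl(r_{H_k}(\lmis)\bigr)\cdot O(k) + O(k) + r_G(\oracle_{k-1}) = \Delta^{O(k)} + O(k)\cdot\log^* n + r_G(\oracle_{k-1}).
\]
Thus $r_G(\oracle_k) \le r_G(\oracle_{k-1}) + \Delta^{O(k)} + O(k)\cdot \log^* n$, and summing the geometric-type series over $i=1,\dots,k$ gives $r_G(\oracle_k) = \Delta^{O(k)} + O(k^2)\cdot\log^* n$, as claimed (the $k^2$ coming from summing the $O(i)\log^* n$ terms).

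The main obstacle I anticipate is bookkeeping the radius blow-up in the \mis-step carefully: one must argue that simulating the \emph{entire} run of the \clocal\ \mis-algorithm on $H_k$ (not just one probe) stays within a single ball of $G$ whose radius is the product of the $H_k$-probe-radius and the per-step $G$-radius, plus the recursive $\oracle_{k-1}$ radius — and crucially that this is a \emph{radius} bound (an intersection of balls being inside one ball), not merely a count. This requires observing that all vertices reachable within $r_{H_k}(\lmis)$ steps in $H_k$ from paths through $e$ lie, as subsets of $V(G)$, within distance $O(k)\cdot r_{H_k}(\lmis) + O(k)$ of the endpoints of $e$, and that the extra $\oracle_{k-1}$-probes on edges of those balls only expand the radius additively by $r_G(\oracle_{k-1})$. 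Once that geometric containment is nailed down, the recurrence and its solution are routine.
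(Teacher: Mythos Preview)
Your proposal is correct and follows essentially the same approach as the paper: set up the recurrence $r_G(\oracle_i) \le r_G(\oracle_{i-1}) + O(i)\cdot\log^* n + \Delta^{O(i)}$ by composing the probe radius of $\lmis$ on $H_i$ (which is $O(\log^* n_i) + \poly(\Delta_i)$) with the $O(i)$-factor blow-up incurred when translating one step in $H_i$ to $G$, then sum over $i$. The ``main obstacle'' you anticipate---that a path of length $r$ in $H_i$ corresponds to probes in $G$ within radius $O(i)\cdot r + r_G(\probe(i-1))$ of the query---is exactly the key geometric containment the paper makes explicit, and your sketch handles it correctly.
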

\begin{proof}
The probe radius $r_G(\oracle_i)$ satisfies
  the following recurrence:
  \begin{align*}
    r_G(\oracle_i)&=
    \begin{cases}
      0 & \text{if $i=0$},\\
      \max\{r_G(\oracle_{i-1}), r_G(\proca_i)\}& \text{if $i\geq 1$.}%\label{eq:r2}
\end{cases}
\end{align*}

The description of the procedure $\proca_i$ implies that the probe radius $r_G(\proca_i)$ satisfies
the following recurrence:
  \begin{align*}
    r_{G}(\proca_i)&\leq
    \max\{2i+r_G(\oracle_{i-1}), 2i-1+r_G(\lmis(H_i))\}
\end{align*}

We bound the probe radius $r_G(\lmis(H_i))$ by composing the
radius $r_{H_i}(\lmis(H_i))$ with the increase in radius
incurred by the simulation of probes to $H_i$ by probes to
$G$.  Recall that the $\lmis$-algorithm is based on a
deterministic coloring algorithm $C$.  We denote the number
of colors used by $C$ to color a graph $G'$ by $|C(G')|$.

The \mis-algorithm orients the edges by coloring the vertices. The
radius of the orientation is at most the number of colors. It follows that
\begin{align*}
  r_{H_i}(\lmis(H_i)) &\leq r_{H_i}(C(H_i)) + |C(H_i)|.
\end{align*}
The simulation of probes to $H_i$ requires an increase in the probe radius.  In
general, suppose that algorithm $L$ probes $H$, and algorithm $S$ simulates probes to
$H$ by probes to $G$. Let $S(p)$ denote the set of probes in $G$ performed by $S$ to
simulate a probe of $p$ in $H$.  Suppose that $S(p)\cap S(p')\neq \emptyset$ whenever
$p$ and $p'$ are neighbors in $H$. In this case the probe radius of the composed
algorithm is at most $r_H(L)\cdot r_G(S)$.  However, our case is special in the
following sense. Consider a path $p_1,p_2,\ldots,p_r$ of length $r$ in $H_i$.  This
sequence $\{p_j\}$ of probes in $H$ is simulated by probes in $G$ by the procedure
$\probe(i,p_j)$, for $j=1,\ldots,r$.  The probe radius in $G$ from any vertex in
$p_1$ is bounded by $(2i)\cdot r + r_G(\probe(i-1))$.  Hence,
\begin{align*}
  r_{G}(\lmis(H_i)) &\leq 2i \cdot   r_{H_i}(\lmis(H_i)) + r_G(\probe (i-1)).
\end{align*}

Many distributed coloring algorithms find a vertex coloring in $O(\log^*(n) +
\poly(\Delta))$ rounds (giving us the same upper bound on the probe-radius of the
corresponding \clocal-algorithm) and use $\poly(\Delta)$ colors~(see,
for example,
\cite{barenboim2009distributed,linial,cole1986deterministic,panconesi2001some,kuhn2009weak}).
Plugging these parameters in the recurrences yields
\begin{align*}
  r_G(\oracle_i) &\leq 2i + r_G(\lmis(H_i))\\
&\leq 2i\cdot (1+  r_{H_i}(\lmis(H_i))) + r_G(\probe(i-1))\\
&\leq r_G(\oracle_{i-1}) + O\Big(i \cdot \log^* (n_i) + \poly(\Delta_i)\Big),
\end{align*}
Since $\Delta_i \leq (2i)^2 \Delta^{2i-1}$ and $n_i \leq n^{2i}$ we get that
  \begin{align*}
    r_G(\oracle_k) & \leq \sum_{i=1}^{k} O\left(i\cdot \log^*(n) + \poly((2i)^2\cdot \Delta^{2i})\right) \\
    & = O(k^2\cdot \log^*(n)) + \Delta^{O(k)}.
  \end{align*}
The lemma follows.\end{proof}

\section{A Global $(1-\eps)$-Approximate \mwm\ Algorithm}\label{sec:global mwm}
In this section we present a deterministic stateless \clocal-algorithm that computes
a $(1-\eps)$-approximation of a maximum weighted matching.\footnote{To avoid dealing
  with constants, we present a $1-O(\eps)$-approximation.}  The algorithm is based on
a parallel $(1-\eps)$-approximation algorithm for weighted matching of Hougardy and
Vinkemeier~\cite{HV06}.

\paragraph{Terminology and Notation.}
In addition to the terminology and notation used in the unweighted case, we define
the following terms. In the weighted case, a path is $M$-alternating if it is a
simple path or a simple cycle in which the edges alternate between $M$ and
$E\setminus M$.  For a matching $M$ and an $M$-alternating path $p$, the
\emph{gain} of $p$ is defined by
\[
    \gain_M(p) \eqdf w(p\setminus M) - w(p \cap M)\:.
\]
The gain of a set of (disjoint) paths is the sum of the gains of the paths in the set.

An $M$-alternating path $p$ is \emph{$M$-augmenting} if $\gain_M(p)>0$ and $p$ satisfies
one of the following conditions: (1)~$p$ is a simple cycle, or (2)~$p$ is a simple
path that satisfies: if $p$ ends (or begins) in an edge not in $M$, then the
corresponding endpoint is $M$-free. Note that the symmetric difference between $M$ and any
set of vertex disjoint $M$-augmenting paths is a matching with higher weight.

 We say that a path $p$ is \emph{$(M,[1,k])$-augmenting} if $p$ is $M$-augmenting and
 $|E(p) \setminus M| \leq k$. An $(M,[1,k])$-augmenting path may contain at most
 $2k+1$ edges ($k$ non-matching edges and $k+1$ matching edges).  The \emph{gain-index} of an $M$-augmenting path $p$ is defined by
\[
\gamma_M(p)\triangleq \left\lceil\log_2 \gain_M(p) \right\rceil.
\]

Let $I(M)$ denote the intersection graph of $(M,[1,k])$-augmenting paths.  Namely,
the vertices of $I(M)$ are the $(M,[1,k])$-augmenting paths, and two vertices in
$I(M)$ are neighbors if they have a common vertex in $G$.  We partition the vertices of
$I(M)$ (i.e., $[M,[1,k])$-augmenting paths of $G$) to classes; the \emph{class} of an
augmenting path equals its gain-index.

%\paragraph{Optimal and Greedy Sets of Augmentation Paths.} MOTI 28/10
\paragraph{Optimal Set of Augmentation Paths.}
Given a matching $M$, let $\aug(M,k)$ denote a set of vertex disjoint
$(M,[1,k])$-augmentation paths with maximum gain. Equivalently, $\aug(M,k)$ is an \mis\
in $I(M)$ with maximum gain.

\begin{theorem}[\cite{DBLP:journals/ipl/PettieS04}]\label{thm:pettie}
Let $M$ and $M^*$ denote a matching and maximum weight matching in $G$, respectively,
then
\begin{align*}
  \gain(\aug(M,k)) \geq \frac{k+1}{2k+1} \cdot \left( \frac{k}{k+1} \cdot w(M^*) - w(M)\right).
\end{align*}\end{theorem}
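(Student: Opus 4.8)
The plan is to prove the inequality by exhibiting an explicit feasible family of augmentations whose gain already meets (in fact exceeds) the bound, and then invoking optimality of $\aug(M,k)$. Note first that every node of $I(M)$ is an $(M,[1,k])$-augmenting path and hence has strictly positive gain, so any vertex-disjoint family of such paths can be completed to a maximal independent set of $I(M)$ without decreasing its total gain. Consequently it suffices to construct \emph{some} vertex-disjoint family $\mathcal{S}$ of $(M,[1,k])$-augmenting paths with $\gain_M(\mathcal{S}) \ge \frac{k}{k+1}\,w(M^*) - w(M)$: this is at least $\frac{k+1}{2k+1}\bigl(\frac{k}{k+1}w(M^*) - w(M)\bigr)$ whenever the latter is nonnegative, and otherwise the claim is trivial since $\gain(\aug(M,k)) \ge 0$ (the empty family is always admissible).

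The source of $\mathcal{S}$ is the symmetric difference $N = M \oplus M^*$, decomposed into its connected components $C_1,\dots,C_t$, each an $M$-alternating simple path or simple cycle, pairwise vertex-disjoint. Two identities drive the estimate: $\sum_j \gain_M(C_j) = w(M^*) - w(M)$ (because $M \oplus N = M^*$ and gain is additive over vertex-disjoint alternating structures), and $\sum_j w(C_j \setminus M) = w(M^* \setminus M) \le w(M^*)$. I will also use the standard matching fact that an endpoint of a path component $C_j$ incident within $C_j$ to a non-$M$ edge is necessarily $M$-free.

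The core step is converting each $C_j$ into short augmenting pieces. If $C_j$ has at most $k$ non-$M$ edges it is itself an $(M,[1,k])$-augmenting path provided $\gain_M(C_j) > 0$ (condition (1) for cycles; for paths the endpoint condition follows from the fact above), and I keep it exactly when its gain is positive, contributing at least $\gain_M(C_j) \ge \gain_M(C_j) - \frac{1}{k+1}w(C_j\setminus M)$. If $C_j$ has $\ell > k$ non-$M$ edges, list them as $f_1,\dots,f_\ell$ along $C_j$; for each offset $r \in \{0,\dots,k\}$ cut $C_j$ by deleting the non-$M$ edges $f_i$ with $i \equiv r \pmod{k+1}$. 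Since only non-matching edges are deleted, every resulting piece either starts/ends at an original $M$-free endpoint of $C_j$ or starts/ends with an $M$-edge, so it is a legal $M$-alternating path satisfying the augmenting-path endpoint condition, and it carries at most $k$ non-$M$ edges. The gain lost at offset $r$ is exactly $\sum_{i \equiv r} w(f_i)$, so averaging over the $k+1$ offsets, the best one loses at most $\frac{1}{k+1}w(C_j\setminus M)$; keeping only the positive-gain pieces (dropping the rest can only increase the total) contributes at least $\gain_M(C_j) - \frac{1}{k+1}w(C_j\setminus M)$ from $C_j$.

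Summing, $\mathcal{S} = \bigcup_j \mathcal{S}_j$ is vertex-disjoint and $\gain_M(\mathcal{S}) \ge \sum_j\bigl(\gain_M(C_j) - \frac{1}{k+1}w(C_j\setminus M)\bigr) = (w(M^*)-w(M)) - \frac{1}{k+1}w(M^*\setminus M) \ge \frac{k}{k+1}w(M^*) - w(M)$, which yields the theorem. The step I expect to demand the most care is verifying that the cut pieces are genuine $(M,[1,k])$-augmenting paths: that cutting only at non-$M$ edges never leaves a non-$M$-free vertex as a dangling non-matching endpoint, that cycles and paths (and the offset classes when $\ell$ is not a multiple of $k+1$, where one piece has fewer than $k$ non-$M$ edges) are handled uniformly, and that zero- or negative-gain pieces are simply discarded; the counting and averaging are routine once the cutting scheme is fixed.
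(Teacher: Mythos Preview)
The paper does not supply its own proof of this theorem; it is quoted from Pettie and Sanders and used as a black box in the proof of Theorem~\ref{thm:HV06}. Your argument is correct and is essentially the standard one: decompose $M\oplus M^*$ into alternating components, chop each long component by deleting every $(k{+}1)$st non-matching edge at the cheapest offset, discard non-positive-gain pieces, and observe that the resulting vertex-disjoint family of $(M,[1,k])$-augmenting paths already has gain at least $\frac{k}{k+1}\,w(M^*)-w(M)$; optimality of $\aug(M,k)$ among vertex-disjoint families (equivalently, maximal independent sets in $I(M)$, since all gains are positive) finishes it. In fact you establish the inequality \emph{without} the factor $\frac{k+1}{2k+1}$, i.e., the sharper bound $\gain(\aug(M,k))\ge \frac{k}{k+1}\,w(M^*)-w(M)$, from which the stated version follows because $\frac{k+1}{2k+1}<1$ (and the case where the right-hand side is negative is handled by $\gain(\aug(M,k))\ge 0$). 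The checks you flag---that cutting only at non-$M$ edges leaves pieces whose new endpoints are incident to $M$-edges (so the augmenting-path endpoint condition is vacuous there), that original path endpoints incident to a non-$M$ edge are $M$-free, and that the pieces are pairwise vertex-disjoint both within and across components---are all routine and go through as you describe.
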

\paragraph{Index-Greedy Augmentation.}
An \emph{index-greedy} set of augmentation paths is an
\mis\ in $I(M)$ obtained by the sequential \mis\ algorithm
where the vertices in $I(M)$ are sorted in non-increasing
gain-index order. We denote an index-greedy augmentation by
$\aug^{ig}(M,k)$.

\begin{comment} {\color{red}Hougardy and Vinkemeier~\cite{HV06} show that it suffices
    to compute an index-greedy set of augmentations.}
\end{comment}

The following proposition states that the gain of every index-greedy augmentation is
a $2(k+1)$-approximation of the gain of $\aug(M,k)$. It follows from the fact that a
greedy \mis\ is a $(k+1)$-approximation of a max-weight \mis\ if each vertex in the
greedy \mis\ intersects at most $(k+1)$ vertices from a max-weight \mis, and from the
fact that the ratio between gains of paths with the same gain-index is at most $2$.
\begin{proposition}\label{prop:gainindex}
  \begin{align*}
    \gain(\aug^{ig}(M,k)) \geq \frac {1}{2(k+1)} \cdot \gain(\aug(M,k)).
  \end{align*}
  \end{proposition}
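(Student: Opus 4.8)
The plan is to combine two known facts, both already flagged in the proposition statement itself: (i) a greedy maximum-weight independent set algorithm achieves a $(k+1)$-approximation whenever each vertex selected by the greedy run intersects at most $k+1$ vertices of an optimal independent set, and (ii) within a single gain-index class, all gains differ by a factor of at most $2$. First I would fix notation: write $A = \aug^{ig}(M,k)$ for the index-greedy \mis\ in $I(M)$, and let $A^* = \aug(M,k)$ be an \mis\ in $I(M)$ of maximum gain. Recall that the class of an augmenting path $p$ is $\gamma_M(p) = \lceil \log_2 \gain_M(p)\rceil$, and that the index-greedy run processes the vertices of $I(M)$ in non-increasing order of $\gamma_M$ (breaking ties within a class arbitrarily).

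The key combinatorial step is the intersection bound: every $(M,[1,k])$-augmenting path uses at most $2k+1$ vertices of $G$ — actually the relevant count is that it contains at most $k$ non-matching edges, hence spans at most $2k+1$ vertices. Two paths in $I(M)$ are adjacent iff they share a $G$-vertex, so for any fixed path $p \in A$, the paths of $A^*$ that intersect $p$ are vertex-disjoint among themselves (since $A^*$ is independent) yet each shares a vertex of $p$; there are at most $2k+1$ such $G$-vertices, but one must check more carefully the right constant. In fact the standard argument (as in Pettie–Sanders, and used implicitly before Theorem~\ref{thm:pettie}) gives that $p$ intersects at most $k+1$ members of any vertex-disjoint family of $(M,[1,k])$-augmenting paths — I would reproduce exactly the counting that the earlier text already relies on, so that the "$k+1$" in the hypothesis of fact (i) is met.

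Given the intersection bound, I would run the standard greedy-\mis\ analysis but with gains replaced by gain-indices to get a clean charging argument. Order $A^* = \{q_1, q_2, \dots\}$ and, processing the greedy run, when a path $p \in A$ is selected it "blocks" (via intersection) some set $S_p \subseteq A^*$ with $|S_p| \le k+1$, and every $q \in A^*$ is blocked by some $p \in A$ with $\gamma_M(p) \ge \gamma_M(q)$ (otherwise $q$ would have been selectable and added when its turn came, or a higher-index neighbor was already in — this is exactly where the non-increasing-index processing order is used). Hence $\sum_{q \in A^*} 2^{\gamma_M(q)} \le \sum_{p \in A}\sum_{q \in S_p} 2^{\gamma_M(p)} \le (k+1)\sum_{p \in A} 2^{\gamma_M(p)}$. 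Now I would pass from gain-indices back to gains: for any augmenting path $p$, $2^{\gamma_M(p)-1} < \gain_M(p) \le 2^{\gamma_M(p)}$, i.e. $2^{\gamma_M(p)} < 2\,\gain_M(p)$ and $2^{\gamma_M(q)} \ge \gain_M(q)$. Chaining these,
\begin{align*}
\gain(A^*) = \sum_{q\in A^*}\gain_M(q) \le \sum_{q\in A^*} 2^{\gamma_M(q)} \le (k+1)\sum_{p\in A} 2^{\gamma_M(p)} < 2(k+1)\sum_{p\in A}\gain_M(p) = 2(k+1)\,\gain(A),
\end{align*}
which rearranges to $\gain(\aug^{ig}(M,k)) > \frac{1}{2(k+1)}\gain(\aug(M,k))$, as claimed.

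The main obstacle I anticipate is not the index-to-gain conversion (that is the routine factor-$2$ loss) but making the blocking/charging argument fully rigorous: one must verify that in a greedy \mis\ processed in non-increasing class order, every unselected vertex $q$ has a selected neighbor whose class is at least that of $q$, and that each selected $p$ accounts for at most $k+1$ vertices of the optimum. The first point is the defining property of the sequential \mis\ scheme (Algorithm~\ref{alg:greedy}) applied with the class-sorted bijection — a vertex is skipped only if it already has a neighbor placed in the set, and because of the processing order that neighbor has class $\ge \gamma_M(q)$. The second point is the geometric intersection bound for short augmenting paths, which I would state as a small separate claim and prove by the vertex-counting argument sketched above. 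Everything else is bookkeeping.
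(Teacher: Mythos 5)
Your proposal is correct and follows essentially the same route as the paper: both arguments assign each path $q\in\aug(M,k)$ to a first blocker $p$ in the index-greedy solution, invoke the asserted fact that any $(M,[1,k])$-augmenting path meets at most $k+1$ members of a vertex-disjoint family, and lose a factor of $2$ from comparing gains within a gain-index class. The only presentational difference is that you route the bookkeeping through $\sum 2^{\gamma_M(\cdot)}$ while the paper phrases it as a partition $\{X_i\}$ of $\aug(M,k)$ and bounds $\min_i \gain(p_i)/\gain(X_i)$, but these are the same charging argument, and both treatments leave the $k{+}1$ intersection bound as a cited fact rather than proving it in place.
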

\begin{proof}
  Let $\aug^{ig}(M,k)=\{p_1,\ldots,p_r\}$, where $\gamma_M(p_i)\geq \gamma_M(p_{i+1})$.
  Namely, $p_i$ is added to the index-greedy augmentation before $p_{i+1}$.  We partition
  $\aug(M,k)$ into disjoint sets $X_1\cup \cdots \cup X_r$ as follows.  Each
  augmentation path $q\in\aug(M,k)$ is in the set $X_i$ with the smallest index $i$
  such that $q=p_i$ or $q$ is a neighbor of $p_i$ (in the intersection graph $I(M)$).

  Since $X_1\cup\cdots \cup X_r$ is a partition of $\aug(M,k)$, it suffices to prove
  that $\min_i \frac{\gain(p_i)}{\gain(X_i)} \geq \frac{1}{2(k+1)}$. Indeed, this inequality follows from two
  facts. First, every $(M,[1,k])$-augmenting path intersects at most $k+1$ paths in
  $\aug(M,k)$. Second, by the ordering of the augmentations in non-increasing gain-index order, it
  follows that $\gain(p_i)\geq \frac 12 \cdot\gain(q)$, for every $q\in X(p_i)$.
\end{proof}

\begin{comment}
The following proposition states that every
lexicographic-\mis\ with respect to gain-indexes yields an
index-greedy set of augmentations.
\begin{proposition}\label{prop:lexmis 2}
  If $\tilde M$ is a lexicographic-\mis\ of $I(M)$ where the class of each
  augmentation path is its gain-index, then $\tilde M=\aug^{ig}(M,k)$.
\end{proposition}
\end{comment}
\paragraph{Outline of the Global Algorithm.}
The main differences between the global approximation algorithms for weighted and
unweighted matchings are: (1)~The length of the augmenting paths (and cycles) does
not grow; instead, during every step, $(M,[1,k])$-augmenting paths are used.  (2)~The
set of disjoint augmenting paths in each iteration in the weighted case is chosen
greedily, giving precedence to augmentations with higher gain-index.  We denote the
computation of an index-greedy augmentation by \igmis. The global algorithm is listed
as Algorithm~\ref{alg:global mwm}.
% \footnote{The number $L$ of iterations of the
%  loop in~\cite{HV06} is $O(1/\eps)$. We think it should be $O(1/\eps \cdot \log
%  (1/\eps))$.}

\paragraph{Algorithm Notation.} The global algorithm uses the following notation.
The algorithm computes a sequence of matchings $M_{i}$ (where $i\in[1,L]$, for
$L=O(\frac{1}{\eps} \log \frac{1}{\eps})$).  We denote the initial empty matching by
$M_{0}$.  Let $I(M_i)$ denote the intersection graph over $(M_i,[1,k])$-augmenting
paths with edges between paths whenever the paths share a vertex. The class of each
vertex in $I(M_i)$ (i.e., augmenting path in $G$) is the gain-index of the path. Let
$\igmis(I(M_i))$ denote a index-greedy \mis\ in $I(M_i)$ with precedence given to
vertices with higher gain-indexes.

\begin{algorithm}[H]%\scriptsize
  \caption{$\text{Global-APX-MWM}(G,\eps)$ - a global version of the
    $(1-\eps)$-approximate \mwm\ Algorithm Hougardy and
    Vinkemeier~\cite{HV06}.} \label{alg:global mwm}
\begin{algorithmic}[1]
  \Require A graph $G=(V,E)$ with edge weights.

  \Ensure A $(1-O(\eps))$-approximate weighted matching

\State $k \gets\lfloor \frac{2}{\eps}\rfloor$.

\State $L \gets \lceil 2\cdot(2k+1)\cdot \ln (2/\eps)\rceil$.

\State $M_0\gets \emptyset$.

\For {$i=1$ to $L$}
\State Let $I(M_{i-1})$ denote the intersection graph of $(M_{i-1},[1,k])$-augmenting paths.
\State $\aug_i \gets \igmis(I(M_{i-1}))$, where the class of each augmenting path is its gain-index.
\State $M_{i}\gets M_{i-1} \oplus E(\aug_i)$.
\EndFor

\State \textbf{Return} $M_L$.
\end{algorithmic}
\end{algorithm}

%%%
\begin{algorithm}[H]%\scriptsize
\caption{$\oracle_{i}(e)$ - a recursive oracle for
membership in the approximate weighted matching.} \label{alg:mwm oracle}
\begin{algorithmic}[1]
\Require A query $e \in E$.
\Ensure Is $e$ an edge in the matching $M_i$?
  \State If $i=0$ then return false.
  \State $\tau\gets \oracle_{i-1}(e)$.
  \State $\rho\gets \proca_{i}(e)$.
  \State \textbf{Return} $\tau \oplus \rho$.
\end{algorithmic}
\end{algorithm}
%%%

%%%
\begin{algorithm}[H]%\scriptsize
\caption{$\proca_{i}(e=(u,v))$ - a procedure for checking membership of an edge $e$ in one
  of the paths in $\aug_i$.} \label{alg:mwm aug}
\begin{algorithmic}[1]
\Require An edge $e \in E$.
\Ensure Does $e$ belong to a path $p\in \aug_i$?
\State \textbf{Listing step:} \Comment Compute all shortest $M_{i-1}$-augmenting
paths that contain $e$.
  \State \quad $B_u \gets BFS_G(u)$ with depth $(2k+1)$.
  \State \quad $B_v \gets BFS_G(v)$ with depth $(2k+1)$.
  \State \quad For every edge $e'$ in the subgraph of $G$ induced by $B_u
\cup B_v$: $\chi_{e'}\gets \oracle_{i-1}(e')$.
  \State \quad $P_{i}(e)\gets$ all $(M_{i-1},[1,k])$-augmenting paths that
  contain $e$.

\State \textbf{\mis-step:}
\Comment Check if $e$ is in one of the augmenting paths is in $P^*_{i,j}$.
  \State \quad For every $p\in P_{i}(e)$: If $p\in \igmis(I(M_{i-1}))$ \textbf{Return} true.
  \State \quad \textbf{Return} false.
\end{algorithmic}
\end{algorithm}
%%%

\begin{algorithm}[H]%\scriptsize
  \caption{$\textit{probe}(i-1,p)$ - simulation of a probe to the intersection graph
    $I(M_{i-1})$ via probes to $G$. } \label{alg:mwm probe}
\begin{algorithmic}[1]
\Require An $(M_{i-1},[1,k])$-augmenting path $p \in I(M_{i-1})$ and the ability to probe $G$.
\Ensure The set of $(M_{i-1},[1,k])$-augmenting paths  that intersect $p$ (i.e., neighbors of
$p$ in $I(M_{i-1})$).
\State For every $v\in p$ do
  \State \quad $B_v \gets BFS_G(v)$ with depth $2k+1$.
  \State \quad For every edge $e'\in B_v$: $\chi_e\gets \oracle_{i-1}(e)$. \Comment
  needed to determine whether a path is an $(M_{i-1},[1,k])$-augmenting path.
  \State \quad $P_i(v)\gets$ all $(M_{i-1},[1,k])$-augmenting paths that contain $v$.
\item \textbf{Return} $\bigcup_{v\in p} P_i(v)$. %$\{ q \mid \exists v\in p \text{ such that } q \in P_i(v)\}$
\end{algorithmic}
\end{algorithm}

\paragraph{Correctness.}
\begin{theorem}[\cite{HV06}]\label{thm:HV06}
  Algorithm~\ref{alg:global mwm} computes a $(1-\eps)$-approximate maximum weighted matching.
\end{theorem}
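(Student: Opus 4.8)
The plan is to track, iteration by iteration, how close the weight $w(M_i)$ comes to $w(M^*)$, using exactly the two structural facts already proved: Theorem~\ref{thm:pettie}, which lower-bounds $\gain(\aug(M,k))$ in terms of the current deficit $\frac{k}{k+1}w(M^*)-w(M)$, and Proposition~\ref{prop:gainindex}, which says the index-greedy augmentation $\aug^{ig}(M,k)$ loses only a factor $2(k+1)$ against $\aug(M,k)$. Since $\aug_i=\igmis(I(M_{i-1}))$ is an independent set in $I(M_{i-1})$, it is a set of vertex-disjoint $(M_{i-1},[1,k])$-augmenting paths (and cycles), so $M_i=M_{i-1}\oplus E(\aug_i)$ is again a matching and $w(M_i)=w(M_{i-1})+\gain_{M_{i-1}}(\aug_i)$. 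Chaining Proposition~\ref{prop:gainindex} and Theorem~\ref{thm:pettie} then gives the one-step recurrence
\[
  w(M_i)\;\ge\;w(M_{i-1})+\frac{1}{2(k+1)}\cdot\frac{k+1}{2k+1}\Big(\tfrac{k}{k+1}w(M^*)-w(M_{i-1})\Big)
  \;=\;(1-\alpha)\,w(M_{i-1})+\alpha\beta\,w(M^*),
\]
where $\alpha\triangleq\frac{1}{2(2k+1)}$ and $\beta\triangleq\frac{k}{k+1}$.

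Next I would set $g_i\triangleq\beta\,w(M^*)-w(M_i)$ and observe that the recurrence rearranges to $g_i\le(1-\alpha)\,g_{i-1}$. Since $M_0=\emptyset$ gives $g_0=\beta\,w(M^*)\le w(M^*)$, iterating yields $g_L\le(1-\alpha)^L\,w(M^*)\le e^{-\alpha L}\,w(M^*)$. The choice $L=\lceil 2(2k+1)\ln(2/\eps)\rceil=\lceil\ln(2/\eps)/\alpha\rceil$ guarantees $\alpha L\ge\ln(2/\eps)$, hence $e^{-\alpha L}\le\eps/2$, so $w(M_L)\ge(\beta-\eps/2)\,w(M^*)$. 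Finally, because $k=\lfloor 2/\eps\rfloor$ we have $k+1>2/\eps$, so $\beta=1-\frac{1}{k+1}>1-\eps/2$, and therefore $w(M_L)\ge(1-\eps)\,w(M^*)$, absorbing the harmless $O(\eps)$ slack as anticipated in the section's footnote.

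The substance of the argument is this geometric-decay bookkeeping, so I do not expect a real obstacle. The two points that still need care are: (i) justifying that $w(M\oplus E(P))=w(M)+\gain_M(P)$ and that the symmetric difference is again a matching whenever $P$ is a vertex-disjoint family of $M$-augmenting paths and cycles --- this is where the alternation structure and the $M$-free-endpoint condition in the definition of an augmenting path enter; and (ii) the floor/ceiling arithmetic tying $k$, $L$, $\alpha$, $\beta$ to $\eps$. The genuinely nontrivial combinatorics --- why bounded-length augmentations processed greedily by gain-index already suffice --- is precisely what Theorem~\ref{thm:pettie} and Proposition~\ref{prop:gainindex} encapsulate, following Hougardy and Vinkemeier~\cite{HV06}, so here it is used as a black box.
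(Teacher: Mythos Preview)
Your proposal is correct and follows essentially the same approach as the paper: both chain Proposition~\ref{prop:gainindex} with Theorem~\ref{thm:pettie} to obtain the one-step recurrence $w(M_i)\ge(1-\alpha)w(M_{i-1})+\alpha\beta\,w(M^*)$ with $\alpha=\tfrac{1}{2(2k+1)}$, $\beta=\tfrac{k}{k+1}$, and then solve it (the paper tracks $\rho_i=w(M_i)/w(M^*)$ and sums a geometric series, you equivalently track the deficit $g_i=\beta w(M^*)-w(M_i)$ and iterate $g_i\le(1-\alpha)g_{i-1}$). Your version is slightly more explicit about the floor/ceiling arithmetic, whereas the paper concludes with $k=\Theta(1/\eps)$, $L=\Theta(\tfrac{1}{\eps}\log\tfrac{1}{\eps})$, but the substance is identical.
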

\begin{proof}
  By Propositions~\ref{prop:gainindex} the augmentations computed in Line 6 of the
  algorithm satisfy
  \begin{align}
    \label{aqarray:1}
    \gain(\aug_i) &\geq \frac{1}{2(k+1)} \cdot \gain(\aug(M_{i-1},k)).
  \end{align}
By Theorem~\ref{thm:pettie}
\begin{align*}
  \gain(\aug(M_{i-1},k)) &\geq \frac{k+1}{2k+1} \left(\frac{k}{k+1}\cdot w(M^*) - w(M_{i-1})\right).
\end{align*}
Let $\rho_i\triangleq w(M_i)/w(M^*)$. It follows that $\rho_i$ satisfies the
recurrence
\begin{align*}
  \rho_i &\geq \left(1-\frac{1}{2(2k+1)}\right)\rho_{i-1} + \frac{k}{k+1}\cdot \frac{1}{2(2k+1)}.
\end{align*}
Hence,
\begin{align*}
  \rho_L & \geq \frac{k}{k+1}\cdot \frac{1}{2(2k+1)}\cdot
  \frac{1-\left(1-\frac{1}{2(2k+1)}\right)^L}
         {1-(1-\frac{1}{2(2k+1)})}\\
&=\frac{k}{k+1}\cdot \left( 1- \left(1-\frac{1}{2(2k+1)} \right)^L \right).
\end{align*}
The theorem follows by setting $k=\Theta\left(\frac{1}{\eps}\right)$ and
$L=\Theta(\frac{1}{\eps}\log \frac{1}{\eps})$.
\end{proof}
\section{A \clocal\ $(1-\eps)$-Approximate \mwm\ Algorithm}\label{sec:clocal mwm}
In this section we present a $\clocal$-algorithm that implements the global
$(1-\eps)$-approximation algorithm for \mwm.

\subsection{Preprocessing}
We assume that the maximum edge weight is known (as well as $n$,$\eps$, and
$\Delta$).  By normalizing the weights, we obtain that the edge weights are in the
interval $(0,1]$. Note, that at least one edge has weight $1$.

We round down the edge weights to the nearest integer multiple of $\eps/n$. Let
$w(e)$ denote the original edge weights and let $w'(e)$ denote the rounded down
weights. Therefore, $w(e)- \eps/n< w'(e)\leq w(e)$. Note that as a result of rounding
down edge weights, the minimum positive weight is at least $\eps/n$.  For every
matching $M$, we have $w(M) - \eps/2 \leq w'(M)$.  As there exists one edge of weight
$1$, the effect of discretization of edge weights decreases the approximation factor
by at most a factor of $(1-\eps/2)$.

\paragraph{Number of Distinct Gain-Indexes.} The rounded edge weights are multiples
of $\eps/n$ in the interval $[\eps/n,1]$.  Let
$$\wmin(\eps)\triangleq\min\{w(e) \mid  w(e)\geq\eps/n\}.$$
Note that $\wmin(\eps)\geq \eps/n$.\footnote{We remark
that $\wmin(\eps)$ may be much
  bigger than $\eps/n$. For example, if $\wmin$ is constant (say, $1/100$). The
  analysis of the probe complexity and the probe radius uses $1/\wmin(\eps)$ instead
  of $n/\eps$ to emphasize the improved results whenever $1/\wmin(\eps)$ is
  significantly smaller than $2n/\eps$.}

As the edge weights are multiples of $\eps/n$ in the
interval $[\wmin(\eps),1]$, it follows that the gains of
$(M,[1,k])$-augmenting paths are in the range
$[\wmin(\eps),k]$.  Hence $(M,[1,k])$-augmenting paths have
at most $O(\log (k/\wmin(\eps))$ distinct gain-indexes.

\subsection{\clocal-Implementation}
\paragraph{\clocal-algorithm for index-greedy \mis.}
A sequential algorithm for computing an index-greedy \mis\ of $G$ adds vertices to
the \mis\ by scanning the vertices in nonincreasing gain-index order.  We refer to
this algorithm as $\igmis$.  Following Section~\ref{sec:lin}, a simulation of such a
sequential algorithm is obtained by computing an acyclic orientation. For
\igmis, the orientation is induced by the vertex coloring that is the
Cartesian product of the gain-index of the vertex and its (regular) color.  Lexicographic
ordering is used to compare the colors.  We summarize the probe complexity and probe
radius of the \clocal-algorithm for \igmis\ in the following lemma (recall that $\ell$ denotes
the number of distinct index-gains).
\begin{lemma}\label{lemma:lexmis}
  An index-greedy \mis\ can be computed by a $\clocal$-algorithm with the following
  properties:
  \begin{enumerate}
  \item The probe radius is $O(\Delta^2\cdot \ell+\log^*n)$.
  \item The probe complexity is $O(\Delta^{\Delta^2\cdot \ell+1}\cdot (\log^* n +
    \Delta^2))=\Delta^{O(\Delta^2\cdot \ell)}\cdot \log^*n$.
  \end{enumerate}
\end{lemma}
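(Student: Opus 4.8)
The plan is to prove Lemma~\ref{lemma:lexmis} by specializing the machinery of Section~\ref{sec:lin} to the index-greedy \mis\ problem and tracking how the Cartesian-product coloring affects the parameters. First I would observe that \igmis\ is a sequential algorithm in the sense of Algorithm~\ref{alg:greedy}: a vertex $v$ is added to the \mis\ iff none of its already-scanned neighbors is in the \mis, where the scan order is any order that is nonincreasing in gain-index (ties broken arbitrarily but consistently). By Lemma~\ref{lemma:linear} the output of the sequential algorithm depends only on the partial order, so it suffices to exhibit an acyclic orientation whose induced partial order refines ``higher gain-index first''. The natural choice, as stated before the lemma, is to color each vertex by the pair $(\gamma_M(v), c(v))$, where $c$ is an ordinary proper coloring of $I(M)$ with $O(\Delta^2)$ colors (from Theorem~\ref{thm:colorpoly}), and to orient each edge from the lexicographically larger pair to the smaller one. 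Since $c$ is a proper coloring, two adjacent vertices never receive the same pair, so the orientation is well defined and acyclic; and because gain-index is the primary coordinate, every directed path first weakly decreases in gain-index, so a linear extension of this orientation is exactly a valid \igmis\ scan order.

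Next I would bound the radius of this orientation. The number of distinct colors in the product is at most $\ell \cdot O(\Delta^2) = O(\Delta^2 \cdot \ell)$, where $\ell$ is the number of distinct gain-indexes, so by Proposition~\ref{prop:color2ori} the orientation $H$ satisfies $\rad(H) = O(\Delta^2 \cdot \ell)$, and by Proposition~\ref{prop:rad2reach} its reachability is $\Delta^{O(\Delta^2 \cdot \ell)}$ (using that $\Delta_{I(M)}$, the max degree of the intersection graph, is the relevant degree here; I would just call it $\Delta$ as the lemma does, absorbing polynomial factors into the $O(\cdot)$ in the exponent). For the probe radius: computing the color $(\gamma_M(v), c(v))$ of a single vertex requires the gain-index, which is a local function of the path $v$ (computable from $O(1)$-radius information in $I(M)$, hence trivially), plus the $O(\Delta^2)$-coloring of $I(M)$, whose probe radius is $O(\log^* n)$ by Theorem~\ref{thm:colorpoly}. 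Simulating the sequential algorithm on query $v$ performs a restricted DFS over the reachability set of $v$ in $H$, which lies within a ball of radius $\rad(H) = O(\Delta^2 \cdot \ell)$ in $I(M)$. Composing, the probe radius is $O(\Delta^2 \cdot \ell) + O(\log^* n)$, giving part~(1). For part~(2), the probe complexity is the product of the reachability $\Delta^{O(\Delta^2 \cdot \ell)}$ and the per-step cost of computing a color, which is $O(\log^* n + \Delta^2)$ probes; multiplying gives $\Delta^{O(\Delta^2 \cdot \ell)} \cdot (\log^* n + \Delta^2) = \Delta^{O(\Delta^2 \cdot \ell)} \cdot \log^* n$, absorbing the additive $\Delta^2$ into the exponential factor, which matches the stated bound.

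The main obstacle, as I see it, is not any single hard estimate but rather being careful about \emph{which} graph's degree and vertex count appear where: the coloring and orientation live on the intersection graph $I(M)$, not on $G$, so $\Delta$ in the lemma should really be read as the maximum degree of $I(M)$ (which is itself polynomial in $\Delta_G$ and $k$), and the lemma statement is implicitly using the convention that these polynomial blowups are harmless inside the $O(\cdot)$ in the exponent. A secondary subtlety is verifying that the product coloring genuinely yields an \igmis\ and not merely some \mis: one must check that orienting by lexicographic order on $(\gamma_M, c)$ makes every linear extension nonincreasing in gain-index, so that when Line~\ref{line:f} of Algorithm~\ref{alg:greedy} (here: ``add $v$ iff no scanned neighbor is in the set'') is run in that order, the result is precisely the index-greedy \mis\ used in Proposition~\ref{prop:gainindex}. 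This is immediate from the lexicographic tie-breaking but deserves an explicit sentence. Everything else is bookkeeping along the lines of Theorem~\ref{thm:reduce} and Lemma~\ref{lemma:simul}.
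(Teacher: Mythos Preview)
Your proposal is correct and follows essentially the same approach as the paper: both use the Cartesian-product coloring $(\gamma_M(v),c(v))$ with the $O(\Delta^2)$-coloring $c$ from Theorem~\ref{thm:colorpoly}, orient lexicographically, and then bound the probe radius as (number of product colors) $+$ (radius of the coloring algorithm) and the probe complexity as (reachability) $\times$ (probe complexity of the coloring algorithm). If anything, you are more explicit than the paper about why the lexicographic orientation guarantees that every linear extension is a valid \igmis\ scan order and about the $I(M)$-versus-$G$ degree convention; the paper's proof is a terse two-paragraph accounting that takes those points for granted.
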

\begin{proof}
  The probe radius is simply the number of colors (in the Cartesian product) plus the
  radius of the regular $\Delta^2$-coloring algorithm. The number of colors is $\Delta^2 \cdot
  \ell$ and the radius of the $\Delta^2$-coloring algorithm is $O(\log^*n)$.

  The probe complexity is bounded by the reachability of the orientation times the
  probe complexity of the regular $\Delta^2$-coloring algorithm. The reachability of
  the orientation is bounded by $\Delta^{\Delta^2\cdot \ell}$. The probe complexity
  of the regular $\Delta^2$-coloring algorithm is $\Delta \cdot (\log^* n +
  \Delta^2)$, and the lemma follows.
\end{proof}
Note that the \clocal-algorithm computes a \igmis\ over $I(M_i)$ in which the class
of a vertex equals its gain-index. As there are $O(\log(k/\wmin(\eps)))$ distinct
gain-indexes, it follows that we can apply Lemma~\ref{lemma:lexmis} with
$\ell=O(\log(k/\wmin(\eps)))$ and $\Delta=\Delta(I(M_i))$.

\paragraph{\clocal\ implementation of the global algorithm.}
The implementation also uses the local improvement technique of Nguyen and
Onak~\cite{onak2008} repeating the same method used in Section~\ref{sec:clocal mcm}.
The pseudo-code of the \clocal-algorithm appears as Algorithms~\ref{alg:mwm
  oracle}-\ref{alg:mwm probe}. This \clocal-algorithm implements
Algorithm~\ref{alg:global mwm}.

\medskip\noindent By induction, one can
prove that $\oracle_i(e)$ computes membership of $e$ in $M_i$. From Theorem~\ref{thm:HV06} we
obtain that $\oracle_L$ is an $(1-\eps)$-approximate $\clocal$-algorithm. The
following theorem analyzes the probe complexity of $\oracle_L$ (the theorem holds
under the assumption that $\wmin(\eps)<1$).
\begin{theorem}\label{thm:clocal mwm}
  There exists a $\clocal[\varphi]$-algorithm for $(1-\eps)$-approximate maximum
  weighted matching with
  \begin{align*}
    \varphi &= \left(\frac{1}{\wmin(\eps)}\right)^{\Delta^{O(1/\eps)}} \cdot
    (\log^*n)^{O(\frac{1}{\eps}\cdot \log\frac{1}{\eps})}
  \end{align*}
\end{theorem}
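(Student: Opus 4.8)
The plan is to realize the global Algorithm~\ref{alg:global mwm} by the recursive oracle $\oracle_L$ of Algorithms~\ref{alg:mwm oracle}--\ref{alg:mwm probe}, using exactly the device of the unweighted case (Section~\ref{sec:clocal mcm}); the only differences are that augmenting paths now have a \emph{fixed} length at most $2k+1$, and that the maximal independent set of each intersection graph $I(M_{i-1})$ is replaced by the index-greedy one computed by the \clocal-algorithm of Lemma~\ref{lemma:lexmis}. Consistency and the $(1-\eps)$-approximation guarantee then follow from a routine induction on $i$, showing that $\oracle_i(e)$ decides $e\in M_i$ (Algorithm~\ref{alg:mwm oracle} returns $\tau\oplus\rho$, matching $M_i=M_{i-1}\oplus E(\aug_i)$), together with Theorem~\ref{thm:HV06}. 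Hence the substantive part of the theorem is the bound on the probe complexity $|\oracle_L|$, and that is what I would focus on.

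Next I would set up the recurrence, paralleling Lemma~\ref{lemma:mcm}. Write $|B|$ for the number of probes to $G$ made by a routine $B$, and set $\Delta_i \triangleq \Delta(I(M_{i-1}))$, $n_i \triangleq n(I(M_{i-1}))$. Algorithm~\ref{alg:mwm oracle} gives $|\oracle_i|\le|\oracle_{i-1}|+|\proca_i|$ with $|\oracle_0|=0$. Inspecting Algorithm~\ref{alg:mwm aug}, $\proca_i$ runs two depth-$(2k+1)$ BFS's ($\Delta^{O(k)}$ probes) with one $\oracle_{i-1}$ call per edge of the explored balls, and then runs the \igmis-algorithm on $I(M_{i-1})$ once for each of the $|P_i(e)|\le\poly(k)\cdot\Delta^{O(k)}$ augmenting paths through $e$, where every probe of that run into $I(M_{i-1})$ is serviced by $\probe(i-1,\cdot)$ (Algorithm~\ref{alg:mwm probe}) at a cost of $\poly(k)\cdot\Delta^{O(k)}\cdot|\oracle_{i-1}|$ probes. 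The key inputs are that $\Delta_i\le\poly(k)\cdot\Delta^{O(k)}=\Delta^{O(1/\eps)}$, that $n_i\le n\cdot\Delta^{O(k)}$ so $\log^*n_i=O(\log^*n)$, and that (as shown just before the theorem) $I(M_{i-1})$ has only $\ell=O(\log(k/\wmin(\eps)))$ distinct gain-index classes; thus Lemma~\ref{lemma:lexmis} bounds the direct-access \igmis\ probe complexity on $I(M_{i-1})$ by $\Delta_i^{O(\Delta_i^2\cdot\ell)}\cdot\log^*n_i\le\Delta^{\Delta^{O(1/\eps)}\cdot\ell}\cdot O(\log^*n)$. Combining these and absorbing the $\poly(k)\cdot\Delta^{O(k)}$ factors into the dominant term yields $|\proca_i|\le\Delta^{\Delta^{O(1/\eps)}\cdot\ell}\cdot\log^*n\cdot|\oracle_{i-1}|$, hence $|\oracle_i|\le 2\cdot\Delta^{\Delta^{O(1/\eps)}\cdot\ell}\cdot\log^*n\cdot|\oracle_{i-1}|$.

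Finally I would unroll this over the $L=O\!\left(\frac1\eps\log\frac1\eps\right)$ iterations of Algorithm~\ref{alg:global mwm}, obtaining
\[
  |\oracle_L|\ \le\ \Bigl(\Delta^{\Delta^{O(1/\eps)}\cdot\ell}\cdot\log^*n\Bigr)^{O(L)}\ =\ \Delta^{\Delta^{O(1/\eps)}\cdot\ell\cdot O(L)}\cdot(\log^*n)^{O(L)}.
\]
The second factor is already $(\log^*n)^{O(\frac1\eps\log\frac1\eps)}$. For the first, use $\Delta^{\ell}=(k/\wmin(\eps))^{O(\log\Delta)}$ to rewrite it as $\bigl(k/\wmin(\eps)\bigr)^{O(\log\Delta)\cdot\Delta^{O(1/\eps)}\cdot O(L)}$; since $\log\Delta\cdot L\le\Delta^{O(1/\eps)}$ for $\Delta\ge2$, $\eps<1$, the exponent collapses to $\Delta^{O(1/\eps)}$, and absorbing $k=O(1/\eps)$ in the base (equivalently, bounding $1/\wmin(\eps)$ by $n/\eps$ wherever needed, which is always legitimate as $\wmin(\eps)\ge\eps/n$) turns this into $\bigl(1/\wmin(\eps)\bigr)^{\Delta^{O(1/\eps)}}$. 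Multiplying the two factors gives the claimed $\varphi$. I expect the main obstacle to be exactly this last bookkeeping: correctly tracking how the gain-index parameter $\ell$ — which has no counterpart in the unweighted Lemma~\ref{lemma:mcm} — enters through the reachability term $\Delta_i^{O(\Delta_i^2\cdot\ell)}$ of Lemma~\ref{lemma:lexmis}, is amplified to the $O(L)$-th power by the recursion depth, and then has to be shown to fold — together with all the $\poly(k)$, $\poly(1/\eps)$, $\log^*n_i$ and $2^{O(L)}$ nuisance factors — cleanly into the $\Delta^{O(1/\eps)}$ exponent over the base $1/\wmin(\eps)$.
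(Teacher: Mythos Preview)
Your proposal is correct and follows essentially the same route as the paper: set up the recurrence $|\oracle_i|\le|\oracle_{i-1}|+|\proca_i|$, bound $|\proca_i|$ by the listing cost plus $|P_i(e)|\cdot|\igmis|\cdot|\probe|$, plug in Lemma~\ref{lemma:lexmis} with $\ell=O(\log(k/\wmin(\eps)))$ and $\Delta_i=\Delta^{O(k)}$, unroll over $L$ levels, and simplify via $\Delta^{\log(1/\wmin(\eps))}=(1/\wmin(\eps))^{\log\Delta}$ while absorbing the $L$ and $\log\Delta$ factors into $\Delta^{O(1/\eps)}$. The only cosmetic differences are that the paper uses the cruder $n_i\le n^{2k+1}$ and collapses the $L$-th power implicitly inside the $\Delta^{O(k)}$ exponent, whereas you carry the $O(L)$ explicitly; your parenthetical about ``bounding $1/\wmin(\eps)$ by $n/\eps$'' is not actually how the $k$ in the base gets absorbed (it goes via $\log k\le\Delta^{O(1/\eps)}$ in the exponent, under $\Delta\ge2$ and the stated assumption $\wmin(\eps)<1$), but this is the same looseness present in the paper's own final line.
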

\begin{proof}
  The analysis is similar to the one in Lemma~\ref{lemma:mcm}. The key differences
  are as follows: (1)~The augmenting paths in all recursive calls have length at most
  $k$. Hence the intersection graph is not the same graph in both algorithms. (2)~A
  lexicographic-\mis\ is computed instead of an \mis. The analysis
  proceeds as follows.
\begin{align*}
  |\oracle_i| &\leq  |\oracle_{i-1}| + |\proca_i| \\
  &\leq  |\oracle_{i-1}| +  2\cdot \Delta^{2k+1} + \Delta^{2k+2} \cdot |\oracle_{i-1}| +
  |P_i(e)|\cdot |\igmis(I(M_{i-1}))|\cdot |\probe(i)|\\
&\leq \Delta^{O(k)}\cdot |\oracle_{i-1}| + \Delta^{O(k)} \cdot
(\Delta_i^{O(\Delta_i^2\cdot\log (k/\wmin(\eps)))} \cdot \log^* n_i) \cdot ( \Delta^{O(k)}\cdot |\oracle_{i-1}|).
\end{align*}
Because $n_i\leq n^{2k+1}$ and $\Delta_i  = O((2k+1)^2 \cdot \Delta^{2k+1})$,
it follows that
\begin{align}
\nonumber
  |\oracle_L| &\leq  \Delta^{\Delta^{O(k)}\cdot \log(1/\wmin(\eps))} \cdot \log^* n \cdot |\oracle_{L-1}|
\\
  \label{eq:Qi}
&=\Delta^{\Delta^{O(k)}\cdot \log(1/\wmin(\eps))} \cdot (\log^* n)^L.
\end{align}
Note that $\Delta^{\log(1/\wmin(\eps))}=(1/\wmin(\eps))^{\log(\Delta)}$, and the lemma follows.

\end{proof}

\section{A \dlocal\ $(1-\eps)$-Approximate \mwm\ Algorithm}
In this section, we present a \dlocal-algorithm that
computes a $(1-\eps)$-approximate weighted matching. The
algorithm is based on the same design methodology as in
Section~\ref{sec:dlocal mcm}. Namely, we bound the probe
radius of the \clocal-algorithm for \mwm\ (see
Lemma~\ref{lemma:radius weight}) and apply the simulation
technique (see Proposition~\ref{prop:simul}).

\begin{theorem}\label{thm:distalg mwm}
  There is a deterministic
$\dlocal[r ]$
-algorithm for computing a
  $(1-\eps)$-approximate \mwm\ with
  \[
  r_G(\oracle_L) \leq
O\left(\frac{1}{\eps^2} \cdot \log \frac{1}{\eps}\cdot \log^*n\right) +
    \Delta^{O(1/\eps)}\cdot \log \left(\frac{1}{\wmin(\eps)}
         \right)
  \]
\end{theorem}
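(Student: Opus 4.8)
My plan is to follow the same two-step template as the proof of Theorem~\ref{thm:distalg}. Since $\oracle_L$ (Algorithm~\ref{alg:mwm oracle}) is a stateless, deterministic \clocal-algorithm that computes a $(1-\eps)$-approximate \mwm\ (correctness by induction on $i$, using Theorem~\ref{thm:HV06}), Proposition~\ref{prop:simul} turns it into a deterministic \dlocal-algorithm whose round complexity equals the probe radius $r_G(\oracle_L)$. Hence it suffices to establish the bound on $r_G(\oracle_L)$ claimed in the theorem; this is the content of a lemma (Lemma~\ref{lemma:radius weight}) that I would state separately and prove by unrolling a recurrence, exactly in the style of Lemma~\ref{lemma:radius}.

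For the recurrence, write $k=\lfloor 2/\eps\rfloor$ and $L=\Theta(\frac1\eps\log\frac1\eps)$, and inspect Algorithms~\ref{alg:mwm oracle}--\ref{alg:mwm probe}. As in the unweighted case,
\[
  r_G(\oracle_i)\le\max\{r_G(\oracle_{i-1}),\,r_G(\proca_i)\},\qquad
  r_G(\proca_i)\le (2k+1)+\max\{r_G(\oracle_{i-1}),\,r_G(\igmis(I(M_{i-1})))\},
\]
the only structural change from the \mcm\ analysis being that the BFS depth is now the fixed value $2k+1$ rather than the growing value $2i-1$. To bound $r_G(\igmis(I(M_{i-1})))$ I would compose Lemma~\ref{lemma:lexmis} with the cost of simulating probes to $I(M_{i-1})$ by probes to $G$: any two intersecting $(M_{i-1},[1,k])$-augmenting paths lie within $G$-distance $2k+1$ of one another, so --- just as for the intersection graphs $H_i$ in Lemma~\ref{lemma:radius} --- a path of length $r$ in $I(M_{i-1})$ is witnessed by $G$-probes within distance $(2k+1)\cdot r$, plus the internal radius $r_G(\probe(i-1))=O(k)+r_G(\oracle_{i-1})$ of a single invocation of Algorithm~\ref{alg:mwm probe}. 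Plugging $\ell=O(\log(k/\wmin(\eps)))$, $\Delta_i=O((2k+1)^2\Delta^{2k+1})$ and $n_i\le n^{2k+1}$ into Lemma~\ref{lemma:lexmis} gives $r_{I(M_{i-1})}(\igmis)=\Delta^{O(k)}\cdot\ell+O(\log^* n)$, which collapses everything to the \emph{additive} recurrence
\[
  r_G(\oracle_i)\le r_G(\oracle_{i-1})+O(k)\cdot\big(\Delta^{O(k)}\cdot\ell+\log^* n\big).
\]

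Unrolling over the $L$ iterations from $r_G(\oracle_0)=0$ and substituting $k=\Theta(1/\eps)$, $L=\Theta(\frac1\eps\log\frac1\eps)$, and $\ell=O(\log(1/\wmin(\eps)))$ (using $\wmin(\eps)<1$ to absorb $\log k$), the two resulting terms separate as $L\cdot O(k)\cdot\log^* n=O(\frac{1}{\eps^2}\log\frac1\eps)\cdot\log^* n$ and $L\cdot O(k)\cdot\Delta^{O(k)}\cdot\ell=\Delta^{O(1/\eps)}\cdot\log(1/\wmin(\eps))$ (the $\poly(1/\eps)$ prefactor is absorbed into the exponent of $\Delta^{O(1/\eps)}$, using $\Delta\ge2$), and their sum is exactly the stated bound on $r_G(\oracle_L)$; the theorem then follows from Proposition~\ref{prop:simul}. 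The main obstacle, precisely as in the \mcm\ case, is controlling the two nested layers of simulation --- the recursive $\oracle_{i-1}$ calls sit both inside the listing step of $\proca_i$ and inside every $\probe(i-1,\cdot)$ invocation performed during the \igmis\ computation on $I(M_{i-1})$ --- so that the dependence on $r_G(\oracle_{i-1})$ enters \emph{additively} rather than multiplicatively across the $L$ rounds; this is exactly what the observation ``a length-$r$ walk in the intersection graph is a $G$-walk of length $O(k)\cdot r$'' buys us. A secondary point to verify is that the $\log^*$ factor is not inflated by the blow-up $n_i\le n^{2k+1}$, which is handled by $\log^*(n^{c})=\log^* n+O(\log^*\log c)$.
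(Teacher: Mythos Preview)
Your proposal is correct and follows essentially the same approach as the paper: reduce to Proposition~\ref{prop:simul}, then bound $r_G(\oracle_L)$ via the additive recurrence $r_G(\oracle_i)\le r_G(\oracle_{i-1})+O(k\log^*n)+\Delta^{O(k)}\log(1/\wmin(\eps))$ obtained by composing Lemma~\ref{lemma:lexmis} with the $O(k)$-per-hop cost of simulating $I(M_{i-1})$-probes in $G$, and unroll over the $L=\Theta(\frac1\eps\log\frac1\eps)$ iterations. Your write-up is in fact a bit more explicit than the paper's about why the $\poly(1/\eps)$ prefactors get absorbed and why $\log^*(n_i)=O(\log^* n)$, but the structure and the key observation (that the dependence on $r_G(\oracle_{i-1})$ is additive, not multiplicative) are identical.
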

\medskip\noindent
The proof of Theorem~\ref{thm:distalg mwm} is based on the following lemma.
Recall that ignoring lightweight edges implies that $\frac{1}{\wmin(\eps)} \leq
\frac{n}{\eps}$.

\begin{lemma}\label{lemma:radius weight}
The probe radius of the \clocal-algorithm $\oracle_{L}$ is
  \[
  r_G(\oracle_L) \leq
O\left(\frac{1}{\eps^2} \cdot \log \frac{1}{\eps}\cdot \log^*n\right) +
    \Delta^{O(1/\eps)}\cdot \log \left(\frac{1}{\wmin(\eps)}
         \right)
  \]
\end{lemma}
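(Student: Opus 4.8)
The plan is to follow the proof of Lemma~\ref{lemma:radius} (the \mcm\ case) almost verbatim, making only the two changes dictated by Algorithm~\ref{alg:mwm aug} and Algorithm~\ref{alg:mwm probe}: the BFS depth is the fixed value $2k+1$ rather than the growing value $2i-1$, and the subroutine run on the intersection graph is the index-greedy \mis\ of Lemma~\ref{lemma:lexmis} rather than an ordinary \mis. First I would set up the same recurrences: $r_G(\oracle_0)=0$, $r_G(\oracle_i)=\max\{r_G(\oracle_{i-1}),\,r_G(\proca_i)\}$, and, reading off the pseudocode,
\[
  r_G(\proca_i)\leq \max\bigl\{(2k+1)+r_G(\oracle_{i-1}),\ (2k+1)+r_G(\igmis(I(M_{i-1})))\bigr\},
  \qquad r_G(\probe(i-1))\leq O(k)+r_G(\oracle_{i-1}).
\]

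The crux is to bound $r_G(\igmis(I(M_{i-1})))$ by composing the radius of \igmis\ measured \emph{inside} the intersection graph $I(M_{i-1})$ with the blow-up incurred when a probe to $I(M_{i-1})$ is unfolded into probes of $G$ via $\probe(i-1,\cdot)$. As in Lemma~\ref{lemma:radius}, the point is that the orientation-based \igmis\ algorithm only traverses a directed path $p_1,\dots,p_r$ of $I(M_{i-1})$ whose length $r$ is at most the number of colors it uses; this path is realized in $G$ by the successive calls $\probe(i-1,p_j)$, and since consecutive paths $p_j,p_{j+1}$ share a vertex of $G$ and each $p_j$ spans at most $2k+1$ edges, the $G$-distance from a fixed vertex of $p_1$ to any vertex probed along the way is at most $O(k)\cdot r + r_G(\oracle_{i-1})$. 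Hence
\[
  r_G(\igmis(I(M_{i-1}))) \;\leq\; O(k)\cdot r_{I(M_{i-1})}\bigl(\igmis(I(M_{i-1}))\bigr) + r_G(\oracle_{i-1}),
\]
and Lemma~\ref{lemma:lexmis} gives $r_{I(M_{i-1})}(\igmis(I(M_{i-1})))=O(\Delta_i^2\cdot\ell+\log^* n_i)$, where $\Delta_i=\Delta(I(M_{i-1}))$, $n_i=n(I(M_{i-1}))$, and $\ell$ is the number of distinct gain-indexes. Combining the three inequalities collapses everything into the one-step recurrence $r_G(\oracle_i)\leq r_G(\oracle_{i-1}) + O\bigl(k\cdot\Delta_i^2\cdot\ell + k\cdot\log^* n_i\bigr)$.

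To finish I would plug in the estimates already established in the proof of Theorem~\ref{thm:clocal mwm}: $\Delta_i=O((2k+1)^2\Delta^{2k+1})$, so $k\cdot\Delta_i^2=\Delta^{O(k)}$ after absorbing the $\poly(k)$ factor (using $\Delta\geq 2$); $n_i\leq n^{2k+1}$, so $\log^* n_i=O(\log^* n)$; and $\ell=O\bigl(\log(k/\wmin(\eps))\bigr)$ from the preprocessing discussion. Unrolling the recurrence over $i=1,\dots,L$ and substituting $k=\Theta(1/\eps)$ and $L=\Theta(\tfrac1\eps\log\tfrac1\eps)$ gives
\[
  r_G(\oracle_L) \;\leq\; \sum_{i=1}^{L} O\bigl(\Delta^{O(k)}\cdot\log(k/\wmin(\eps)) + k\cdot\log^* n\bigr)
  \;=\; O\!\Bigl(\tfrac1{\eps^2}\log\tfrac1\eps\cdot\log^* n\Bigr) + \Delta^{O(1/\eps)}\cdot\log\!\Bigl(\tfrac1{\wmin(\eps)}\Bigr),
\]
where the $\poly(1/\eps)$ and $\log(1/\eps)$ factors multiplying $\Delta^{O(k)}$ are absorbed into $\Delta^{O(1/\eps)}$; Theorem~\ref{thm:distalg mwm} then follows by the simulation of Proposition~\ref{prop:simul}.

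I expect the main obstacle to be keeping the two different kinds of ``zoom'' separate in the bookkeeping: the \emph{additive} accumulation over the $L$ recursion levels of $\oracle$, versus the \emph{multiplicative} $O(k)$-per-hop cost of unfolding an $I(M_{i-1})$-probe into $G$-probes. Getting the $O(k)\cdot r_{I(M_{i-1})}(\igmis)$ term and the additive $r_G(\oracle_{i-1})$ term split exactly right, so that the $\Delta^{O(1/\eps)}$ contribution and the $\log^* n$ contribution are \emph{summed} rather than multiplied, is the one delicate point, but it is entirely parallel to the argument already carried out for Lemma~\ref{lemma:radius}.
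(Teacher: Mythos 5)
Your proposal is correct and follows essentially the same route as the paper's proof: the same three probe-radius recurrences (for $\oracle_i$, $\proca_i$, and the simulation of $\igmis$ over $I(M_{i-1})$ via $\probe$), the same invocation of Lemma~\ref{lemma:lexmis} to bound $r_{I(M_{i-1})}(\igmis)$, the same substitutions $\Delta_i=\poly(k)\,\Delta^{O(k)}$, $n_i\leq n^{2k+1}$, $\ell=O(\log(k/\wmin(\eps)))$, and the same telescoping over $L=\Theta(\tfrac1\eps\log\tfrac1\eps)$ levels. The only cosmetic difference is that you fold $r_G(\probe(i-1,p))\leq O(k)+r_G(\oracle_{i-1})$ directly into the bound on $r_G(\igmis(I(M_{i-1})))$, whereas the paper keeps it as a separate displayed inequality; the resulting one-step recurrence is the same up to an absorbed additive $O(k)$.
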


\begin{proof}
  The description of the oracle $\oracle_i$ implies that the probe radius
  $r_G(\oracle_i)$ satisfies the following recurrence:
  \begin{align*}
    r_G(\oracle_i)&=
    \begin{cases}
      0 & \text{if $i=0$},\\
      \max\{r_G(\oracle_{i-1}), r_G(\proca_i)\}& \text{else}.
\end{cases}
\end{align*}
The description of the procedure $\proca_i$ implies
that the probe radius $r_G(\proca_i)$ satisfies the
following recurrence:
  \begin{align*}
    r_{G}(\proca_i)&\leq O(k)+
\max\{r_G(\oracle_{i-1}), r_G(\igmis(I(M_{i-1})))\}.
\end{align*}
The probe radius of \igmis\ with respect to $G$ satisfies
\begin{align*}
  r_{G}(\igmis(I(M_{i-1}))) &\leq
  O(k)\cdot r_{I(M_{i-1})}(\igmis(I(M_{i-1}))) + r_G(\textit{probe}(i-1,p)).
\end{align*}
By Lemma~\ref{lemma:lexmis}, $r_{I(M_{i-1})}(\igmis(I(M_{i-1})))\leq O(\log ^*n) +
\Delta^{O(k)}\cdot \log (1/\wmin(\eps))$.

The probe radius of a simulation of a probe to $I(M_{i-1})$ satisfies
\begin{align*}
  r_G(\textit{probe}(i-1,p))& \leq O(k)+r_G(\oracle_{i-1}).
\end{align*}

It follows that
\begin{align*}
  r_G(\oracle_i) &\leq r_G(\oracle_{i-1})+
O(k\cdot \log^*n) + \Delta^{O(k)}\cdot \log (1/\wmin(\eps))\\
&\leq i\cdot \left(O(k\cdot \log^*n) + \Delta^{O(k)}\cdot \log (1/\wmin(\eps))\right),
\end{align*}
and the lemma follows.
\end{proof}

\section{Upper Bounds and Lower Bounds for \orad\ in the \dlocal\ Model}
In this section we consider $\dlocal$-algorithms for computing orientations over
bounded degree graphs.  The goal is to find an orientation with the smallest possible
radius (\orad). We first list \dlocal$[\log^*n]$-algorithms for $\orad$ that are
obtained from vertex coloring algorithms in which the radius of the orientation is
polynomial in the maximum degree of the graph. We then prove that every orientation
that computed in $o(\log^* n)$ rounds must have a radius that grows as a function of
$n$. Thus, $\Theta(\log^* n)$ rounds are necessary and sufficient for computing an
acyclic orientation with reachability that is bounded by a function of the maximum
degree.

\subsection{\dlocal\ Algorithms for \orad}\label{sec:distobr}
As observed in Proposition~\ref{prop:color2ori}, every
vertex coloring induces an acyclic orientation. This
implies that a \dlocal\ $c$-coloring algorithm can be used
to compute an acyclic orientation with radius $c$ by
performing the same number of rounds .  The distributed
coloring
algorithms~\cite[Theorem~4.2]{linial}~\cite[Theorem~4.6]{barenboim2009distributed}
imply the following corollary.
\begin{coro}\label{coro:dobr}
There are \dlocal\ algorithms for \orad\ with the following parameters:
\begin{enumerate}
  \item Radius $O({\Delta^2})$ in $O(\log^* n)$ rounds.
  \item Radius $\Delta+1$ in $O(\Delta) + \frac 12 \log^* n$ rounds.
\end{enumerate}
\end{coro}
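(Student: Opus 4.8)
The plan is to obtain both items as immediate consequences of Proposition~\ref{prop:color2ori} together with two known distributed vertex-coloring algorithms. Recall that Proposition~\ref{prop:color2ori} shows that any $c$-coloring of $G$ induces an acyclic orientation of radius at most $c-1$: orient every edge from its higher-colored endpoint to its lower-colored one; monotonicity of the colors along directed paths gives acyclicity and bounds the number of vertices on any directed path by $c$. Hence it suffices to exhibit \dlocal\ coloring algorithms with the stated round complexity and color bound, and then append one round in which every vertex learns the colors of its neighbors so that it can locally decide the orientation of each of its incident edges.

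For item~(1), I would invoke Linial's coloring algorithm (\cite[Theorem~4.2]{linial}), which produces an $O(\Delta^2)$-coloring in $O(\log^* n)$ rounds. Applying the orientation rule of Proposition~\ref{prop:color2ori} on top of this coloring yields an acyclic orientation of radius $O(\Delta^2)$, and the single extra round of color exchange is absorbed into the $O(\log^* n)$ bound. For item~(2), I would instead invoke the $(\Delta+1)$-coloring algorithm of Barenboim and Elkin (\cite[Theorem~4.6]{barenboim2009distributed}), which runs in $O(\Delta)+\frac{1}{2}\log^* n$ rounds; orienting from higher color to lower color then gives radius at most $\Delta\le\Delta+1$, again with only an additive $O(1)$ overhead for the color-exchange round, which disappears into the stated bound.

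I do not expect any real obstacle here, since the entire content lies in the cited coloring results and in Proposition~\ref{prop:color2ori}. The one point worth stating explicitly is that ``computing an orientation'' in the \dlocal\ model means each vertex outputs, for every incident edge, whether it is incoming or outgoing; because the colors determine this and a vertex can collect all its neighbors' colors in a single round, the round complexity of the orientation algorithm equals that of the underlying coloring algorithm up to an additive constant, hence matches the bounds claimed in the corollary.
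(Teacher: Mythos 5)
Your proposal is correct and follows exactly the route the paper takes: invoke Proposition~\ref{prop:color2ori} to turn a $c$-coloring into an acyclic orientation of radius at most $c-1$, and plug in the $O(\Delta^2)$-coloring of Linial (\cite[Theorem~4.2]{linial}) for item~(1) and the $(\Delta+1)$-coloring of Barenboim--Elkin (\cite[Theorem~4.6]{barenboim2009distributed}) for item~(2). Your explicit note about the single extra round for neighbors to exchange colors is a correct (and slightly more careful) refinement that the paper leaves implicit, and it is absorbed into the stated asymptotic bounds, so there is no discrepancy.
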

%\begin{proof}
%  \item Barenboin Elkin: $\Delta+1$ colors in $\Delta + \log^* n$ rounds.
%  \item Linial: $\Delta^2$ colors in $\log^* n$ rounds.
%\end{proof}

\subsection{Lower Bound for \orad\ in the \dlocal\ Model}\label{sec:lb}
In this section we consider the problem of computing an acyclic orientation $H$ of a graph
$G$ with radius $\rad(H)$ that does not depend on the number of vertices
$n$ (it may depend on $\Delta$).
\begin{definition}
  Let $g:\NN \rightarrow \NN$ denote a function.  In the $\orad(g)$-problem, the input
  is a graph $G$ with maximum degree $\Delta$. The goal is to compute an orientation $H$ of
  $G$ with radius $\rad(H)\leq g(\Delta)$ (if such an orientation exists).
\end{definition}
\noindent \medskip
Our goal is to prove the following theorem.
\begin{theorem}\label{thm:lbdistobr}
For every function $g$, there is no \dlocal$[o(\log^* n)]$-algorithm that solves the $\orad(g)$-problem.
\end{theorem}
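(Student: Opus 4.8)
The plan is to reduce from the classical $\Omega(\log^* n)$ lower bound of Linial for proper coloring (equivalently, for computing a maximal independent set or a "locally decodable" symmetry-broken structure) on the $n$-cycle $C_n$. The key observation is that an acyclic orientation of $C_n$ with bounded radius is itself a strong symmetry-breaking object: a directed path in an oriented cycle has length at most $\rad(H)$, so on the cycle every maximal directed path has both a ``source'' (a vertex with two outgoing edges) and a ``sink'' (a vertex with two incoming edges), and consecutive sources are at distance at most $\rad(H)+1$ along the cycle. Since $\Delta = 2$ for a cycle, $g(\Delta) = g(2)$ is just some fixed constant, call it $t$. So a \dlocal$[r]$-algorithm for $\orad(g)$ on $C_n$ produces, in $r$ rounds, an orientation in which sources occur at least once in every window of $t+1$ consecutive vertices.

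First I would recall that on the cycle all vertices are locally indistinguishable except through their IDs, so a \dlocal$[r]$-algorithm is a function of the ordered list of $2r+1$ IDs seen in the $r$-ball around a vertex; this is exactly the setting of Linial's / the Ramsey-type argument. Then I would show that ``sources occur densely'' contradicts $r = o(\log^* n)$. The standard route: assume toward a contradiction a \dlocal$[r]$-algorithm $A$ with $r = o(\log^* n)$. Restrict attention to an ID assignment given by an increasing sequence $a_1 < a_2 < \cdots < a_n$ placed consecutively around $C_n$. The orientation of the edge between positions $j$ and $j+1$ is determined by the ID-window of radius $r$, i.e. by $(a_{j-r}, \dots, a_{j+1+r})$, which since the IDs are increasing is captured by its order type — but they are already sorted, so it is determined by \emph{which} $2r+2$ of the $n$ ID values these are. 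Color each $(2r+2)$-subset of $\{1,\dots,n\}$ (in increasing order) by the resulting edge-direction bit; by the Ramsey-theoretic argument underlying Linial's bound (Ramsey number for $2$ colors and sets of size $2r+2$ forces a monochromatic set of size $n'$, and one can take $n' \geq \log^{(2r+2)}(n)$ roughly, hence $n' \to \infty$ when $r = o(\log^* n)$), there is a large subset $S \subseteq \{1,\dots,n\}$, $|S| > t+2$, on which every edge gets the \emph{same} direction. Placing the elements of $S$ consecutively and increasingly around a cycle (padding the rest arbitrarily, or taking $n = |S|$) yields a stretch of $\geq t+1$ consecutive edges all oriented the same way — a directed path of length $> t = g(2) \geq \rad(H)$, a contradiction.

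The main obstacle — really the only delicate point — is bookkeeping the quantifiers: the theorem is ``for every $g$ there is no such algorithm,'' so $t = g(2)$ is fixed first, and then the Ramsey/iterated-log bound must be invoked to guarantee a monochromatic set of size exceeding $t+2$; this needs $r$ small enough that $\log^{(O(r))}(n) > t+2$ for infinitely many $n$, which is exactly what $r = o(\log^* n)$ buys (for every constant $c$, eventually $\log^{(c)}(n)$ still tends to infinity, and more carefully one uses that $o(\log^* n)$ means $r(n) \le \log^* n - \omega(1)$ so $\log^{(2r(n)+2)}(n) \to \infty$). A secondary point is handling the cyclic wrap-around and the fact that the algorithm also sees vertex \emph{degrees} and $n$ as input — but on $C_n$ every vertex has degree $2$, and $n$ is a global constant fixed in advance, so neither breaks the indistinguishability argument. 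I would also note that we may restrict to deterministic algorithms (the paper's algorithms are deterministic) and that the impossibility for deterministic algorithms is the natural statement here; if a randomized version is desired it follows by the same argument since a randomized \dlocal$[o(\log^* n)]$ algorithm would have to succeed on the monochromatic instance with positive probability, yet \emph{every} fixing of its coins is a deterministic algorithm to which the above applies. Finally I would remark that combined with Corollary~\ref{coro:dobr} this pins down the complexity of $\orad$ at $\Theta(\log^* n)$.
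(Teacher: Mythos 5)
Your route is genuinely different from the paper's. The paper proves Theorem~\ref{thm:lbdistobr} by a \emph{reduction to \mis}: assume a $\dlocal[r]$-algorithm produces an acyclic orientation $H$ of the $n$-ring with $\rad(H)\le g(2)$; then Lemma~\ref{lemma:simul} (the \clocal\ \mis\ algorithm obtained by DFS along the orientation has probe radius $r+\rad(H)$) together with Proposition~\ref{prop:simul} yields a $\dlocal[r+g(2)]$-algorithm for \mis\ on the ring, and since $g(2)$ is a constant this contradicts Linial's $\tfrac12\log^*n$ lower bound for \mis. You instead re-derive the hardness from first principles: you treat the edge-orientation on increasing-ID cycles as a $2$-coloring of $(2r+2)$-subsets and apply the hypergraph Ramsey bound directly to manufacture a long monochromatic (hence uni-directed) stretch. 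Both ultimately rest on the same tower-function phenomenon; the paper's version is shorter and reuses the machinery of Sections~4--5 as a black box, while yours is self-contained and exposes \emph{why} bounded-radius orientation is a symmetry-breaking primitive. Both are cycle-specific, since Linial's bound is.

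There is, however, a quantitative gap in your argument that you need to repair. You ask for a monochromatic set $S$ with $|S|>t+2$ (where $t=g(2)$) and claim this ``yields a stretch of $\ge t+1$ consecutive edges all oriented the same way.'' That is false once $r\ge 1$: when $S=\{s_1<\cdots<s_m\}$ is placed on $m$ consecutive cycle positions, the orientation of the edge $\{j,j+1\}$ is controlled by the Ramsey property only when its entire ID-window $(s_{j-r},\dots,s_{j+1+r})$ lies inside $S$, i.e.\ for $r+1\le j\le m-r-1$. That is only $m-2r-1$ controlled edges, so with $m=t+3$ you get $t+2-2r$ controlled edges, which is $\le t$ for every $r\ge 1$ --- no contradiction. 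You actually need $|S|\ge 2r+t+2$. Correspondingly the condition you must verify is $\log^{(2r+2)}(n)\gtrsim 2r+t+2$, not ``$\log^{(O(r))}(n)>t+2$''; this does hold when $r=o(\log^*n)$ (if $h(n)\triangleq\log^*n-2r(n)-2$ then $h(n)\sim\log^*n$ and $\log^{(2r+2)}(n)$ is at least a tower of height $\Omega(h(n))$, which dwarfs the linear term $2r+t$), but you must say so rather than argue only that $|S|\to\infty$. Also, the parenthetical ``or taking $n=|S|$'' does not work: the algorithm's behavior is allowed to depend on $n$, and shrinking the cycle to $|S|$ vertices changes the input $n$, invalidating the coloring of subsets you extracted. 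Either stick with padding (and the $|S|\ge 2r+t+2$ count), or handle the wrap-around at the seam explicitly, where you again lose about $2r$ edges near the seam. Finally, your remark on randomization (``every fixing of the coins is a deterministic algorithm'') is enough only if the algorithm is required to be always correct; for Monte-Carlo algorithms the paper's cited extension goes through Naor's randomized \mis\ lower bound, not a simple coin-fixing.
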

\begin{proof}
  The proof is based on a reduction from $\mis$ to $\orad$.
\begin{comment}
  Let $D$ denote a
  \dlocal$[r]$-algorithm that computes an acyclic orientation $H$ of $G$ in $r$
  rounds.
    Lemma~\ref{lemma:simul} and
    Proposition~\ref{prop:simul} imply that an \mis\ in $G$ can be computed in $r+\rad(H)$ rounds.
  Indeed, given an orientation, we can execute the \clocal-algorithm for \mis\ (see
  Corollary~\ref{coro:mis}) with probes that are limited to the reachability set of the
  query. A distributed simulation of the \mis\ \clocal-algorithm requires $r'$ rounds
  where $r'$ is the diameter of the reachability set, which is bounded by
  $r^{\max}_H$.
\end{comment}
  Let $G_n$ denote an undirected ring with $n$ vertices.  Let $g:\NN \rightarrow
  \NN$ be any function (e.g., Ackermann function).  Assume, for the sake of
  contradiction, that there exists a $\dlocal[r]$-algorithm that computes an acyclic
  orientation $H_n$ of $G_n$ with radius $\rad(H_n) \leq g(\Delta)$.  Then,
  by Lemma~\ref{lemma:simul} and
    Proposition~\ref{prop:simul} there is a \dlocal$[r+g(\Delta)]$-algorithm for \mis.

  If $r=o(\log^*n)$, then this contradicts the theorem of Linial~\cite{linial} that
  states that there is no \dlocal\ algorithm that computes an \mis\ over a ring in less
  than $\frac 12 \cdot\log^* n$ rounds.
\end{proof}
\begin{rem}
Theorem~\ref{thm:lbdistobr} can be extended to $g:\NN
\times \NN \rightarrow \NN$ that is a function of $\Delta$
and $n$. The dependency on $n$ can be at most $o(\log^*
n)$, while the dependency on $\Delta$ stays arbitrary.
\end{rem}
\begin{rem}
  Theorem~\ref{thm:lbdistobr} can be extended to randomized algorithms since the lower
  bound for \mis\ in~\cite{linial} holds also for randomized algorithms.
\end{rem}

\begin{comment}
  The following theorem follows from simulation of \clocal\ by \dlocal\ and
  Theorem~\ref{thm:lbdistobr}
  \begin{coro}
    There is no \clocal\ algorithm for \orad\ such that each query $q$ and its
    subsequent queries and probes are in the ball $B_{\ell}(q)$, where $\ell=o(\log^*
    n)$.
  \end{coro}
\end{comment}
\section{Discussion}

In this work we design centralized local algorithms
for several graph problems.
% : Maximal Independent Set, Maximal Matching, approximate
% Maximum Matching, and Vertex Coloring.
Our algorithms are deterministic, do not use any state-space, and the number of probes
(queries to the graph) is $\poly(\log^* n)$ where $n$ is the number of graph
vertices.\footnote{For approximate weighted matching, we require a constant ratio of
  maximum-to-minimum edge weight.}   Previously known algorithms for these problems make ${\rm polylog}(n)$
probes, use ${\rm polylog}(n)$ state-space, and have failure probability $1/\poly(n)$.
While a basic tool in previous works is (random) {\em vertex rankings\/}, our basic
(seemingly weaker) tool, is {\em acyclic graph orientations with bounded
  reachability\/}. That is, our algorithms use as a subroutine a local procedure that
orients the edges of the graph while ensuring an upper bound on the number of
vertices reachable from any vertex. To obtain such orientations we employ a local
{\em coloring algorithm\/} which uses techniques from local {\em distributed\/}
algorithms for coloring.

On the other hand, by using a technique of Nguyen and Onak~\cite{onak2008} that was introduced
for local computation in the context of sublinear approximation algorithms, we get a
new result in local distributed computing: A deterministic algorithm for
approximating a maximum matching to within $(1-\eps)$ that performs
$\Delta^{O(1/\eps)}+ O\left(\frac{1}{\eps^2}\right)\cdot \log^* n$ rounds where
$\Delta$ is the maximum degree in the graph.  This is the best known algorithm for
this problem for constant $\Delta$. The technique also extends to approximate maximum
weighted matching.

The probe complexity of any \clocal-algorithm $A$ is bounded by $\Delta^{\rad(A)}$,
where $\rad(A)$ denote the probe radius of $A$.  Employing the above bound on the
probe complexity of our \clocal-algorithms places the $\log^* n$ in the exponent.
Our analyses of the probe complexity in the \clocal-algorithms is slightly stronger
because it avoids having the $\log^* n$ in the exponent.

\bibliographystyle{alpha}
\bibliography{local}

\begin{thebibliography}{MRVX12}

\bibitem[ARVX12]{shaisoda2012}
Noga Alon, Ronitt Rubinfeld, Shai Vardi, and Ning Xie.
\newblock Space-efficient local computation algorithms.
\newblock In {\em SODA}, pages 1132--1139, 2012.

\bibitem[BE09]{barenboim2009distributed}
Leonid Barenboim and Michael Elkin.
\newblock {Distributed ($\Delta$+ 1)-coloring in linear (in $\Delta$) time}.
\newblock In {\em Proceedings of the 41st annual ACM symposium on Theory of
  computing}, pages 111--120. ACM, 2009.

\bibitem[CHW08]{czygrinow2008fast}
Andrzej Czygrinow, Michal Ha{\'n}{\'c}kowiak, and Wojciech Wawrzyniak.
\newblock Fast distributed approximations in planar graphs.
\newblock In {\em Distributed Computing}, pages 78--92. Springer, 2008.

\bibitem[CV86]{cole1986deterministic}
Richard Cole and Uzi Vishkin.
\newblock Deterministic coin tossing with applications to optimal parallel list
  ranking.
\newblock {\em Information and Control}, 70(1):32--53, 1986.

\bibitem[GPS88]{goldberg1988parallel}
Andrew~V Goldberg, Serge~A Plotkin, and Gregory~E Shannon.
\newblock Parallel symmetry-breaking in sparse graphs.
\newblock {\em SIAM Journal on Discrete Mathematics}, 1(4):434--446, 1988.

\bibitem[HK73]{hopcroft1973n}
John~E Hopcroft and Richard~M Karp.
\newblock An n\^{}5/2 algorithm for maximum matchings in bipartite graphs.
\newblock {\em SIAM Journal on computing}, 2(4):225--231, 1973.

\bibitem[HV06]{HV06}
Stefan Hougardy and Doratha~E. Vinkemeier.
\newblock Approximating weighted matchings in parallel.
\newblock {\em Inf. Process. Lett.}, 99(3):119--123, 2006.

\bibitem[Kuh09]{kuhn2009weak}
Fabian Kuhn.
\newblock Weak graph colorings: distributed algorithms and applications.
\newblock In {\em Proceedings of the twenty-first annual symposium on
  Parallelism in algorithms and architectures}, pages 138--144. ACM, 2009.

\bibitem[Lin92]{linial}
Nathan Linial.
\newblock Locality in distributed graph algorithms.
\newblock {\em SIAM Journal on Computing}, 21(1):193--201, 1992.

\bibitem[LPSP08]{DBLP:conf/spaa/LotkerPP08}
Zvi Lotker, Boaz Patt-Shamir, and Seth Pettie.
\newblock Improved distributed approximate matching.
\newblock In {\em SPAA}, pages 129--136, 2008.

\bibitem[LRY14]{ReutRonittAnak14}
Reut Levi, Ronitt Rubinfeld, and Anak Yodpinyanee.
\newblock Local computation algorithms for graphs of non-constant degrees.
\newblock unpublished manuscript, 2014.

\bibitem[LW08]{lenzen2008leveraging}
Christoph Lenzen and Roger Wattenhofer.
\newblock Leveraging {L}inial's locality limit.
\newblock In {\em Distributed Computing}, pages 394--407. Springer, 2008.

\bibitem[MRVX12]{shaiicalp2012}
Yishay Mansour, Aviad Rubinstein, Shai Vardi, and Ning Xie.
\newblock Converting online algorithms to local computation algorithms.
\newblock In {\em Automata, Languages, and Programming}, pages 653--664.
  Springer, 2012.

\bibitem[MV13]{shaiapprox2013}
Yishay Mansour and Shai Vardi.
\newblock A local computation approximation scheme to maximum matching.
\newblock In {\em APPROX-RANDOM}, pages 260--273, 2013.

\bibitem[NO08]{onak2008}
Huy~N Nguyen and Krzysztof Onak.
\newblock Constant-time approximation algorithms via local improvements.
\newblock In {\em Foundations of Computer Science, 2008. FOCS'08. IEEE 49th
  Annual IEEE Symposium on}, pages 327--336. IEEE, 2008.

\bibitem[ORRR12]{dana2012}
Krzysztof Onak, Dana Ron, Michal Rosen, and Ronitt Rubinfeld.
\newblock A near-optimal sublinear-time algorithm for approximating the minimum
  vertex cover size.
\newblock In {\em SODA}, pages 1123--1131, 2012.

\bibitem[Pel00]{pelegbook}
David Peleg.
\newblock {\em Distributed computing: a locality-sensitive approach}, volume~5.
\newblock SIAM, 2000.

\bibitem[PR01]{panconesi2001some}
Alessandro Panconesi and Romeo Rizzi.
\newblock Some simple distributed algorithms for sparse networks.
\newblock {\em Distributed computing}, 14(2):97--100, 2001.

\bibitem[PR07]{parnasron}
Michal Parnas and Dana Ron.
\newblock Approximating the minimum vertex cover in sublinear time and a
  connection to distributed algorithms.
\newblock {\em Theoretical Computer Science}, 381(1):183--196, 2007.

\bibitem[PS04]{DBLP:journals/ipl/PettieS04}
Seth Pettie and Peter Sanders.
\newblock A simpler linear time 2/3-epsilon approximation for maximum weight
  matching.
\newblock {\em Inf. Process. Lett.}, 91(6):271--276, 2004.

\bibitem[PS10]{panconesi2010fast}
Alessandro Panconesi and Mauro Sozio.
\newblock Fast primal-dual distributed algorithms for scheduling and matching
  problems.
\newblock {\em Distributed Computing}, 22(4):269--283, 2010.

\bibitem[RTVX11]{shaiics2011}
Ronitt Rubinfeld, Gil Tamir, Shai Vardi, and Ning Xie.
\newblock Fast local computation algorithms.
\newblock In {\em ICS}, pages 223--238, 2011.

\bibitem[RV14]{DBLP:journals/corr/ReingoldV14}
Omer Reingold and Shai Vardi.
\newblock New techniques and tighter bounds for local computation algorithms.
\newblock {\em CoRR}, abs/1404.5398, 2014.

\bibitem[Suo13]{dsurvey}
Jukka Suomela.
\newblock Survey of local algorithms.
\newblock {\em ACM Comput. Surv.}, 45(2):24:1--24:40, March 2013.

\bibitem[YYI12]{yoshida2009improved}
Yuichi Yoshida, Masaki Yamamoto, and Hiro Ito.
\newblock Improved constant-time approximation algorithms for maximum matchings
  and other optimization problems.
\newblock {\em SIAM J. Comput.}, 41(4):1074--1093, 2012.

\end{thebibliography}

\end{document}